\newtheorem{Lmm}{Lemma}
\newtheorem{Thm}{Theorem}
\newtheorem{Dfn}{Definition}
\newtheorem{Crl}{Corollary}
\begin{document}

\title{Advantage of the key relay protocol over secure network coding}

\author{Go Kato, Mikio Fujiwara, and Toyohiro Tsurumaru%
\thanks{
Go Kato and
Mikio Fujiwara are with NICT, Nukui-kita, Koganei, Tokyo 184-8795, Japan (e-mail: go.kato@nict.go.jp, fujiwara@nict.go.jp).
Toyohiro Tsurumaru is with Mitsubishi Electric Corporation, Information Technology R\&D Center,
5-1-1 Ofuna, Kamakura-shi, Kanagawa, 247-8501, Japan (e-mail: Tsurumaru.Toyohiro@da.MitsubishiElectric.co.jp).
}
}

\maketitle

\begin{abstract}
The key relay protocol (KRP) plays an important role in improving the performance and the security of quantum key distribution (QKD) networks.
On the other hand, there is also an existing research field called secure network coding (SNC), which has similar goal and structure.
We here analyze differences and similarities between the KRP and SNC rigorously.
We found, rather surprisingly, that there is a definite gap in security between the KRP and SNC;
that is, certain KRPs achieve better security than any SNC schemes on the same graph.
We also found that this gap can be closed if we generalize the notion of SNC by adding free public channels;
that is, KRPs are equivalent to SNC schemes augmented with free public channels.
\end{abstract}

\section{Introduction}

The key relay protocol (KRP)  plays an important role in improving the performance and the security of quantum key distribution (QKD) networks \cite{SPDALL10,ALLEAUME201462,10.1007/978-3-540-85093-9_4,itut_y3800}.
On the other hand, there exists another research field called secure network coding (SNC; see, e.g., Refs. \cite{1023595,5592818}), which has the goal and structure similar to the KRP.
The goal of this paper is to analyze differences and similarities between the KRP and SNC rigorously.

QKD realizes distribution of secret keys to players at distant locations (see, e.g., Refs. \cite{RevModPhys.74.145,RevModPhys.92.025002}).
However, the communication distance achievable by a single QKD link is limited by the technological level of quantum optics \cite{RevModPhys.92.025002}.
KRPs are used to enable key distribution beyond such limitation of a single QKD link.
The basic idea of the KRP is to pass a secret key of one QKD link on to another QKD link with the help of insecure public channels, such as the internet (cf. Figs. \ref{fig:key_relay_example1} and \ref{fig:key_relay_example2}).

The KRP has similarities and differences with SNC (Table \ref{table:differences_similarity_KRs_SNCs}).
While they share the same goal of sharing secret messages, they differ in that
1) Public channels are available in KRPs, but not in SNC schemes, 
2) KRPs use QKD links (or more generally, local key sources) while SNC schemes use secret channels, and
3) The messages in KRPs must be a random bit, while in SNC schemes each sender can freely choose its message.

Then the question naturally arises whether these differences are really essential.
For example, is it not possible that there is actually a way of converting KRPs to SNC schemes, and that they are shown to be equivalent?
The goal of this paper is to answer to this question.
For the sake of simplicity, we will limit ourselves to the one-shot scenario.

The outline of our results is as follows (Fig. \ref{fig:inclusion_relation}).

If we generalize SNC \cite{1023595} by adding public channels (see the third column of Table \ref{table:differences_similarity_KRs_SNCs}), then KRPs and SNC schemes (with public channels) on the same graph are always equivalent (Theorem \ref{thm:SNC_and_KR}).

On the other hand, if we do not generalize SNC and limit ourselves to its conventional form (without public channels; see the second column of Table \ref{table:differences_similarity_KRs_SNCs}), then there is a definite gap in security between the KRP and SNC:
On some graphs a KRP achieves better security than any conventional SNC schemes (Theorem \ref{crl:SNCs_and_KRs_are_not_equivalent} and Corollary \ref{tmp:crl:SNCs_and_KRs_are_not_equivalent}).
Hence the accumulation of past research on the conventional SNC is not sufficient to explore the potential of KRPs.
This suggests that the KRP is a new research field.

\begin{table*}[htb]
\label{table:differences_similarity_KRs_SNCs}
\caption{Similarities and differences between the KRP, the conventional SNC, SNC with public channels, and KRP-by-SNC.}
\begin{center}
\begin{tabular}{c|cccc}
&\multirow{2}{*}{Key relay protocol}
 & Secure network coding (SNC) & &  \\
&\multirow{2}{*}{(KRP)}& without public channels & SNC with public channels & KRP-by-SNC\\
 &  & (Conventional SNC)& & \\
\hline
\multirow{2}{*}{Public channels} & \multirow{2}{*}{\checkmark} & & \multirow{2}{*}{\checkmark} & \\
&&&&\\
Local key sources & \multirow{2}{*}{\checkmark}&  & & \\ 
(e.g., QKD links) &&\\
\multirow{2}{*}{Secret channels} & & \multirow{2}{*}{\checkmark} & \multirow{2}{*}{\checkmark} & \multirow{2}{*}{\checkmark}\\
&&&& \\
\hline
\multirow{2}{*}{Goal}&\multicolumn{4}{c}{\multirow{2}{*}{Each sender-receiver pair (or each user pair) $i$ share a secret message}}\\
&&&& 
\\
\hline
\multirow{2}{*}{Content of the message }& \multirow{2}{*}{Random bit $k_i$} & Message $m_i$ chosen & Message $m_i$ chosen  & \multirow{2}{*}{Random bit $k_i$}\\
& & by the sender & by the sender & 
\end{tabular}
\end{center}
\end{table*}

\begin{figure}[htbp]
\begin{center}
\includegraphics[bb=0 0 960 440, width=\linewidth, clip]{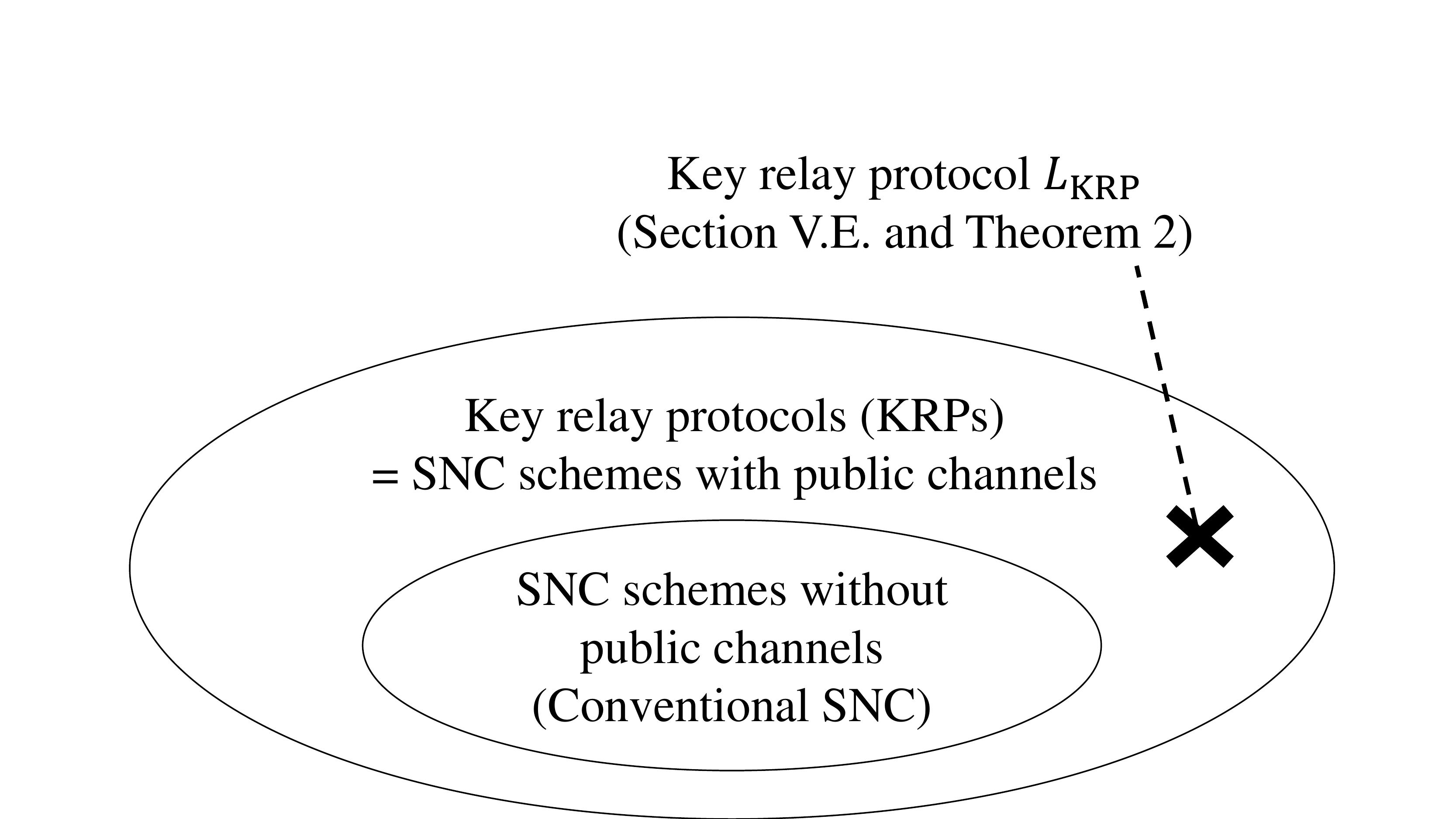}
\end{center}
 \caption{
Relation of secure network coding (SNC) with and without public channels, and the key relay protocol (KRP).
The settings and the goals of SNC and the KRP are summarized in Table \ref{table:differences_similarity_KRs_SNCs}.
The KRP and SNC with publich channels always achieve the same level of security (Theorem \ref{thm:SNC_and_KR}).
The security of KRP is better than that of the conventional SNC or SNC without public channels (Theorem \ref{crl:SNCs_and_KRs_are_not_equivalent}).
}
 \label{fig:inclusion_relation}
\end{figure}

\section{Key relay protocol (KRP)}

\subsection{Motivation and examples of the KRP}

Quantum key distribution (QKD) distributes secret keys to two separate players.
However, the communication distance achievable by a single set of QKD devices, or a {\it QKD link}, is limited by the technological level of quantum optics, and is currently in the order of 100 km \cite{RevModPhys.92.025002}.
For this reason, in this paper, we refer to a QKD link also as a {\it local key source}. 

There is of course a strong demand to distribute secret keys globally, or beyond the reach of a single QKD link.
The {\it key relay protocol} (KRP) \cite{SPDALL10,ALLEAUME201462,10.1007/978-3-540-85093-9_4} aims to fulfill this demand by connecting multiple QKD links, and also by using insecure public channels, such as the internet.

Fig. \ref{fig:key_relay_example1} illustrates the simplest example of such KRPs.
Users $u^1$ and $u^2$ are separated by twice the reach of a local key source, and are connected by two local key sources $LKS_{e_1}$ and $LKS_{e_2}$.
From these local key sources, users $u^1$ and $u^2$ receive distinct local keys $r_{e_1}, r_{e_2}\in_{\rm R}\{0,1\}$ respectively.
Then, in order for both $u^1$ and $u^2$ to share the same key $k^1=k^2$, which we call the {\it relayed key}, 
they execute the following procedure with the help of the midpoint $v$:
\begin{enumerate}
\item The midpoint $v$ announces the difference of the two local keys,  $\Delta r=r_{e_1}+r_{e_2}$.
\item Users $u^1$ and $u^2$ calculate the relayed keys $k^1=r_{e_1}$ and  $k^2=r_{e_2}+\Delta r$, respectively.
\end{enumerate}
Note that $k^1=k^2$ is indeed satisfied.
Note also that $k_i$ remain secret even if the announcement $\Delta r$ is revealed.

This idea can be generalized to more complex network configurations.
For example, one can improve the distance by serially extending the above construction (Fig. \ref{fig:key_relay_example2}(a)), or can improve the security by extending it in parallel (Fig. \ref{fig:key_relay_example2}(b)).
In the next subsection, we will give a formal definition of KRPs, applicable to an arbitrary network configuration.

\begin{figure}[htbp]
\begin{center}
 \includegraphics[bb=0 0 950 370, width=\linewidth, clip]{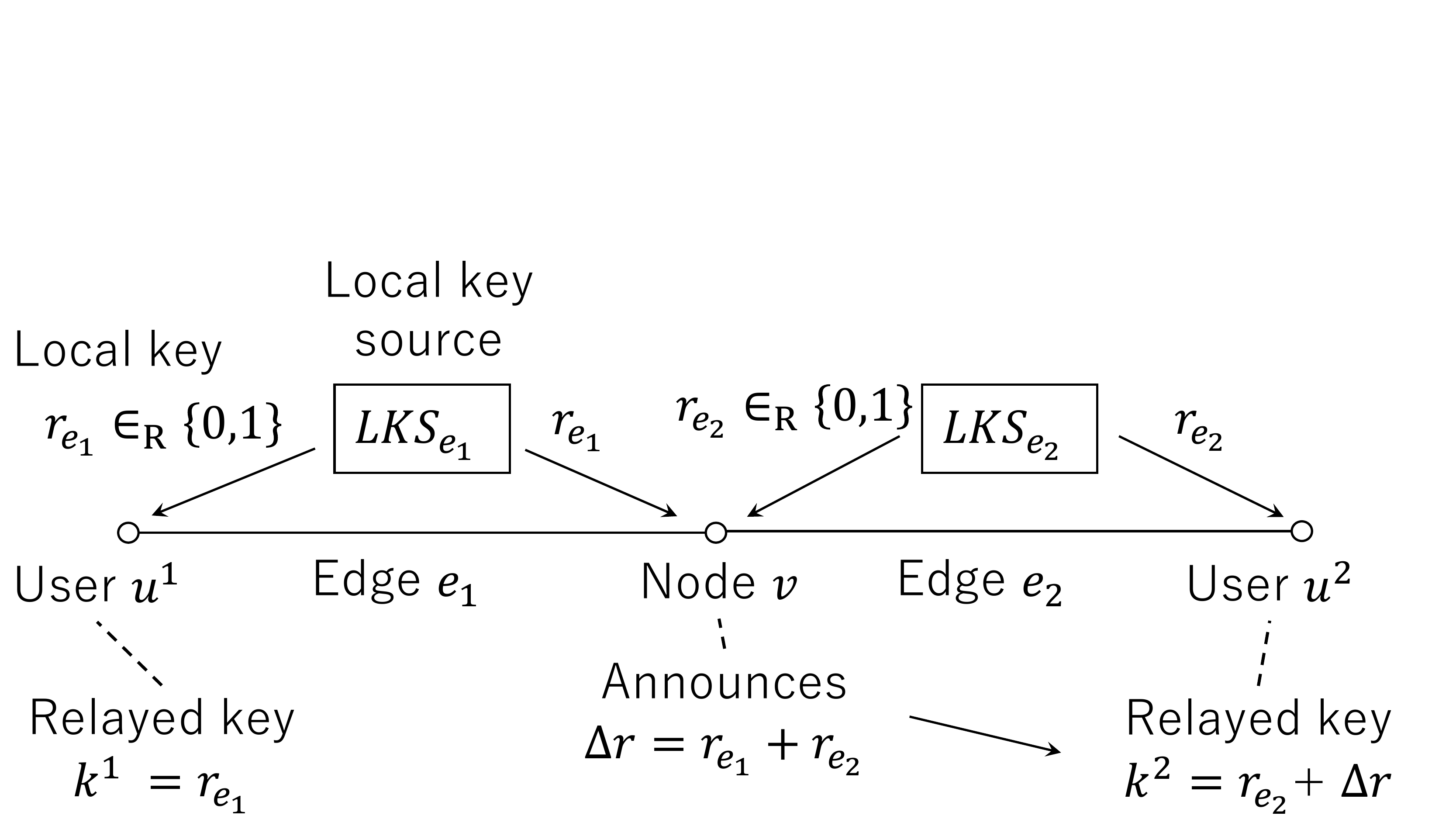}
\end{center}
 \caption{
The simplest example of the KRP.
On each edge $e_i$ there is a local key source $LKS_{e_i}$ which distributes a random bit $r_{e_i}\in_{\rm R}\{0,1\}$ to both ends.
Each node can also use public channels freely.
User pair $u^1,u^2$ wishes to share a relayed key $k=(k^1,k^2)$.
To this end, the midpoint $v$ announces $\Delta r=r_{e_1}+r_{e_2}$, and then user $u^1$ and $u^2$ each calculate $k^1=r_{e_1}$ and $k^2=r_{e_2}+\Delta r$.
}
 \label{fig:key_relay_example1}
\end{figure}

\begin{figure}[htbp]
\begin{center}
\includegraphics[bb=0 0 950 550, width=\linewidth, clip]{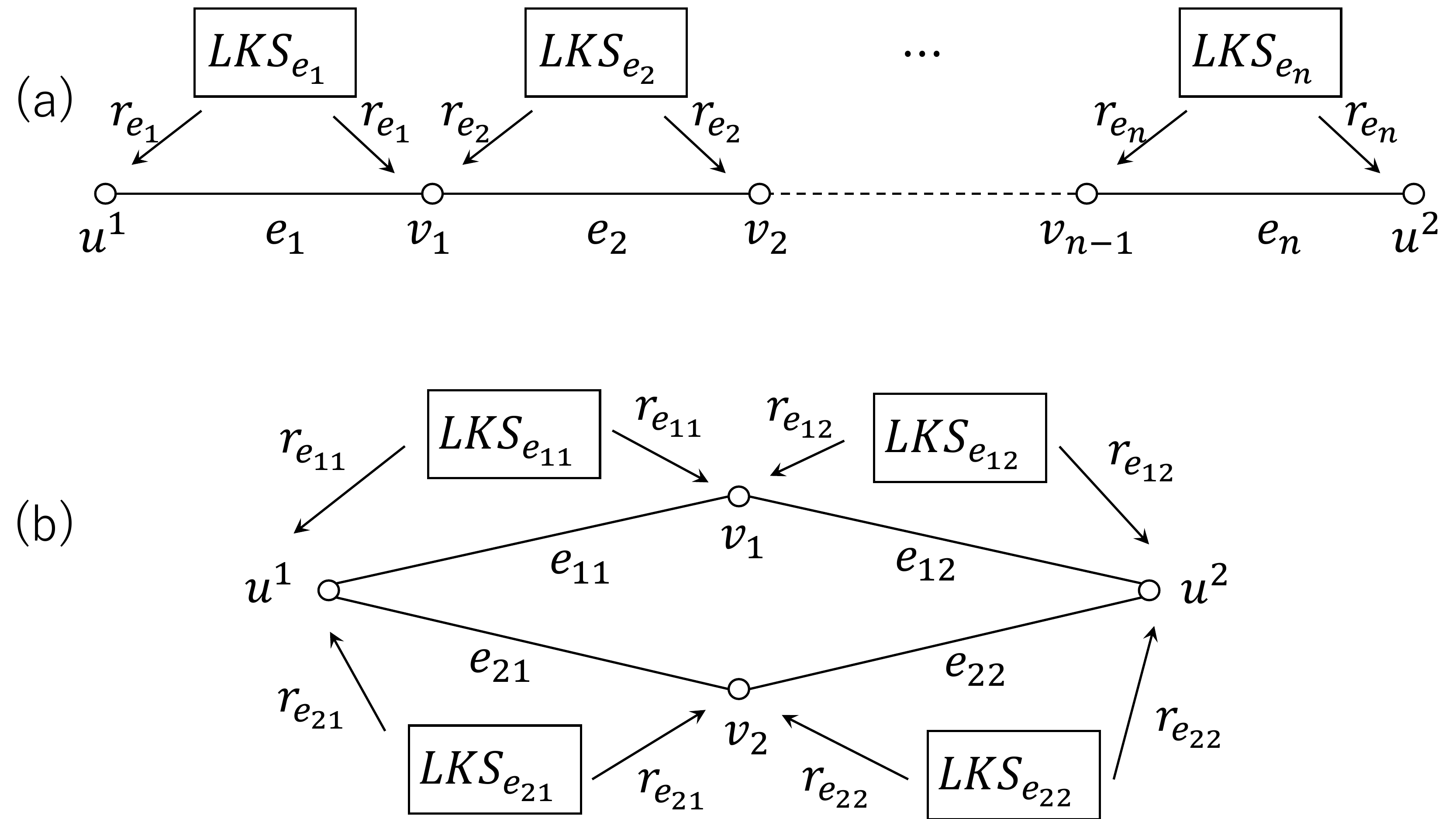}
\end{center}
 \caption{
Somewhat complex examples of the KRP.
(a) Serialization of Fig. \ref{fig:key_relay_example1}.
Nodes $v_i$ each announce $\Delta r_i=r_i+r_{i+1}$, and then users $u^1$ and  $u^2$ calculate relayed keys $k^1=r_{e_1}$ and $k^2=r_n+\sum_{i=1}^{n-1}\Delta r_i$ respectively.
(b) A parallelization of Fig. \ref{fig:key_relay_example1}.
Nodes $v_i$ each announce $\Delta r_i=r_{e_{i1}}+r_{e_{i2}}$, and then users $u^1,u^2$ each calculate $k^1=r_{e_{11}}+r_{e_{21}}$, $k^2=\sum_{i=1,2}(r_{e_{2i}}+\Delta r_i)$.
Note that the relayed key $k=(k^1,k^2)$ remains secret here even if someone takes over an edge set $E_i=\{e_{i1},e_{i2}\}$ ($i=1$ or 2) and leaks  local keys $r_{e_{i1}},r_{e_{i2}}$.
In this sense we regard this construction more secure than that of Fig. \ref{fig:key_relay_example1}.
}
 \label{fig:key_relay_example2}
\end{figure}

\subsection{Formal definition of the KRP}
\label{sec:formal_definition_of_KRP}

The outline is that:
On an undirected graph $G=(V,E)$, pairs of users wish to share a relayed key with the help of other players on nodes $V$ having access to local key sources and a public channels, without disseminating the message to the adversary.

\subsubsection{Setting}
\label{sec:setting_KR}
An undirected graph $G=(V,E)$ consists of a node set $V$ and an edge set $E$.
For the sake of simplicity, we assume that $G$ are connected.
Each node $v\in V$ has an individual player (denoted by the same symbol as the node), some of which constitute $n_{\rm pair}$ pairs of users $u_i=(u^1_i,u^2_i)$ with $i=1,\dots,n_{\rm pair}$.
There is also an adversary, who can wiretap some edges $\subset E$.

Each edge $e\in E$ has a local key source $LKS_e$ and a public channel $PC_e$, which behave as follows.

\begin{Dfn}[Local key sources and public channels]
\label{dfn:random_source}
$LKS_e$ and $PC_e$ operate as follows:
\begin{itemize}
\item Local key source $LKS_e$ (Fig. \ref{fig:random_source} (a)): On input ``start'' command from an end node $v$ or $w$, it sends a local key, or a uniformly random bit  $r_{e}\in_{\rm R}\{0,1\}$ to both $v$ and $w$.
When edge $e$ is wiretapped, it also sends $r_e$ to the eavesdropper.
\item Public channel $PC_e$ (Fig. \ref{fig:random_source} (b)): On input a bit string $p_e\in\{0,1\}^*$ from an end node (say, $v$), it sends $p_e$ to the other end node (say, $w$) and to the adversary.
\end{itemize}
\end{Dfn}

\begin{figure}[htbp]
\begin{center}
\includegraphics[width=\linewidth, clip]{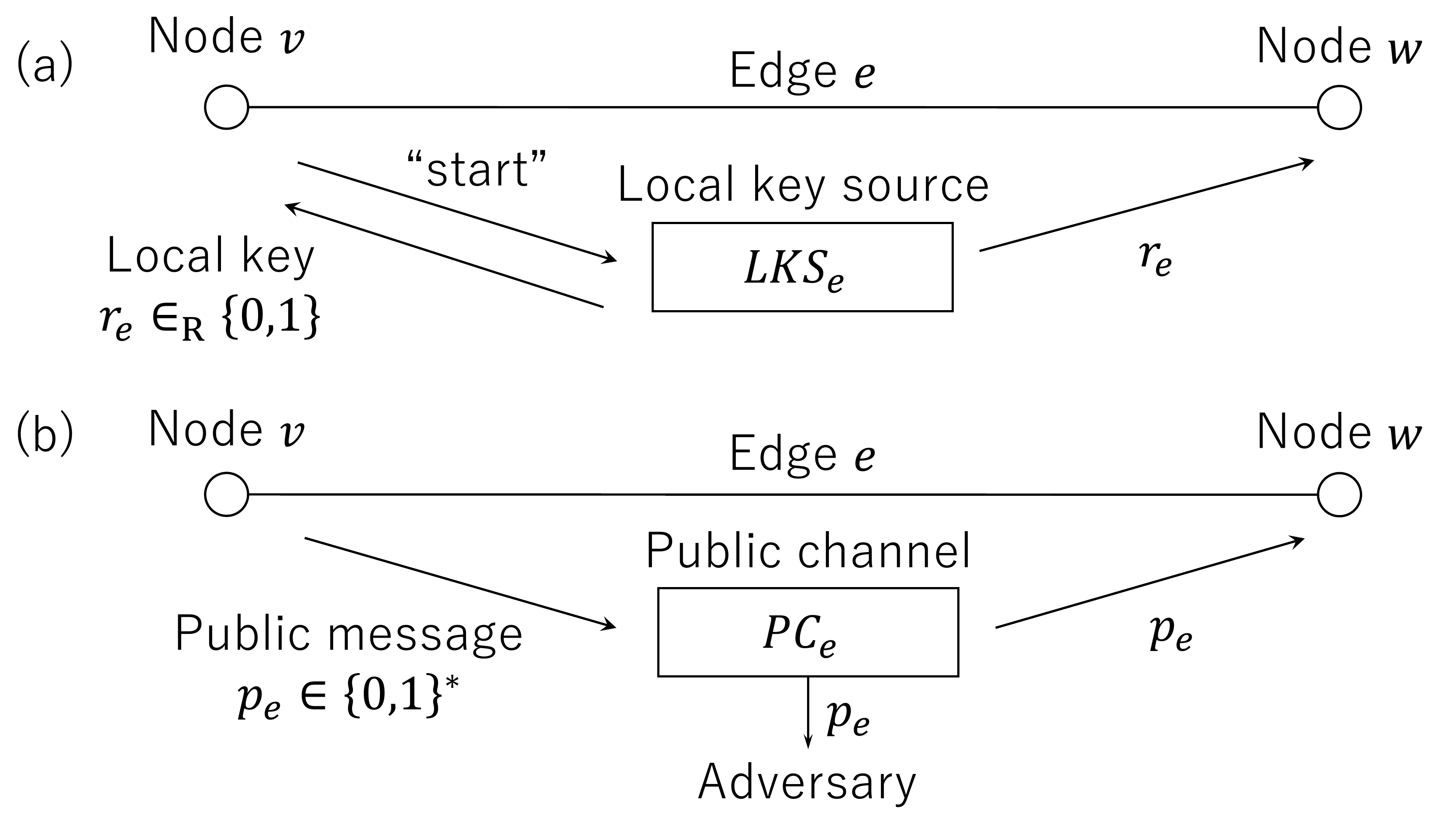}
\end{center}
 \caption{(a) Behavior of local key source $LKS_e$ in the absence of the adversary, on edge $e$ having end nodes $v,w$, (b) public channel $PC_e$ on the same edge.}
 \label{fig:random_source}
\end{figure}

\subsubsection{Key relay protocol}
\label{sec:construction_key_relays}

With the above setting, each user pair $u_i=(u^1_i,u^2_i)$ wish to share a relayed key $k_i=(k^1_i,k^2_i)$ with the help of players $V$, without disseminating $k_i$ to the adversary.
To this end, they request all nodes $V$ to execute a procedure of the
the following type.

\begin{Dfn}
\label{dfn:key_relay}
A protocol $L$ of the following type, performed by players $V$, is called a key relay protocol (KRP).
\begin{enumerate}
\item All players $V$ communicate using public channels $PC_e$ and local key sources $LKS_e$\footnote{More precisely, the outputs ($p_e$, $r_e$, or ``start'') of players $V$ are defined as functions of previously received data ($\subset\{p_e, r_e| e\in E\}$) and of random variables generated by the player.
Each player sends out the outputs whenever necessary data are all received.}.

Here each $LKS_e$ can only be used once, while $PC_e$ can be used arbitrarily many times.
\item Each user $u_i^j$ calculates a relayed key $k_i^j$.
\end{enumerate}
\end{Dfn}

\subsubsection{Security criteria}
\label{sec:security_criteria}
There is a known collection ${\cal E}^{\rm adv}=\{E^{\rm adv}_{1}, E^{\rm adv}_{2},\dots\}$ of edge set $E^{\rm adv}_{i} \subset E$ which the adversary can wiretap on.
In each round of the protocol, the adversary chooses $E^{\rm adv}_{l}\in {\cal E}^{\rm adv}$ and wiretaps edges $e\in E^{\rm adv}_{l}$.

\begin{Dfn}[Security of the KRP]
\label{dfn:security_criteria}
A key relay protocol $L$ is secure against ${\cal E}^{\rm adv}$, if it satisfies the followings.
\begin{itemize}
\item Soundness: 
The relayed keys $k_i^1,k_i^2$ generated by user pair $u_i=(u^1_i,u^2_i)$ are equal and uniformly distributed; i.e., $\Pr[K_i^1=K_i^2]=1$, and $\Pr[K_i^j=0]=\Pr[K_i^j=1]=1/2$.

Also, $k_i^j$ generated by different user pairs are independent.

\item Secrecy: The relayed key pairs $k_i=(k_i^1,k_i^2)$ are unknown to the adversary even when any edge set $E_l^{\rm adv}\in {\cal E}^{\rm adv}$ is wiretapped.
That is, for any $l$, we have
\begin{equation}
I(K_1,K_2,\dots,K_{\rm n_{\rm pair}}: A(E_l^{\rm adv}))=0,
\end{equation}
where $A(E_l^{\rm adv})$ denotes the information that the adversary obtains by eavesdropping on edge set $E_l^{\rm adv}$.
\end{itemize}
\end{Dfn}
$A(E_l^{\rm adv})$ appearing in Definition \ref {dfn:security_criteria} consists of local keys $r_e$ on edges $e\in E^{\rm adv}_{l}$, and of all public information $p_e$ ($e\in E$).


\subsection{Notes on KRPs used in practical QKD networks}

In fact, the KRP defined above is slightly different from those used in actual QKD networks.
Below we elaborate on their relation.

\subsubsection{Edge adversary model vs. node adversary model}
In  the above definition, we employed the edge adversary model (where the adversary eavesdrop on some edges), while in actual QKD networks the node adversary model (where the adversary can eavesdrop on information that goes in and out of a certain edges set) is usually assumed.
This is not really a limitation, since the former model incorporates the latter: The situation where ``the adversary eavesdrop on a node $v$'' in the node adversary model can always be described as ``all edges surrounding $v$ are wiretapped'' in the edge adversary model.

\subsubsection{Passive adversary vs active adversary}
Above we assumed that the adversary is passive (honest but curious), meaning that she eavesdrops on, but does not tamper with communication.
On the other hand, in QKD, one usually assumes that the adversary  is active; i.e., she can both eavesdrop on and tamper with communication.

The easiest way to convince oneself of this limitation, of course, is to accept it merely as a simplification introduced at the first step of continuing research.

On the other hand, there are also ways of justifying this limitation to some extent.
That is, if the adversary is active, 
the following two problems arise,
\begin{itemize}
\item Problem with soundness: The relayed keys may not match, $\Pr[k_i^1\ne k_i^2]>0$.
\item Problem with secrecy:
Players $V$ may malfunction and leak extra information to the adversary, damaging the secrecy.
\end{itemize}
but, in practical QKD networks, there are ways to solve or work around both these problems.

\paragraph{How to work around the problem with soundness}
The basic idea here is the following.
The relayed keys $k_i=(k_i^1, k_i^2)$ are random bits and are not meaningful by themselves, and thus can be discarded at any time.
Hence, even if the event $k_i^1\ne k_i^2$ occurs, players can discard $k_i^1,k_i^2$ and repeat new rounds the KRP (including QKD as local key sources) until they obtain $k_i^1,k_i^2$ satisfying $k_i^1=k_i^2$.
This can generally decrease the key generation speed, but the secrecy remains intact. 

Of course, in order for the above idea to actually function in practice, user pairs $u_i$ must be able to detect an error  (check if $k_i^1=k_i^2$ or not) with a sufficiently small failure probability.
This is also realizable by using information-theoretically secure message authentication codes (see, e.g., Section 4.6 of Ref. \cite{KatzLindell}).

Combining these ideas, we obtain the following method.
\begin{enumerate}
\item User pairs $u_i=(u^1_i,u^2_i)$ repeat a KRP $n$ times and share $n$-bit relayed keys $\vec{k_i^1},\vec{k_i^2}\in\{0,1\}^n$.
\item User $u_i^1$ calculates the hash value $\sigma_i=h(\vec{k_i^1})$ of $\vec{k_i^1}$ using an $\varepsilon$-difference universal hash function $h$ \cite{KatzLindell}.
User $u_i^1$ then encrypts $\sigma_i$ by the one-time pad scheme (see, e.g., Ref. \cite{KatzLindell}) and sends it to $u_i^2$.
(In fact, this entire step corresponds to authenticating message $\vec{k}^1_i$ using Construction 4.24 of Ref. \cite{KatzLindell}.)
\item User $u_i^2$ decrypts the received ciphertext to obtain $\sigma_i$.
If $\sigma_i\ne h(\vec{k_i^2})$, $u_i^2$ announces that the relayed keys $\vec{k_i^1},\vec{k_i^2}$ must be discarded.
(Here, $u_i^2$ authenticates his announcement by again using Construction 4.24 of Ref. \cite{KatzLindell}.)
\end{enumerate}
In this method, steps 2 and 3 each consume a pre-shared key\footnote{The security proofs of QKD require that its public communication be authenticated.
A customary way to fulfill this requirement in practical QKD systems is that each user pair always keeps sharing a relatively small amount of secret key (pre-shared key), and uses it to authenticate their public communication, e.g., by the methods given in Ref. \cite{WEGMAN1981265} and in Section 4.6, Ref. \cite{KatzLindell}.
Here we use those pre-shared keys also for KRPs.} of a length proportional to $|\sigma_i|$, the length of $\sigma_i$.
However, one can set $|\sigma_i|$ negligibly small compared with $n$, with an appropriate choice of the function $h$ and for sufficiently large $n$.
Thus the net relayed key obtained by this method almost equals $n$.
For example, by using a polynomial-based $\varepsilon$-difference universal hash function, we have $|\sigma_i|=O(\varepsilon^{-1}\log n)$ with $\varepsilon$ being the failure probability of the error detection.

\paragraph{Countermeasure against problem with secrecy}
As for the problem with secrecy, one countermeasure is to restrict ourselves with {\it linear} KRPs.

Here a linear KRP means the one where players $V$ are {\it linear}.
A player $v\in V$ being linear means that its outputs $p_e,r_e$ are all linear functions of previously received data ($\subset\{p_e, r_e| e\in E\}$) and of random variables generated by the player.
In such restricted case we can prove the following lemma.
\begin{Lmm}
\label{lmm:linear_case}
If a linear KRP is secure against passive (i.e., honest but curious) adversaries, it is also secure against active adversaries.
\end{Lmm}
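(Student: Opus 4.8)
\emph{Proof plan.} I would first pin down what the statement can mean. An active adversary can always break \emph{soundness}, e.g.\ by flipping one delivered bit so that $k_i^1\neq k_i^2$; this is exactly the ``problem with soundness'' disposed of above by the discard-and-repeat mechanism, so it cannot be what the lemma asserts. The substantive content is therefore the \emph{secrecy} clause of Definition \ref{dfn:security_criteria}: assuming the relayed keys are hidden from the passive adversary, I must show that each relayed key value stays uniformly random given the \emph{active} adversary's entire view. The guiding intuition is the standard one for linear schemes: since every honest player applies an $\FF_2$-linear map, any tampering merely superimposes on the honest transcript an offset that the adversary can recompute from the bits she herself injected, so tampering reveals nothing she does not already know.

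To make this precise I would couple the honest (passive) run and the tampered (active) run on the same underlying randomness: the vector $\vec r$ of all local keys, the vector $\vec s$ of all private random bits of the honest players, and the adversary's private coins. In the passive run every public message and every relayed key is, by the linearity hypothesis, a fixed $\FF_2$-linear function of $(\vec r,\vec s)$. I model an active attack as additive error injection: on each controlled edge $e\in E_l^{\rm adv}$ the receiving endpoint obtains the honest value plus an adversarial error bit, and I collect all these bits into a vector $\delta$.

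The key step is to push $\delta$ through the protocol by induction on the order in which messages are sent. A message emitted at a given step is a linear function of earlier-received data, each of which is, inductively, its honest value plus a linear function of the errors injected so far; hence the emitted message is its honest value plus a linear function of $\delta$. Consequently the whole active transcript equals the passive transcript plus $L(\delta)$, and each user's computed key equals the honest key value of its pair plus a further linear image of $\delta$, for fixed public linear maps. Two corollaries fall out of the same induction. First, since the adversary knows $\delta$ and the public maps, her active view $A_{\rm act}$ and the pair $(A_{\rm pas},\delta)$ determine one another: she subtracts $L(\delta)$, and conversely. Second, $\delta$ is itself a function of $A_{\rm pas}$ together with her independent coins, because each injected bit is chosen from her view-so-far, which is in turn a function of the passive view-so-far and the earlier errors.

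Writing $\vec K=(k_1,\dots,k_{n_{\rm pair}})$ for the honest relayed key values, one per pair, the coins' independence of $(\vec r,\vec s)$ gives $\delta\perp \vec K\mid A_{\rm pas}$, and together with the passive-secrecy hypothesis and soundness (uniformity of $\vec K$) the conditional law of $\vec K$ given any value of $(A_{\rm pas},\delta)$ is uniform. Since each active user-key equals the corresponding entry of $\vec K$ plus a now-constant linear image of $\delta$, it too is uniform conditioned on $(A_{\rm pas},\delta)\equiv A_{\rm act}$; hence the relayed keys are independent of the active view, which is the desired secrecy. I expect the main obstacle to be interpretive rather than computational, and twofold: handling the \emph{adaptive} choice of $\delta$, where the two inductive claims — that the transcript depends on $\delta$ only through fixed linear maps, and that $\delta$ is a function of the passive view and independent coins — must be established simultaneously; and pinning down the right secrecy target, since the adversary does learn the self-inflicted difference $k_i^1-k_i^2$, so one must argue that this is a soundness effect, already handled separately, while each key \emph{value} nevertheless remains perfectly concealed.
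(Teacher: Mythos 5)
Your proposal is correct and follows essentially the same route as the paper, which only sketches the argument (deferring to Theorem 1 of Ref.~\cite{HOKC20}): linearity of the players means any tampering shifts the transcript and the computed keys by a predictable linear image of the injected errors, so the active view is equivalent to the passive view plus data the adversary already knows. Your write-up merely makes explicit what the paper leaves implicit — the coupling of the two runs, the induction handling adaptive error injection, and the observation that the leaked difference $k_i^1\oplus k_i^2$ is a soundness artifact rather than a secrecy loss.
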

This lemma is a variant of Theorem 1, Ref. \cite{HOKC20}, which was previously obtained for the secure network coding (SNC).
As the proof is essentially the same as in Ref. \cite{HOKC20}, we here only give a sketch:
Suppose for example that the active adversary modifies a local key $r_{e'}$ to $r_{e'}+\Delta r$, which is to be input to a node $v$.
With $v$ being linear, $v$'s subsequent outputs all change linearly in $\Delta r$; for example, a public message $p_e$, which $v$ outputs, changes to $p_e+f(\Delta r)$ with $f$ being a linear function.
Since those linear response to tampering, such as $f(\Delta r)$, are all predictable, we can conclude that the adversary gains nothing by tampering with communication.

\section{Main results: Relation between the KRP and secure network coding (SNC)}
\label{sec:main_result}

As readers familiar with secure network coding (SNC; see, e.g., Refs. \cite{1023595,5592818}) may have already noticed, the KRP defined in the previous section have similarities and differences with SNC (Table \ref{table:differences_similarity_KRs_SNCs}).
That is, while they both share the same goal that each sender-receiver pair (or each user pair) share a secret message, they differ in that
\begin{enumerate}
\item Public channels $PC_e$ are used in the KRP, but not in SNC.
\item The KRP uses local key sources $LKS_e$, while SNC uses secret channels.
\item In SNC, the sender can choose the massage freely.
However, in the KRP, the message (which we called the relayed key $k_i$ in the previous section) must be uniformly random, and thus the sender does not have freedom to choose it.
\end{enumerate}
From this observation the question naturally arises whether these differences are really essential.
For example, is it not possible that there is actually a way of converting KRPs to SNC schemes, and that they are shown to be equivalent?
In this section we answer to this question.
The outline of our results is as follows.

First, if we eliminate difference 1) above by hand, that is, if we generalize SNC \cite{1023595} by adding public channels, then we can simultaneously resolve the remaining differences, items 2) and 3), as well.
As a result of this, we can show that the generalized form of SNC (i.e., SNC with public channels, in the third column of Table \ref{table:differences_similarity_KRs_SNCs}) and the KRP are equivalent (Theorem \ref{thm:SNC_and_KR}). 

On the other hand, if we do not generalize SNC and limit ourselves with its conventional form (the second column of Table \ref{table:differences_similarity_KRs_SNCs}), then there is a definite gap in security between SNC and the KRP:
There are situations where KRPs achieve better securities than the conventional SNC schemes, without public channels (Theorem \ref{crl:SNCs_and_KRs_are_not_equivalent}).

\subsection{Definition of SNC with public channels}
\label{sec:SNC_with_public_channels}

We begin with a formal definition of SNC with public channels, which is mentioned above and corresponds to the third column of Table \ref{table:differences_similarity_KRs_SNCs}.

The conventional SNC (the second column of Table \ref{table:differences_similarity_KRs_SNCs}) is the special case of this scheme where the use of public channels is prohibited.

\subsubsection{Setting}
\label{sec:setting_SNC}
The setting is the same as that of the KRP, given in Section \ref{sec:setting_KR}, except
\begin{itemize}
\item Of each user pair $u_i=(u_i^1,u_i^2)$, one user (say, $u_i^1$) is named the sender $a_i$, and the other (say, $u_i^2$) the receiver $b_i$.
Thus either $(u_i^1,u_i^2)=(a_i,b_i)$ or $(u_i^1,u_i^2)=(b_i,a_i)$ holds.

The is necessary because, in SNC, messages are not a random bit (as in the KRP), but must be chosen by the sender $a_i$; see Definition \ref{dfn:SNC} below.
\item Local key sources $LKS_e$ are replaced by the secret channels $SC_e$, defined in Definition \ref{dfn:channels} below.
\end{itemize}

\begin{Dfn}[Secret channels]
\label{dfn:channels}
On input a bit $s_e\in\{0,1\}$ from one end node (say, $v$), secret channel $SC_e$ sends $s_e$ to the other end node (say, $w$); see Fig. \ref{fig:channels}.
When edge $e$ is wiretapped, it also sends $s_e$ to the eavesdropper.
\end{Dfn}

In comparison with the conventional SNC \cite{1023595}, the setting above differs only in that players $V$ can use public channels $PC_e$ in addition to secret channels $SC_e$ (see Table \ref{table:differences_similarity_KRs_SNCs}).



\begin{figure}[htbp]
\begin{center}
\includegraphics[bb=0 0 1000 200, width=\linewidth, clip]{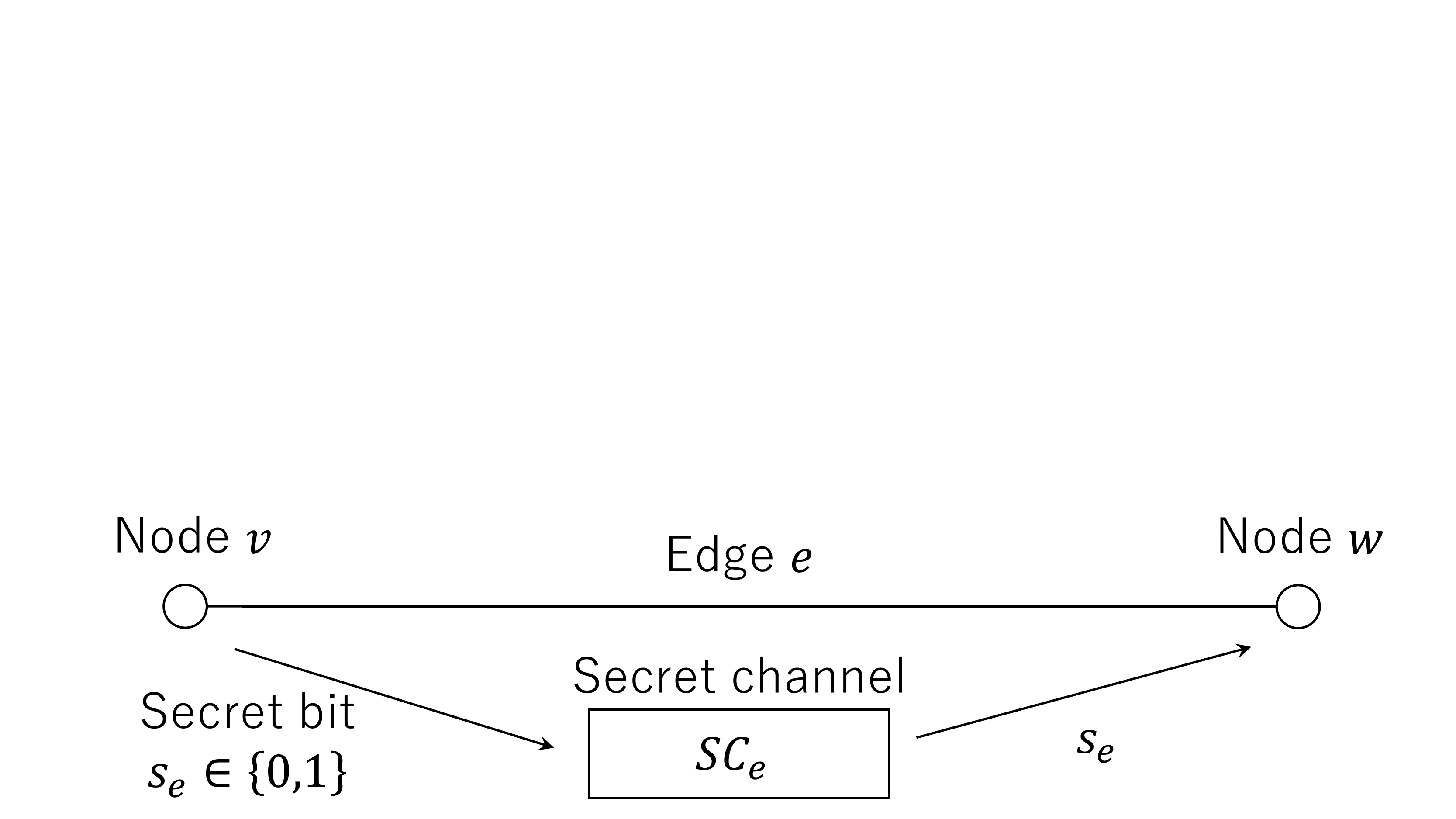}
\end{center}
 \caption{
Behavior of secret channel $SC_e$ in the absence of the adversary.
}
 \label{fig:channels}
\end{figure}

\subsubsection{SNC with public channels}
The goal of our SNC with public channels is the same as that of the conventional SNC \cite{1023595}: Each sender-receiver pair $(a_i,b_i)$ wish to exchange message $m_i$ with the help of other players on nodes $V$, without disseminating $m_i$ to the adversary.
 
\begin{Dfn}[SNC with public channels]
\label{dfn:SNC}
We call a protocol of the following type a secure network coding (SNC) scheme with public channels.
\begin{itemize}
\item Each sender $a_i$ chooses a message $m_i\in\{0,1\}$ aimed at the receiver $b_i$.
\item Players $V$ communicate by using public channels $PC_e$ and secret channels $SC_e$\footnote{As in the case of the KRP, we assume that the outputs ($p_e$, $s_e$) of players are defined as functions of previously received data ($\subset\{p_e, s_e| e\in E\}$) and of random variables generated by the player.
We also assume that each player sends out the output whenever necessary data are all received.}.

Here, each $SC_e$ can only be used once, while $PC_e$ can be used arbitrarily many times.
\item Each receiver $b_i$ calculates message $\hat{m}_i\in\{0,1\}$.
\end{itemize}
\end{Dfn}

In comparison with Definition \ref{dfn:key_relay} for the KRP, Definition \ref{dfn:SNC} above differs only in that $LKS_e$ are replaced by $SC_e$, and that senders $a_i$ can arbitrarily choose message $m_i$, which need not be uniformly distributed, unlike the relayed key $k_i^1$ (cf. Table \ref{table:differences_similarity_KRs_SNCs}).

\subsubsection{Security criteria}
\label{sec:security_criteria_SNC}
The security criteria is essentially the same as Definition \ref{dfn:security_criteria} for the case of the KRP.
That is, there is again a known collection ${\cal E}^{\rm adv}=\{E^{\rm adv}_{1}, E^{\rm adv}_{2},\dots\}$ of wiretap sets $E^{\rm adv}_{l} \subset E$.
In each round of the SNC scheme, the adversary chooses $E^{\rm adv}_{l}\in {\cal E}^{\rm adv}$ and wiretap edges $e\in E^{\rm adv}_{l}$.
\begin{Dfn}[Security of SNC with public channels]
\label{dfn:security_criteria_SNC}
A SNC scheme $L$ is secure against ${\cal E}^{\rm adv}$, if it satisfies the followings.
\begin{itemize}
\item Soundness: 
Sender $a_i$'s message $m_i$ reaches receiver $b_i$ without error; $\Pr[M_i=\hat{M}_i]=1$.
\item Secrecy: Messages $m_i,\hat{m}_i$ are unknown to the adversary even when any edge set $E_j^{\rm adv}\in{\cal E}^{\rm adv}$ is wiretapped.
That is, for any $l$, we have
\begin{equation}
I(M_1,M_2,\dots,M_{n_{\rm pair}}: A(E_l^{\rm adv}))=0,
\end{equation}
where $A(E_l^{\rm adv})$ denotes the information that the adversary obtains by eavesdropping on edges $E_l^{\rm adv}$; i.e., $A(E_l^{\rm adv})$ consists of secret bits $s_e$ on edges $e\in E^{\rm adv}_{l}$, and of all public information $p_e$ ($e\in E$).
\end{itemize}
\end{Dfn}

In comparison with Definition \ref{dfn:security_criteria} for the KRP, Definition \ref{dfn:security_criteria_SNC} above differs in that $m_i$ need not be uniformly distributed (cf. Table \ref{table:differences_similarity_KRs_SNCs}), and that local keys $r_e$ included in the adversary's information $A(E_j^{\rm adv})$ are replaced by secret bits $s_e$.


\subsection{SNC with public channels and the KRP are equivalent}

SNC with public channels thus defined are in fact equivalent to the KRP defined in the previous section.

\begin{Thm}[The security of KRP $=$ The security of SNC with public channels]
\label{thm:SNC_and_KR}
KRPs and SNC schemes with public channels can always achieve the same security.
That is,
\begin{enumerate}
\item Given a KRP $L$ compatible with a graph $G$ and user configuration $u_i=(u_i^1,u_i^2)$ which is secure against wiretap sets ${\cal E}^{\rm adv}$, one can construct a SNC scheme with public channels $L'$ which is compatible with the same $G$ and $u_i$, and also secure against the same ${\cal E}^{\rm adv}$.

This is true whether the sender and the receiver for each user pair $u_i$ in $L'$ are assigned as $(u_i^1,u_i^2)=(a_i,b_i)$ or $(u_i^1,u_i^2)=(b_i,a_i)$  (for the meaning of this notation, see Section \ref{sec:setting_SNC}).
\item Given a SNC scheme $L$ (with or without public channels) compatible with a graph $G$ and a sender-receiver configutation $u_i=(a_i,b_i)$ which is secure against ${\cal E}^{\rm adv}$, one can construct a KRP  $L'$ compatible with the same $G$ and $u_i$, which is secure against the same ${\cal E}^{\rm adv}$.
\end{enumerate}
\end{Thm}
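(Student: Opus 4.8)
My plan is to prove both directions by exhibiting explicit \emph{gadget} simulations that interconvert the two kinds of resources ($LKS_e$ versus $SC_e$) together with a conversion between the two kinds of messages (a uniformly random relayed key versus a freely chosen message), and then to verify soundness and secrecy by a one-time-pad argument (all additions are modulo $2$). The two resource gadgets I would use are the following. First, a secret channel $SC_e$ can be built from $LKS_e$ and $PC_e$: to send $s_e$ from $v$ to $w$, node $v$ draws the local key $r_e$ from $LKS_e$ and announces $p_e=s_e+r_e$ over $PC_e$, and node $w$ recovers $s_e=p_e+r_e$. Wiretapping $e$ reveals $r_e$ and the (always public) $p_e$, hence $s_e$; not wiretapping $e$ reveals only the one-time-padded $p_e$, which is independent of $s_e$; thus the adversary learns $s_e$ exactly when $e$ is wiretapped, matching $SC_e$ perfectly. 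Second, a local key source $LKS_e$ can be built from $SC_e$: node $v$ draws a uniform bit $r_e$ and sends it over $SC_e$, and again the adversary learns $r_e$ exactly when $e$ is wiretapped, matching $LKS_e$. Both gadgets respect the used-once constraints, since each uses its channel once while $PC_e$ may be used freely.

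For Part 2 (SNC $\to$ KRP) I would start from the secure SNC scheme $L$, replace every use of $SC_e$ by the first gadget, and have each sender $a_i$ choose $m_i$ independently and uniformly at random, setting $k_i^1:=m_i$ and $k_i^2:=\hat m_i$. The soundness of $L$ ($\hat M_i=M_i$) then gives $k_i^1=k_i^2$, uniformly distributed and independent across pairs, which is exactly the KRP soundness. For secrecy, the adversary's view in $L'$ is its view in $L$ adjoined with the extra public bits $\{s_e+r_e\}$: on unwiretapped edges these are fresh uniform bits independent of everything, and on wiretapped edges they combine with the now-revealed $r_e$ to reproduce precisely the $s_e$ that the SNC adversary already had. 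Hence the $L'$-view is a randomized function of the $L$-view, and $I(K:A(E_l^{\rm adv}))=I(M:A(E_l^{\rm adv}))=0$ follows from the secrecy of $L$ under uniform messages.

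For Part 1 (KRP $\to$ SNC) I would start from the secure KRP $L$, replace every use of $LKS_e$ by the second gadget so that $L'$ reproduces the relayed key $k_i:=k_i^1=k_i^2$ with an identical adversary view, and then add a one-time-pad layer: the designated sender computes $c_i=m_i+k_i$ and routes the ciphertext $c_i$ to the receiver along any path of public channels, which exists because $G$ is connected. The receiver recovers $\hat m_i=c_i+k_i=m_i$, giving soundness; since $k_i^1=k_i^2$, this computation is symmetric in the two users, so the assignment of sender and receiver is immaterial, which establishes the parenthetical claim. For secrecy, the SNC adversary's view is the KRP adversary's view augmented by the public ciphertexts $c_i=m_i+k_i$; because $M$ is independent of the KRP randomness and $K$ is uniform and independent of the KRP view (by KRP secrecy), conditioning on any value of that view leaves $C=M+K$ uniform and independent of $M$, so $I(M:A(E_l^{\rm adv}))=0$.

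The main obstacle I expect is the secrecy bookkeeping rather than the constructions themselves: in each direction one must argue rigorously that augmenting the adversary's view with one-time-padded public bits leaks nothing, i.e. that the simulated view is obtained from the original view by adjoining data that are either fresh independent uniform bits (unwiretapped edges, or encrypted messages) or mere recomputations of information the original adversary already possessed (wiretapped edges). Making this ``randomized function of the original view'' statement precise, and confirming it is compatible with the used-once constraints and with routing $c_i$ through possibly wiretapped public edges (harmless, since all public traffic is handed to the adversary by definition), is where the care lies; everything else reduces to the perfect secrecy of the one-time pad.
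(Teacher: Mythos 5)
Your proposal is correct and follows essentially the same route as the paper: simulate $LKS_e$ by locally generating a random bit and sending it over $SC_e$ for one direction, simulate $SC_e$ by one-time-padding $s_e$ with the key from $LKS_e$ over $PC_e$ for the other, and bridge the random-key/chosen-message gap with a one-time-pad layer over the public channel. Your treatment of the secrecy bookkeeping and of the uniform choice of $m_i$ in Part~2 is somewhat more explicit than the paper's, but the constructions and arguments coincide.
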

Therefore, if one wishes to analyze the potential and limitations of the KRP, it is necessary and sufficient to investigate SNC with public channels. 

We will prove Theorem \ref{thm:SNC_and_KR} in Section \ref{sec:proof_first_theorem}.

\subsection{SNC without public channels and the KRP are not equivalent}
However, in order for Theorem \ref{thm:SNC_and_KR} above to hold, it was in fact essential that we generalized SNC by adding  public channels.
The equivalence with the KRP no longer holds if we limit ourselves with the conventional SNC, i.e. SNC schemes without public channels.
More precisely, we have the following theorem.
{
\begin{Thm}[The security of KRP $\ne$ The security of SNC without public channels (conventional SNC)]
\label{crl:SNCs_and_KRs_are_not_equivalent}
There exists a conbination of a graph $G$, a user configuration $u_i$, and wiretap sets ${\cal E}^{{\rm adv},G_0}$ for which there exists a secure KRP $L_{\rm KRP}$, but there exists no secure SNC scheme without public channels.

This is true whether the sender and the receiver for each user pair $u_i$ (in the SNC without public channels) are assigned as $(u_i^1,u_i^2)=(a_i,b_i)$ or $(u_i^1,u_i^2)=(b_i,a_i)$  (for the meaning of this notation, see Section \ref{sec:setting_SNC}).
\footnote{It is important to note that this theorem applies even to SNC schemes on {\it un}-directed graphs. 
 If one is somehow allowed to limit oneself with SNC on directed graphs, the counterexample $L_{\rm KRP}$ can be constructed straightforwardly. 
 }
\end{Thm}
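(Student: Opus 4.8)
\emph{Overall plan.} Since the statement is an existence claim, the plan is to exhibit an explicit triple $(G_0,\{u_i\},{\cal E}^{{\rm adv},G_0})$ and then treat the two sides separately: produce one secure KRP on it, and prove that \emph{no} conventional SNC scheme can be secure on it. Because Theorem~\ref{thm:SNC_and_KR} already identifies KRPs with SNC schemes \emph{with} public channels, the entire content of the separation is the value of the public channel, so I would design $G_0$ precisely to exploit the one structural asset that public channels have over secret channels and that survives on an \emph{un}directed graph: a public channel may be used arbitrarily many times (hence in both directions, and for broadcasting a shared reference value), whereas each secret channel $SC_e$ carries only a single bit in a single direction. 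Concretely I would look for the smallest graph in which any secure scheme is forced, across some bottleneck edge $e^\ast$, to convey two independent pieces of coordination --- say a forward message share together with a backward masking value, at least one of which need not be kept secret --- a task that a public channel accomplishes for free but that a single use of $SC_{e^\ast}$ provably cannot. Throughout I would take the messages to be uniformly distributed, which is the hardest case for secrecy and hence suffices to rule out a secure scheme.

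\emph{Easy direction (a secure KRP exists).} Here I would simply write down $L_{\rm KRP}$ in the style of Section~II: let each $LKS_e$ distribute $r_e$, let the relayed key be a fixed $\FF_2$-linear combination of the $r_e$, and let the intermediate nodes announce on the public channels exactly the linear ``differences'' needed so that both users can reconstruct that combination (as in Figs.~\ref{fig:key_relay_example1} and \ref{fig:key_relay_example2}). Soundness is immediate from the linear identity $k_i^1=k_i^2$, and secrecy is a finite check: for each $E_l^{{\rm adv}}\in{\cal E}^{{\rm adv},G_0}$ one verifies that the key is independent of the tapped $r_e$ together with the announcements, i.e. that $I(K:A(E_l^{{\rm adv}}))=0$ in the sense of Definition~\ref{dfn:security_criteria}. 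As the scheme is linear, Lemma~\ref{lmm:linear_case} upgrades this passive security to active security, although only passive security is needed.

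\emph{Hard direction (no conventional SNC is secure), the main obstacle.} This is where the real work lies. I would assume a secure conventional SNC scheme exists and derive a contradiction. First I would try to reduce to the linear case: for a uniform one-bit message with perfect soundness, the constraints $\Pr[M=\hat M]=1$ and $I(M:A(E_l^{{\rm adv}}))=0$ of Definition~\ref{dfn:security_criteria_SNC} are rigid enough that one can hope to replace an arbitrary scheme by an $\FF_2$-linear one; failing a clean reduction I would instead argue directly with Shannon entropies, using only the data-processing inequality so that the bound binds for arbitrary, not merely linear, schemes. Next I would encode soundness and secrecy as relations on the secret bits carried by the edges: soundness forces $H(M \mid \text{bits entering } b_i)=0$, while secrecy forces $I(M:\{s_e:e\in E_l^{{\rm adv}}\})=0$ for every wiretap set. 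The decisive constraint, absent in the KRP setting, is the single-use bound: each edge contributes at most one bit to any cut, counted over both directions together. I would then choose ${\cal E}^{{\rm adv},G_0}$ so that this cut capacity is exactly insufficient: every routing that satisfies soundness necessarily loads the bottleneck edge with a function of $M$ that some wiretap set reads off, so that summing the secrecy relations over the wiretap family together with the soundness relation collapses to $H(M)\le 0$, contradicting $H(M)=1$.

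\emph{Remaining technical points.} Two issues must be handled with care. The undirected setting is what makes the argument nontrivial and must be respected: on a directed graph the separation is immediate from edge orientation (as the footnote to the theorem notes), so the capacity bookkeeping above must charge each \emph{undirected} edge a single bit \emph{regardless of the direction in which its secret channel is used} --- exactly the feature that the unlimited, bidirectional public channel defeats and that the choice of $G_0$ must pin down. Finally, the claim is asserted for either assignment of sender and receiver within each pair; I would build $G_0$ with an automorphism exchanging $u_i^1$ and $u_i^2$ so that a single run of the impossibility argument covers both $(u_i^1,u_i^2)=(a_i,b_i)$ and $(u_i^1,u_i^2)=(b_i,a_i)$, rather than repeating the computation. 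I expect the impossibility direction --- in particular making the cut/entropy bookkeeping airtight for arbitrary schemes on the undirected graph --- to be the hardest part; the KRP construction and the symmetry reduction are routine by comparison.
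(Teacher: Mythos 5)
Your overall architecture (exhibit a triple $(G_0,\{u_i\},{\cal E}^{\rm adv})$, write down one secure KRP on it, prove no conventional SNC scheme can be secure on it) matches the paper, and your easy direction is essentially what the paper does. But the concrete plan for the impossibility direction has a gap I believe is fatal: you propose to derive $H(M)\le 0$ by combining cut-capacity bookkeeping (one secret bit per undirected edge, counted over both directions) with the secrecy constraints $I(M:\{S_e\}_{e\in E_l^{\rm adv}})=0$ summed over the wiretap family. Any such argument uses only (i) that each edge supplies at most one bit of secret correlation and (ii) secrecy against the wiretap sets --- and both hold verbatim for the KRP as well: each $LKS_e$ is also single-use and supplies exactly one shared random bit per edge, and the KRP adversary sees strictly \emph{more} (all public messages in addition to the wiretapped local keys). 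So if your bookkeeping ruled out every conventional SNC scheme on the instance, it would rule out every KRP on it too, and no separation could follow. The resource that actually distinguishes the two models is not secret capacity across cuts but the causal and directional structure of a single use of $SC_e$: the bit $s_e$ must be fixed by one designated endpoint, at one moment, from information already in its possession, whereas $LKS_e$ hands the same fresh bit to both endpoints and the unlimited public channel lets it be exploited in either direction and at any later time. Your opening intuition names exactly this asset, but the formalization you then propose discards it in favor of a capacity argument that is blind to it.

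The paper's proof is built entirely on that causal distinction, and --- strikingly --- its counterexample uses the \emph{empty} wiretap family ${\cal E}^{\rm adv}=\{\varnothing\}$, so secrecy against the adversary plays no role in the impossibility; what fails is the ability to achieve soundness together with uniformity and mutual independence of the relayed keys at all. Concretely, $G_0$ is a cyclic arrangement of copies of the modified butterfly network; the KRP emulates each needed secret transmission by one-time pad over $LKS_e$ and $PC_e$ and runs butterfly network coding inside each copy, while for any scheme using only single-use secret channels the paper introduces the total order $\prec$ of channel uses, shows that every edge of each ``standard path'' must carry the corresponding key $K_i$ up to a constant and must be used in a monotone order radiating outward from wherever $K_i$ is first created, and then shows that the resulting sender/receiver and ordering requirements (R1--R4) cannot be satisfied simultaneously around the cycle. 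To repair your proposal you would need to replace the cut/entropy bookkeeping with an argument of this causal-ordering type. One further remark: the paper covers ``either assignment of sender and receiver'' not by a graph automorphism but by proving impossibility for an intermediate primitive (KRP-by-SNC: sharing a random key using only secret channels, with no sender/receiver distinction) to which any secure conventional SNC reduces regardless of the assignment; that reduction is cleaner than symmetrizing the graph.
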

The proof of this theorem is give in Section \ref{sec:Proof_of_Theorem2}.

In short, there are situations where the KRPs achieve better securities than the conventional SNC.
Hence the accumulation of past research on the conventional SNC is not sufficient to explore the potential of the KRP.
In this sense, the KRP is a new research field. 

Combining Theorem \ref{thm:SNC_and_KR} and Theorem \ref{crl:SNCs_and_KRs_are_not_equivalent}, we can also obtain the following corollary.

\begin{Crl}[The security of SNC with public channels $\ne$ The security of SNC without public channels (conventional SNC)]
\label{tmp:crl:SNCs_and_KRs_are_not_equivalent}
 There exists a combination of a graph $G$, a user configuration $u_i$, and wiretap sets $E^{\rm adv}$ for which there exists a secure SNC scheme with public channel, but there exists no secure SNC scheme without public channels.
\end{Crl}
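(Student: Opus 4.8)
The plan is to derive this corollary directly by chaining Theorem \ref{thm:SNC_and_KR} with Theorem \ref{crl:SNCs_and_KRs_are_not_equivalent}, so that essentially no new construction is needed. Theorem \ref{crl:SNCs_and_KRs_are_not_equivalent} already hands me a concrete combination of a graph $G$, a user configuration $u_i$, and wiretap sets ${\cal E}^{{\rm adv},G_0}$, together with a KRP $L_{\rm KRP}$ that is secure against those wiretap sets, while simultaneously guaranteeing that no secure SNC without public channels exists for the same data. The only half of the corollary still to supply is the existence of a secure SNC \emph{with} public channels on this same instance.

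First I would invoke part 1 of Theorem \ref{thm:SNC_and_KR}. Since $L_{\rm KRP}$ is a KRP compatible with $G$ and $u_i$ that is secure against ${\cal E}^{{\rm adv},G_0}$, that theorem produces an SNC scheme with public channels $L'$, compatible with the \emph{same} $G$ and $u_i$ and secure against the \emph{same} ${\cal E}^{{\rm adv},G_0}$. This $L'$ is precisely the secure SNC scheme with public channels demanded by the corollary. Combining this with the negative half already contained in Theorem \ref{crl:SNCs_and_KRs_are_not_equivalent}, namely that for this very instance there is no secure SNC scheme without public channels, and identifying the corollary's $E^{\rm adv}$ with ${\cal E}^{{\rm adv},G_0}$, the two facts together exhibit a single combination $(G, u_i, {\cal E}^{{\rm adv},G_0})$ admitting a secure SNC with public channels but no secure SNC without, which is exactly the claim.

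The one point that needs care, and the only place the argument could go wrong, is making sure the quantifiers of the two theorems line up, in particular the treatment of the sender/receiver designation. A KRP has no sender/receiver distinction, whereas an SNC scheme does; Theorem \ref{thm:SNC_and_KR} part 1 asserts that the conversion succeeds for \emph{either} assignment $(u_i^1,u_i^2)=(a_i,b_i)$ or $(u_i^1,u_i^2)=(b_i,a_i)$, and Theorem \ref{crl:SNCs_and_KRs_are_not_equivalent} asserts the nonexistence of a secure public-channel-free SNC for \emph{both} assignments as well. Hence, whichever way one fixes the sender and receiver in $L'$, the matching impossibility result applies, and no gap opens between the two statements. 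I expect this bookkeeping to be the entirety of the work: there is no genuine analytic obstacle here, since all the substance is already carried by the two cited results, and the corollary is essentially just their composition.
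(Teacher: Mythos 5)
Your proposal is correct and is exactly the paper's argument: the paper obtains the corollary by the same composition, taking the instance from Theorem \ref{crl:SNCs_and_KRs_are_not_equivalent} and applying part 1 of Theorem \ref{thm:SNC_and_KR} to the secure KRP to produce the required secure SNC scheme with public channels. Your remark on the sender/receiver assignment matches the quantifiers as stated in both theorems, so no gap remains.
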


\section{Proof of Theorem \ref{thm:SNC_and_KR}}
\label{sec:proof_first_theorem}
To prove item 1), note that operations of $LKS_e$ can be simulated by using $SC_e$.
That is, if an end node $v$ of edge $e$ wishes to send a local key $r_e$ to the other end node $w$, it suffices that $v$ generates a random bit $r_e\in_{\rm R}\{0,1\}$ by itself and sends it to $w$ via $SC_e$ (Fig. \ref{fig:equivalence_model_kr}).

By applying this simulation to all $LKS_e$ included in $L$, one obtains a protocol $L'$ where user pairs $u_i=(u_i^1,u_i^2)$ share relayed key $k_i=(k_i^1,k_i^2)$ in the same setting as in SNC with public channel, given in Section \ref{sec:setting_SNC}.

Then by using $k_i$ thus obtained to encrypt message $m_i$ by the one-time pad (OTP) encryption scheme \cite{KatzLindell}, one obtains $L'$.
Here the OTP encryption scheme is the following: User $u_i^1$ encrypts $m_i$ as the ciphertext $c_i=m_i+k_i^1$ and sends it to $u_i^2$ via public channel.
Then $u_i^2$ decrypts it as $\hat{m}_i=c_i+k_i^2$.

The soundness of $L'$ is obvious from the construction.
The secrecy of $L'$ follows from that of $L$, since the adversary's information are the same in $L$ and $L'$.
Indeed, in $L'$, the secrecy of $r_e$ on non-wiretapped edges $e$ is obvious by the construction, and thus the security of $k_i$ from that of $L$.
Then the secrecy of $m_i$ follows from that of the OTP.
This completes the proof of item 1 of Theorem \ref{thm:SNC_and_KR}.

\begin{figure}[htbp]
\begin{center}
\includegraphics[bb=0 0 900 375, width=\linewidth, clip]{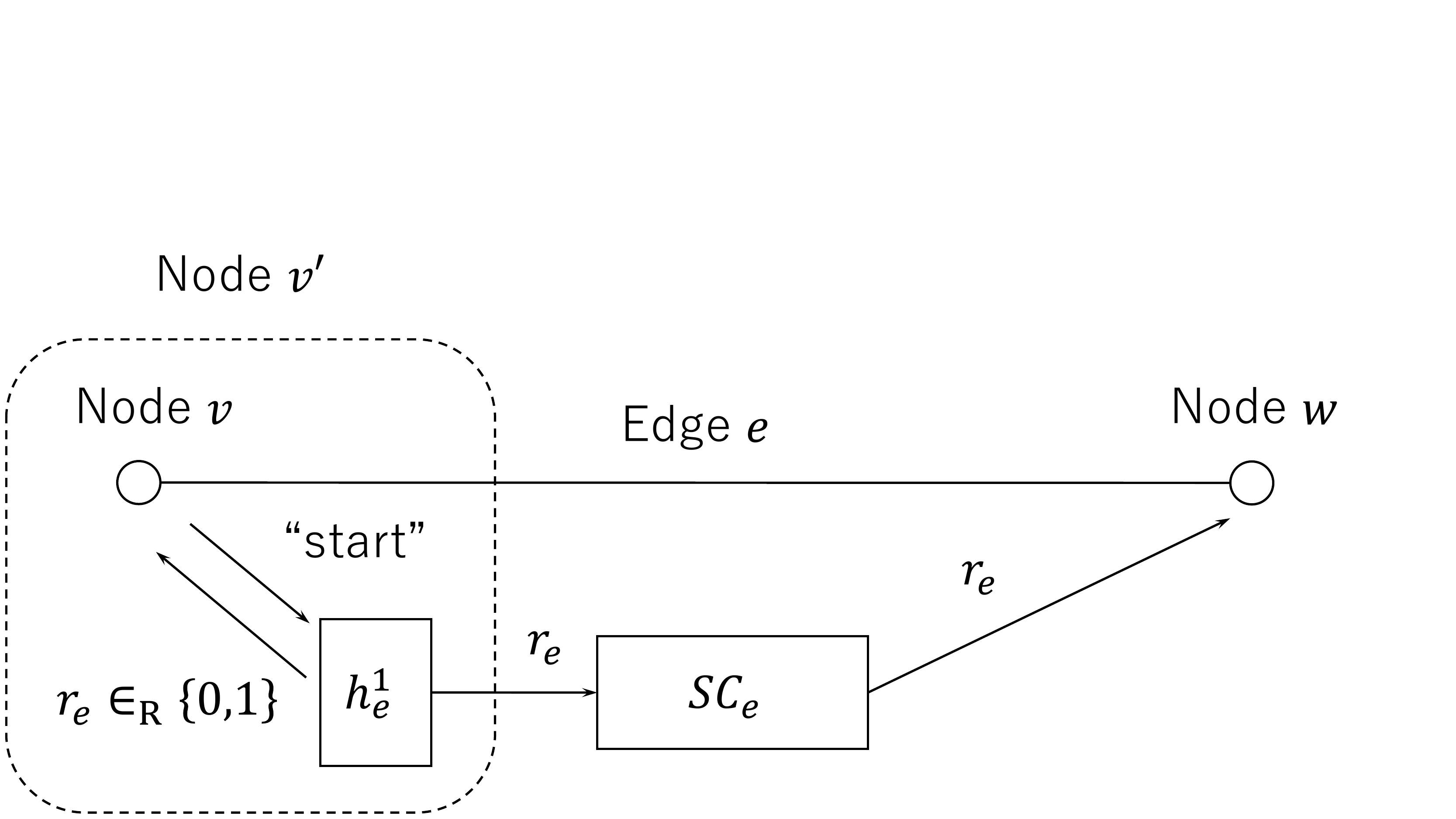}
\end{center}
 \caption{Construction for simulating a local key source $LKS_e$ (Definition \ref{dfn:random_source} and Fig. \ref{fig:random_source}) by using a secret channel $SC_e$.
We add a function $h_e^1$ to an end node $v$ of $e$ (the one that would start $LKS_e$), and regard them as a new node $v'$.
Function $h_e^2$ operates as follows: When it receives ``start'' command from $v$, it generates a uniformly random bit $r_e\in_{\rm R} \{0,1 \}$ and sends it to $SC_e$. 
}
 \label{fig:equivalence_model_kr}
\end{figure}

For the proof of item 2), note that $SC_e$ can be simulated by the local key source $LKS_e$ and the public channel $PC_e$: 
When an end node $u$ wishes to send a bit $s_e$ to the other end node $v$, it first distributes a random bit $r_e$ by switching on the local key source $LKS_e$.
Then $u$ sends $s_e$ to $v$ secretly by encrypting it by the OTP encryption scheme with $r_e$ being the secret key (Fig. \ref{fig:equivalence_model_SNC}).

By applying this construction to all secret channels included in $L$, one obtains a new KRP, which we denote by $L'$.
By construction, it is obvious that message $m_i$ as well as the adversary's information are the same, whether in $L$ or in $L'$.
This completes the proof of item 2 of Theorem \ref{thm:SNC_and_KR}.

\begin{figure}[htbp]
\begin{center}
\includegraphics[bb=0 0 960 450, width=\linewidth, clip]{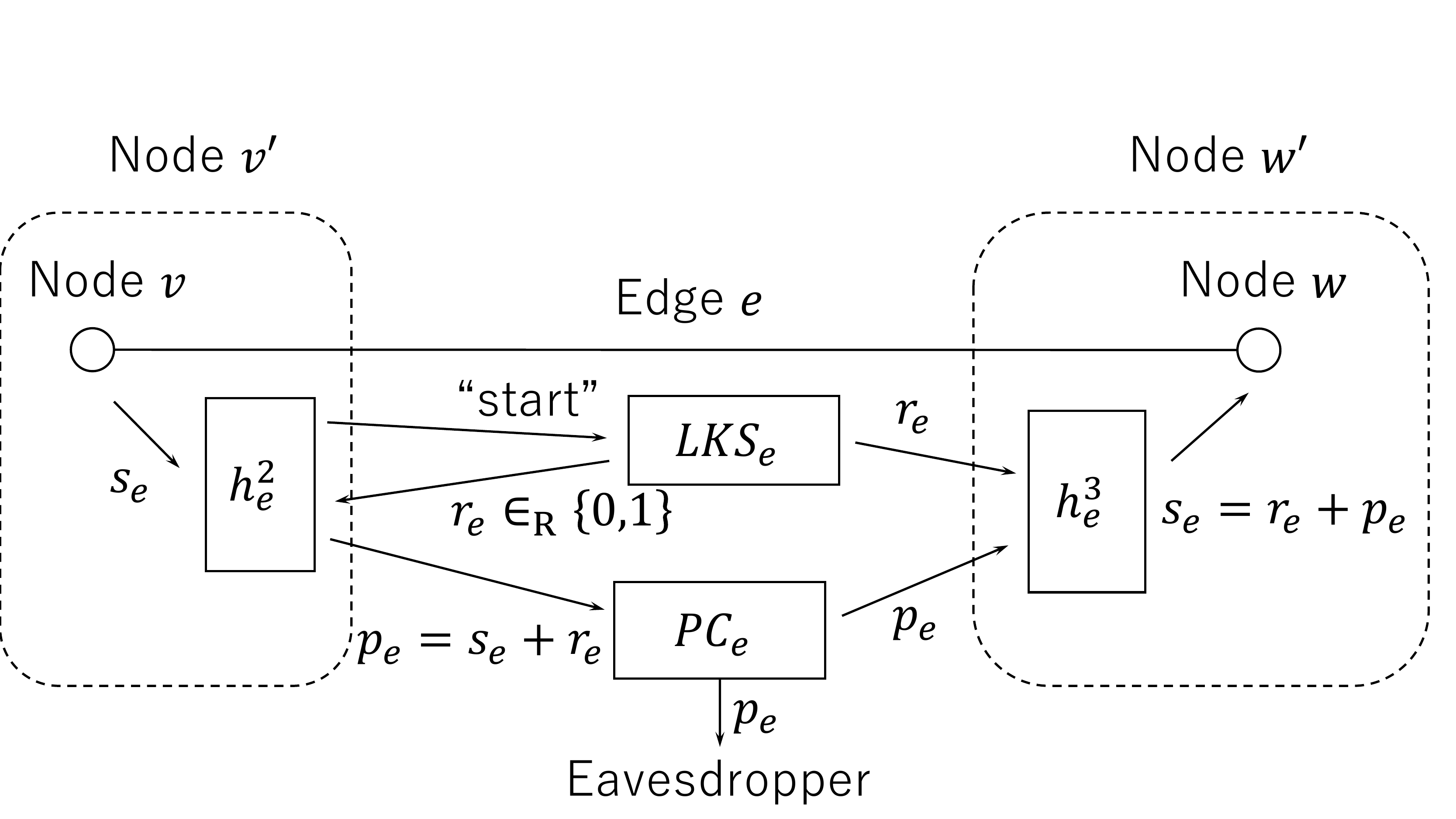}
\end{center}
 \caption{
Construction for simulating a secret channel $SC_e$ (Definition \ref{dfn:channels} and Fig. \ref{fig:channels}(a)) by using the local key source $LKS_e$ and the public channel $PC_e$.
We add a function $h_e^2$ to an end node $v$ of $e$ (the one that would start $LKS_e$), and regard them as a new node $v'$.
Function $h_e^2$ has two operations, namely, (i) on receiving $s_e$ from $u$, $h_e^2$ sends out ``start'' command to $LKS_e$, and (ii) on receiving $r_e$ from $PC_e$ $h_e^2$ sends out $p_e=s_e+r_e$ to $PC_e$.
Similarly, we add a function $h_e^3$ to the other end node $w$, and regard them as a new node $w'$.
Function $h_e^3$ has one operation: On receiving $r_e$ from $LKS_e$ and $p_e$ from $PC_e$, $h_e^3$ sends out $s_e=r_e+p_e$ to $w$.
}
 \label{fig:equivalence_model_SNC}
\end{figure}

\section{Proof of Theorem \ref{crl:SNCs_and_KRs_are_not_equivalent}}
\label{sec:Proof_of_Theorem2}
Theorem \ref{crl:SNCs_and_KRs_are_not_equivalent} asserts that the difference of structure between the KRP and the conventional SNC (shown in the first and the third columns of Table \ref{table:differences_similarity_KRs_SNCs}, and also explained in the first paragraph of Section \ref{sec:main_result})
cause a definite gap in security
(Fig. \ref{fig:inclusion_relation}).
Below, we prove this theorem by proving a set of more general lemmas (Fig. \ref{fig:inclusion_relation_2}).

That is, we introduce another new type of protocols that we call {\it KRP-by-SNC} (KRP by the setting of SNC without public channels), which corresponds to the fourth column of Table \ref{table:differences_similarity_KRs_SNCs}.
Then we show that secure schemes satisfy the relations
\begin{itemize}
\item[] Conventional SNC $\subseteq$ KRP-by-SNC $\subseteq$ KRP
\end{itemize}
(Lemma \ref{thm:SNC_var} and \ref{thm:not_more_secure_than_KRP}), as well as
\begin{itemize}
\item[] KRP-by-SNC $\ne$ KRP.
\end{itemize}
(Lemma \ref{thm:more_secure_than_SNCs}).
These two relations together assert that the KRP is strictly more secure than the conventional SNC, which completes the proof of Theorem \ref{crl:SNCs_and_KRs_are_not_equivalent}.





\begin{figure}[htbp]
\centering
\includegraphics[scale=0.3]{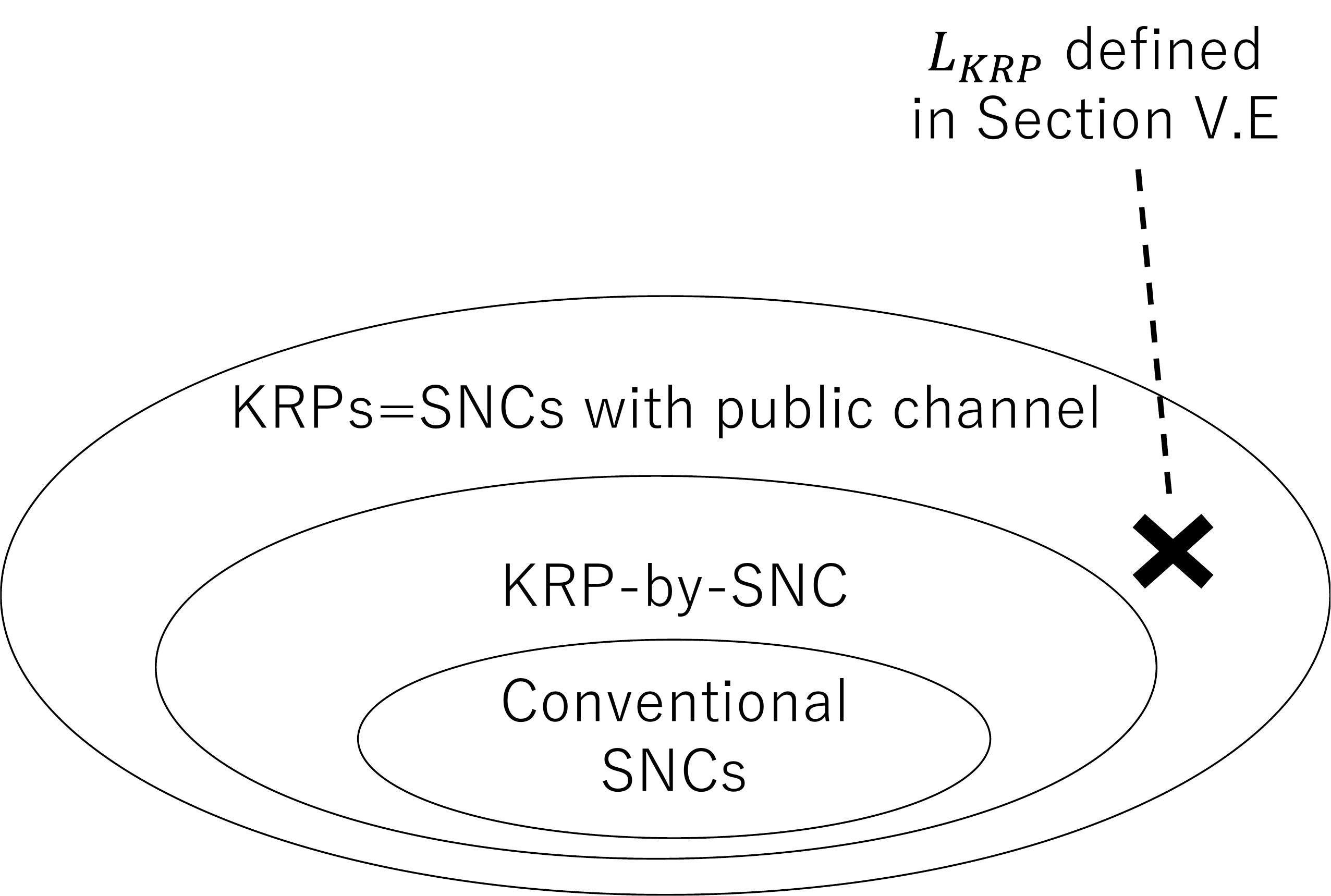}
 \caption{
Inclusion relation of secure network coding (SNC) schemes with and without public channels, key relay protocols (KRPs), and key relay protocols by using SNC without public channels (KRPs-by-SNC).
}
 \label{fig:inclusion_relation_2}
\end{figure}

\subsection{Definition of KRP-by-SNC}

KRP-by-SNC (KRP by the setting of SNC without public channels) is a variant of SNC corresponding to the fourth column of Table \ref{table:differences_similarity_KRs_SNCs}.
That is, it uses the same setting as in the conventional SNC, where players $V$ use the secret channels $SC_e$ only.
On the other hand, the goal is the same as in the KRP, where user pairs $u_i$ aim to share a random bit $k_i$.

Alternatively, KRP-by-SNC can also be considered as a variant of KRP, obtained by replacing  the public channels $PC_e$ and the local key sources $LKS_e$ with the secret channels $SC_e$.



The formal definition of KRP-by-SNC is as follows.
\subsubsection{Setting}
\label{sec:setting_SNC_kg}
The setting is the same as that of SNC without public channels (the conventional SNC), except that
\begin{itemize}
\item We denote user pairs by $u_i=(u_i^1,u_i^2)$, as in the case of KRP.
We make no distinction between the sender and the receiver.
\end{itemize}
We use this notation to stress that our goal here is to distribute a random bit $k_i$, and thus no particular user ($u_i^1$ or $u_i^2$) is entitled to choose the value $k_i$.

\subsubsection{KRP-by-SNC}
\label{sec:security_critetria_KRP-by-SNC}
The goal here is the same as that of KRP: Each user pair $u_i=(u^1_i,u^2_i)$ share a uniformly random key $k_i=(k_i^1,k_i^2)$, without disseminating it to the adversary.
Thus the basic form of the protocol should be the same as that of the KRP given in Definition \ref{dfn:key_relay}.
However, as the setting here is different ($SC_e$ are used instead of $PC_e$ and $LKS_e$), we need to modify Definition \ref{dfn:key_relay} as follows.
\begin{Dfn}[KRP-by-SNC]
\label{dfn:SNC_kg}
A protocol $L$ of the following type, performed by players $V$, is called a KRP-by-SNC.
\begin{itemize}
\item Players $V$ communicate by using  secret channels $SC_e$\footnote{We also assume that the outputs of players are defined as functions of previously received data and of random variables generated by the player, and that each player sends out the output whenever necessary data are all received.}.

Here, each $SC_e$ can only be used once.
\item  Each user $u_i^j$ calculates its relayed key $k_i^j$.
\end{itemize}
\end{Dfn}

\subsubsection{Security criteria of KRP-by-SNC}
\label{sec:security_criteria_SNC_kg}
We use the same security criteria as in the KRP, namely, Definition \ref{dfn:security_criteria}.
However, there is a caveat here: In the present case of KRP-by-SNC, the adversary's information $A(E_l^{\rm adv})$ appearing in Definition \ref{dfn:security_criteria} consists of the secret bits $s_e$ on wiretapped edges $e\in E_l^{\rm adv}$.
This is because, in KRP-by-SNC, players $V$ use the secret channels $SC_e$ only.

%


\subsection{Proof of Theorem \ref{crl:SNCs_and_KRs_are_not_equivalent}}

The following two lemmas assert that the security of KRP-by-SNC is between those of SNC and KRP.


\begin{Lmm}[Secure conventional SNC $\subseteq$ Secure KRP-by-SNC]
\label{thm:SNC_var}
KRP-by-SNCs can always achieve the same security as the conventional SNC schemes.
That is, given a conventional SNC scheme $L$ compatible with a graph $G$ and a sender-receiver configuration $u_i=(a_i,b_i)$ which is secure against wiretap sets ${\cal E}^{\rm adv}$, one can construct a  KRP-by-SNC  $L'$ compatible with the same $G$ and $u_i$ which is secure against the same ${\cal E}^{\rm adv}$.
\end{Lmm}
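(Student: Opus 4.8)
The plan is to exploit the fact that the only structural difference between a conventional SNC scheme and a KRP-by-SNC (the second and fourth columns of Table \ref{table:differences_similarity_KRs_SNCs}) is item 3) of Section \ref{sec:main_result}: in SNC the sender $a_i$ freely chooses its message $m_i$, whereas in KRP-by-SNC the two users of a pair must share a \emph{uniformly random} bit $k_i$. The communication resources are otherwise identical, since both protocols use the secret channels $SC_e$ only. Hence the natural construction is to run the given scheme $L$ verbatim while forcing each sender to transmit a fresh random bit. Concretely, given $L$ with assignment $u_i=(a_i,b_i)$, I would define $L'$ so that each sender $a_i$ first generates a uniformly random bit $k_i\in_{\rm R}\{0,1\}$, independently over $i$, and uses it as its message ($m_i=k_i$); all players then execute $L$ exactly as prescribed; finally user $a_i$ outputs the relayed key $k_i$ and user $b_i$ outputs the decoded $\hat m_i$. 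Since $L$ uses only $SC_e$, so does $L'$, so $L'$ is a legitimate KRP-by-SNC in the sense of Definition \ref{dfn:SNC_kg}.

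For soundness, the soundness of $L$ gives $\hat m_i=m_i=k_i$, so both users of pair $i$ hold the same bit $k_i$; this bit is uniform by construction and the bits of distinct pairs are independent because they are drawn independently, so all three requirements of the soundness clause in Definition \ref{dfn:security_criteria} hold. For secrecy, note that the adversary's view $A(E_l^{\rm adv})$ — which, in the absence of public channels, consists only of the secret bits $s_e$ on the wiretapped edges — is produced by the same mechanism in $L$ and in $L'$, so the two views coincide as random variables. Writing $K_i=M_i$, the KRP-by-SNC secrecy requirement $I(K_1,\dots,K_{n_{\rm pair}}:A(E_l^{\rm adv}))=0$ is then \emph{literally} the SNC secrecy requirement $I(M_1,\dots,M_{n_{\rm pair}}:A(E_l^{\rm adv}))=0$, which holds by hypothesis.

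The one step that deserves care, and that I regard as the main (albeit mild) obstacle, is checking that the SNC secrecy hypothesis actually applies to the uniformly distributed messages used in $L'$: Definition \ref{dfn:security_criteria_SNC} phrases secrecy in terms of messages that are \emph{chosen} by the senders, so one must confirm it covers the uniform case. This is unproblematic, because $I(M:A)=0$ for a full-support message distribution (such as the uniform one) is equivalent to the conditional law $P_{A\mid M=m}$ being independent of $m$, i.e.\ to the adversary learning nothing about the messages; invoking this with the uniform distribution fixed by $L'$ yields the required secrecy at once. With this observation in place the lemma reduces to the one-line construction above, so I do not anticipate any further difficulty.
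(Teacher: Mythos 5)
Your construction is exactly the paper's: let each sender $a_i$ draw a uniformly random bit $k_i$ and transmit it as its message $m_i$ via the given SNC scheme $L$, then read off soundness and secrecy of the resulting KRP-by-SNC from those of $L$ (the paper states this in one line and calls the verification straightforward; you have simply written out the verification). The proposal is correct and takes essentially the same approach as the paper.
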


\begin{proof}
KRP-by-SNC $L'$ can be realized by letting the sender $a_i$ of SNC $L$ choose a random bit $k_i$ and send it out as a message $m_i$.
It is straightforward to verify that the security of $L'$, defined in Section \ref{sec:security_critetria_KRP-by-SNC}, follows from the security of $L$, defined in Section \ref{sec:security_criteria_SNC}.
\end{proof}

\begin{Lmm}[Secure KRP-by-SNC $\subseteq$ Secure KRP]
\label{thm:not_more_secure_than_KRP}
KRP can always achieve the same security as KRP-by-SNC.
That is,
given a KRP-by-SNC $L$ compatible with a graph $G$ and a user configuration $u_i$ which is secure against wiretap sets ${\cal E}^{\rm adv}$, one can construct a KRP  $L'$ compatible with the same $G$ and $u_i$ which is secure against the same ${\cal E}^{\rm adv}$.
\end{Lmm}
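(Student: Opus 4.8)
The plan is to reuse, essentially verbatim, the channel-simulation construction from the proof of item 2) of Theorem \ref{thm:SNC_and_KR}. A KRP-by-SNC $L$ communicates using only secret channels $SC_e$, exactly as the conventional SNC does; the difference between the two settings lies solely in the goal (a shared random bit $k_i$ versus a freely chosen message $m_i$) and not in the communication resources available to the players. Since item 2) of Theorem \ref{thm:SNC_and_KR} shows that each $SC_e$ can be simulated by a local key source $LKS_e$ together with a public channel $PC_e$ (Fig. \ref{fig:equivalence_model_SNC}), I would apply that simulation independently to every secret channel appearing in $L$, obtaining a protocol $L'$ that communicates using only $LKS_e$ and $PC_e$ — that is, a KRP in the sense of Definition \ref{dfn:key_relay} — in which each user $u_i^j$ computes the same relayed key $k_i^j$ as in $L$.

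First I would observe that the replacement in Fig. \ref{fig:equivalence_model_SNC} is purely local and makes no reference to whether the transmitted bit originates from a freely chosen message or from internally generated randomness; hence it transfers to KRP-by-SNC unchanged. Soundness of $L'$ is then immediate: the simulated channel delivers to the receiving node exactly the bit $s_e$ that the original $SC_e$ would have delivered, so every intermediate variable, and in particular each $k_i^j$, is identical in $L$ and $L'$. The equality $\Pr[K_i^1=K_i^2]=1$, the uniformity $\Pr[K_i^j=0]=\Pr[K_i^j=1]=1/2$, and the independence across user pairs therefore carry over directly from the soundness of $L$.

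The only substantive step is the secrecy argument, which is again inherited from the proof of item 2). When a wiretapped edge $e\in E_l^{\rm adv}$ is attacked, the adversary learns the local key $r_e$ together with the publicly broadcast message $p_e=s_e+r_e$, and hence recovers $s_e=p_e+r_e$ — precisely the secret bit she would have obtained from $SC_e$ in $L$. For a non-wiretapped edge she sees only $p_e$, which is a one-time-pad encryption of $s_e$ under the fresh, undisclosed key $r_e$ and therefore carries no information about $s_e$. Consequently the adversary's total information $A(E_l^{\rm adv})$ is informationally equivalent in $L$ and $L'$, and $I(K_1,\dots,K_{n_{\rm pair}}:A(E_l^{\rm adv}))=0$ in $L'$ follows from the same identity in $L$.

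The step requiring the most care is this last one, because a KRP makes every public message $p_e$ visible to the adversary on \emph{all} edges, not merely on the wiretapped ones, whereas in $L$ the adversary sees secret bits only on wiretapped edges. The point to verify explicitly is that this additional public information is harmless, which is exactly the one-time-pad observation above: each $r_e$ is consumed a single time and stays secret on every non-wiretapped edge, so the extra broadcasts leak nothing beyond what the adversary already possessed in $L$.
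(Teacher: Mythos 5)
Your proposal is correct and follows essentially the same route as the paper, which proves this lemma by invoking the simulation of each $SC_e$ by $LKS_e$ and $PC_e$ via one-time-pad encryption from the proof of item 2) of Theorem \ref{thm:SNC_and_KR}. Your additional care about the publicly visible $p_e$ on non-wiretapped edges is exactly the right point to check and is consistent with the paper's (terser) argument.
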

\begin{proof}
This can be proved in the same manner as in the proof of item 2) of Theorem \ref{thm:SNC_and_KR}.
By rewriting all the secret channels $SC_e$ appearing in KRP-by-SNC $L$ as the OTP-encrypted channels using $LKS_e$ and $PC_e$, we obtain KRP $L'$.
\end{proof}

According to Lemma \ref{thm:not_more_secure_than_KRP} above, there still remains the possibility that the securities of KRP-by-SNC and KRP are equal.
Lemma \ref{thm:more_secure_than_SNCs} below disproves this possibility.
\begin{Lmm}[Secure KRP-by-SNC $\ne$ Secure KRP]
\label{thm:more_secure_than_SNCs}
For a graph $G_0$, a user configuration $u_i$ (defined in Section \ref{sec:construction_of_counterexample}), and the empty wiretap set ${\cal E}^{\rm adv}=\{\varnothing\}$, there exists a secure KRP $L_{\rm KRP}$ (defined in Section \ref{sec:L_KRP_defined}), but there exists no secure KRP-by-SNC.
\end{Lmm}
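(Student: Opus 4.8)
The plan is to split the statement into its two halves and to exploit the drastic simplification forced by the empty wiretap set ${\cal E}^{\rm adv}=\{\varnothing\}$. The first observation is a reduction. For KRP-by-SNC the adversary's view consists only of the secret bits $s_e$ on tapped edges and there are no public channels, so when the only admissible wiretap set is $\varnothing$ the adversary learns \emph{nothing}; the secrecy clause of Definition \ref{dfn:security_criteria} is therefore vacuous. Hence ``no secure KRP-by-SNC exists'' is equivalent to ``no KRP-by-SNC meets the soundness clause,'' i.e. no scheme built solely from single-use, unidirectional channels $SC_e$ can make every pair $u_i$ share a bit $k_i$ that is matched ($k_i^1=k_i^2$), uniform, and independent across pairs. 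The asymmetry that makes a gap possible is that, by contrast, the KRP adversary still sees all announcements $\{p_e\}$ even for $E^{\rm adv}=\varnothing$, so the KRP secrecy clause $I(K_1,\dots,K_{n_{\rm pair}}:\{p_e\})=0$ remains a genuine constraint. Thus KRP-by-SNC carries no secrecy burden but only weak, directed communication, whereas KRP carries a real secrecy burden but enjoys symmetric local key sources and free public coordination.

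For the existence half I would take $G_0$, the user configuration, and $L_{\rm KRP}$ as constructed in Sections \ref{sec:construction_of_counterexample} and \ref{sec:L_KRP_defined}, and verify directly that $L_{\rm KRP}$ is secure against $\{\varnothing\}$. Because $L_{\rm KRP}$ is linear, this is a finite $\mathbb{F}_2$ computation: soundness amounts to checking that each relayed key is a uniform local-key combination, matched at the two users and jointly independent across pairs, while secrecy amounts to checking that every public message $p_e$ is a combination of local keys that is linearly independent of the key functionals, so that $I(K_1,\dots,K_{n_{\rm pair}}:\{p_e\})=0$.

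The crux is the impossibility of KRP-by-SNC, and here one must defeat \emph{every} protocol, including nonlinear, network-coded and adaptive ones, so a ``routing fails'' argument will not do. I would set up the usual information-theoretic framework: the players' private randomness is mutually independent, each output is a deterministic function of previously received bits and local randomness, and each $SC_e$ delivers exactly one bit in exactly one direction. The argument then combines two ingredients. (i) A cut bound: for any bipartition $(S,\bar S)$ the sides interact only through the $|\partial S|$ crossing channels, so $I(\text{view}_S:\text{view}_{\bar S})\le|\partial S|$, and since each matched uniform cross-cut key contributes one bit to this mutual information, the number of cross-cut pairs is at most $|\partial S|$. (ii) A directional refinement encoding that, unlike the symmetric source $LKS_e$ (which hands the same bit to both endpoints and so can feed a key in either direction), each $SC_e$ is committed to a single orientation; on the specific $G_0$ the orientations demanded by the different pairs' agreements must be shown mutually unsatisfiable, even under coding. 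The footnote confirms this is precisely the difficulty: on a directed graph the separation is immediate, and it is exactly the freedom to orient the undirected edges that has to be defeated.

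I expect ingredient (ii) to be the main obstacle. Since $L_{\rm KRP}$ itself realizes the key configuration, all cut bounds on $G_0$ are automatically satisfiable, so the impossibility \emph{cannot} come from any cut-set/capacity violation — it must come entirely from the single-direction commitment, and the hard part is to convert ``no consistent orientation supports all pairs'' into an information inequality that also rules out coded and adaptive strategies. Concretely, I would try to chain the entropy bound either across a family of cuts chosen according to the structure of $G_0$, or along a causal node ordering induced by the single-use channels, so as to conclude that the joint entropy of the $n_{\rm pair}$ matched keys is strictly below $n_{\rm pair}$, which contradicts the soundness requirement that they be independent and uniform and thereby establishes that no secure KRP-by-SNC can exist.
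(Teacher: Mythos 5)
Your framing is sound and matches the paper's in several respects: the reduction of the non-existence half to a pure soundness impossibility (the adversary's view in a KRP-by-SNC with ${\cal E}^{\rm adv}=\{\varnothing\}$ is indeed empty), the verification of $L_{\rm KRP}$ as a finite $\mathbb{F}_2$ computation (this is exactly how the paper's appendix proceeds), and the diagnosis that no cut-capacity argument can work because $L_{\rm KRP}$ itself saturates every cut, so the obstruction must come from the single-use, single-direction commitment of each $SC_e$. But the proposal stops precisely where the proof begins: everything after ``I would try to chain the entropy bound\dots'' is a research plan, not an argument, and the step you flag as the main obstacle is in fact the entire content of the paper's Sections \ref{sec:proof_of_non_existence}--\ref{sec:proof_of_lemma_at_least_two_condition}. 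Concretely, two ingredients are missing. First, your cut bound $I(\mathrm{view}_S:\mathrm{view}_{\bar S})\le|\partial S|$ is used in the paper not as an inequality but as an \emph{equality}: the cuts in $G_0$ are chosen so that the number of crossing edges equals the number of independent uniform keys that must cross, which forces (via Lemma \ref{th:main_2}) a \emph{bijection} between the crossing secret bits and the keys. This yields Lemma \ref{lm:cons_chan} --- every secret bit on a standard path equals the corresponding relayed key up to a constant --- which is far stronger than a counting bound and is the lever for everything that follows.

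Second, and decisively, the contradiction is not an entropy deficit (``joint entropy of the keys strictly below $n_{\rm pair}$'') but a \emph{cyclic temporal ordering}. The paper introduces the total order $\prec$ in which the single-use channels fire, shows that each key is generated at one point of its standard path and propagates monotonically outward (Lemma \ref{lm:cons_ord}), and then, via a global combinatorial argument over all eighteen butterfly subgraphs (Lemma \ref{lm:cons_cont}, one step of which is verified by exhaustive computer search over candidate edge subsets), forces some subgraph $G^{\rm bn}_{s,i}$ to simultaneously satisfy requirements R1--R4 of Lemma \ref{lmm:cg_graph_property}. Inside that one butterfly, two independent bits must cross in opposite directions through shared single-use internal edges; tracking when the middle edge $e^{(7)}$ can fire shows it must be used both before $e^{(1)}$ and after $e^{(2)}$ while $e^{(1)}\prec e^{(2)}$ --- a contradiction in the order, not in the entropies. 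Neither the saturated-cut bijection, nor the ordering lemmas, nor the localization to a single butterfly appears in your proposal, so as written it does not establish the non-existence half of the statement.
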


Hence there is a definite gap in security between KRP-by-SNC and KRP.
Combined with Lemma \ref{thm:SNC_var}, this also means that there is a definite gap in security between KRP and the conventional SNC, which completes the proof of 
Theorem \ref{crl:SNCs_and_KRs_are_not_equivalent}.

The rest of this section is devoted to the proof of Lemma \ref{thm:more_secure_than_SNCs}.
The outline of the proof is as follows.
First, we define a graph $G_0$ and a configuration of user pairs $u_i$ (Section \ref{sec:construction_of_counterexample}), as well as a KRP $L_{\rm KRP}$ compatible with them (Section \ref{sec:L_KRP_defined}).
Then we show that $L_{\rm KRP}$ is secure secure against $\mathcal E^{{\rm adv}, G_0}:=\{\varnothing\}$, i.e., secure when no edge is wiretapped (Section \ref{sec:security_of_L}).
Then we show that there exists no secure KRP-by-SNC compatible with the same $G_0$ and $u_i$ even when no edge is wiretapped (Section \ref{sec:proof_of_non_existence}).

\subsection{Notation}
For ease of notation, we will often write $r_e$ and $p_e$ defined in Definition \ref{dfn:random_source} as $r[e]$ and $p[e]$.
Similarly, we often write $s_e$ defined in Definition \ref{dfn:channels} as $s[e]$.

\subsection{Description of the graph $G_0$ and the user configuration $u_i$}
\label{sec:construction_of_counterexample}

\subsubsection{Description using figures}
We define the  graph $G_0$ by a nested structure as in Figs. \ref{fig:sub-graph}, \ref{fig:G_0}, and \ref{fig:connectivity}.

That is, we first define a sub-graph $G^{\rm bn}$ by Fig. \ref{fig:sub-graph}, which is in fact the well-known modified butterfly network \cite{ho_lun_2008}.
Then we construct the graph $G_0$ by connecting subgraphs\footnote{Copies of the subgraph $G^{\rm bn}$ labeled with indices $s,i$.} $G^{\rm bn}_{s,i}$ and users $u_i^j$, as in Figs. \ref{fig:G_0} and \ref{fig:connectivity}.

\begin{figure}[htbp]
\centering
\includegraphics[scale=0.3]{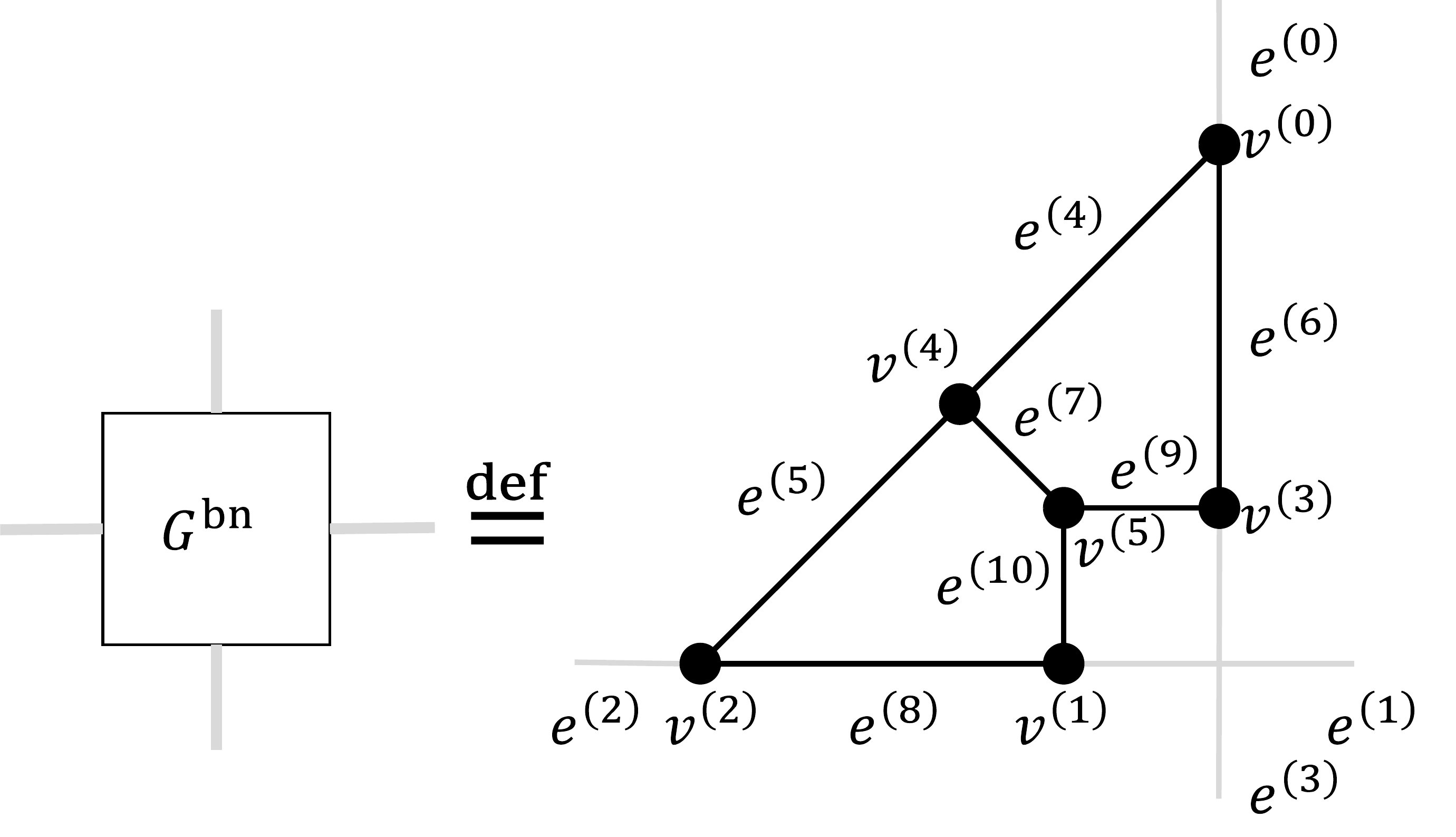}
 \caption{
Sub-graph $G^{\rm bn}$ is defined by the black nodes and edges on the right hand side.
This subgraph $G^{\rm bn}$ is in fact the well-known modified butterfly network.
When there are multiple copies of $G^{\rm bn}$, we distinguish them as $G^{\rm bn}_{s,i}$ by indices $s,i$.
Gray edges are the external edges which connect $G^{\rm bn}_{s,i}$ with another subgraph $G^{\rm bn}_{s',i'}$ or with a user $u_i^1$ or $u_i^2$.
 }
 \label{fig:sub-graph}
\end{figure}
\begin{figure}[htbp]
\centering
 \includegraphics[scale=0.5]{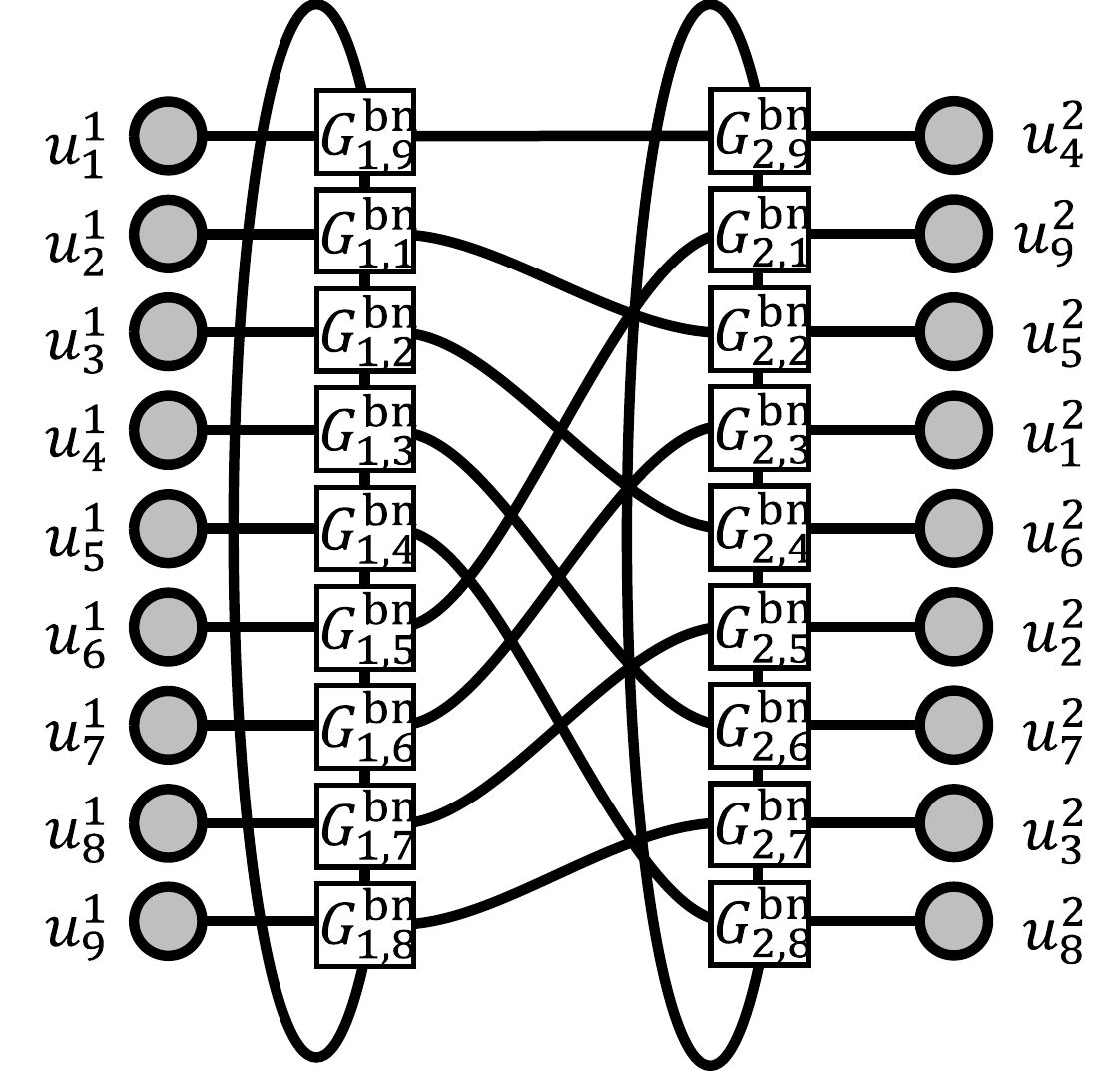}
 \caption{The graph $G_0$ and the user configuration $u_i$.
Subgraphs $G^{\rm bn}_{s,i}$ are copies of the subgraph $G^{\rm bn}$ defined in Fig. \ref{fig:sub-graph}.
Edges are wired according to the rule of Fig. \ref{fig:connectivity}.
}
 \label{fig:G_0}
\end{figure}

\begin{figure}[htbp]
\centering
\includegraphics[scale=0.3]{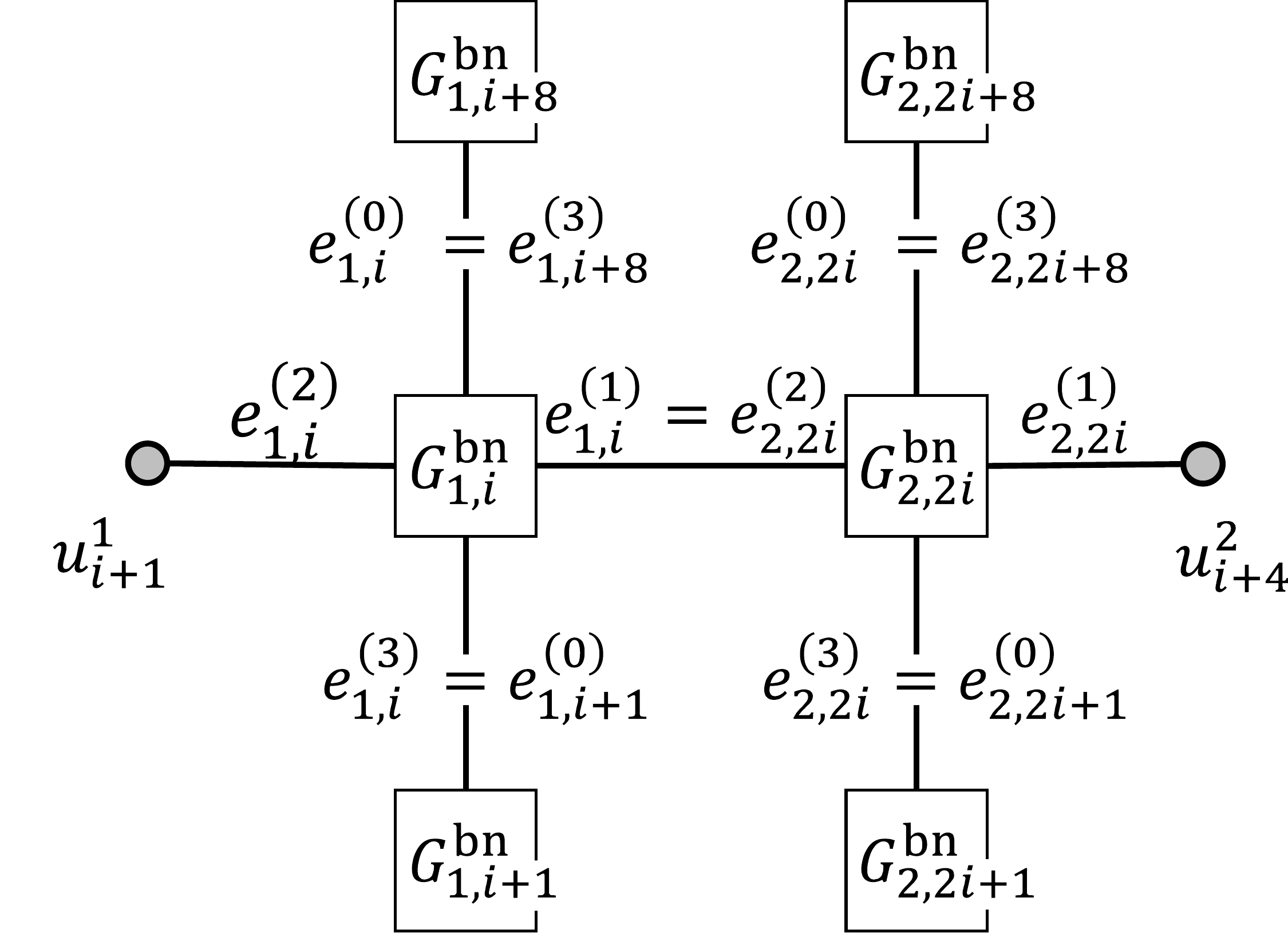}
 \caption{
The fundamental wiring rule of the graph $G_0$ depicted in Fig. \ref{fig:G_0}.
One can reconstruct graph $G_0$ by repeating this rule.
 }
\label{fig:connectivity}
 \end{figure}

\subsubsection{Alternative description using equations}

The same graph $G_0=(V_0,E_0)$ and the user pairs $u_i=(u_i^1,u_i^2)$ can also be defined using equations, as follows.

The node set $V_0$ consists of the user pairs $u_i^s$, and of the nodes $v_{s,i}^{(\alpha)}$ composing subgraphs $G^{\rm bn}_{s,i}$, where indices $s,i,\alpha$ are in the ranges $s\in\{1,2\}$, $i\in\mathbb Z/9\mathbb Z$\footnote{We use this notation to suggest that modulo 9 is implied in arithmetic involving variable $i$.} and $\alpha\in\{0,1,2,3,4,5\}$.
 
The edge set $E_0$ consists of the internal edges of subgraphs $G^{\rm bn}_{s,i}$,
\begin{align}
\{v^{(0)},v^{(4)}\}&=e^{(4)},
&\{v^{(2)},v^{(4)}\}&=e^{(5)},
&\{v^{(0)},v^{(3)}\}&=e^{(6)},
\nonumber\\
\{v^{(4)},v^{(5)}\}&=e^{(7)},
&\{v^{(2)},v^{(1)}\}&=e^{(8)},
&\{v^{(5)},v^{(3)}\}&=e^{(9)},
\nonumber\\
\{v^{(5)},v^{(1)}\}&=e^{(10)},
\end{align}
edges connecting different subgraphs $G^{\rm bn}_{s,i}$, 
\begin{align}
\{v_{1,i}^{(1)},v_{2,2i}^{(2)}\}&=e_{1,i}^{(1)}=e_{2,2i}^{(2)},\\
\{v_{s,i}^{(3)},v_{s,i+1}^{(0)}\}&=e_{s,i}^{(3)}=e_{s,i+1}^{(0)},
\end{align}
and edges connecting subgraphs $G^{\rm bn}_{s,i}$ and users $u_i^s$:
\begin{align}
\{u_{i+1}^{1},v_{1,i}^{(2)}\}=e_{1,i}^{(2)},\ \{v_{2,2i}^{(1)},u_{i+4}^{2}\}&=e_{2,2i}^{(1)},
\end{align}
where $s\in\{1,2\}$, $i\in \mathbb Z/9\mathbb Z$.



\subsubsection{Notation related with  $G^{\rm bn}_{s,i}$}
Below, for ease of notation, we will often suppress subscripts $s,i$ (corresponding to one of subgraphs $G^{\rm bn}_{s,i}$), if it is clear from the context which subgraph $G^{\rm bn}_{s,i}$ we focus on.
Also, whenever we say a subgraph $G_{s,i}^{\rm bn}$ is a sender/receiver, it means that a node inside $G_{s,i}^{\rm bn}$ is a sender/receiver.

\subsection{KRP $L_{\rm KRP}$ compatible with $G_0$ and $u_i$}

\subsubsection{Construction of $L_{\rm KRP}$}
\label{sec:L_KRP_defined}
%
%
%


\paragraph{First step}
\label{sec:L_KRP_defined_first_step}
%
The goal here is that: For all $s,i$ and $\beta\in\{0,2\}$, the node $v_{s,i}^{(\beta)}$ sends  the local key $r[e_{s,i}^{(\beta)}]$ to the node 
$v_{s,i}^{(\beta+1)}$ secretly.

%
In order to realize this task, we use the following idea:
Note that, if secret channels $SC_e$ were available, this task could be realized by using the well-known modified butterfly network coding in each sub-graph $G_{s,i}^{\rm bn}$ \cite{ho_lun_2008}.
However, since we do not have secret channels $SC_e$ in the present setting, we emulate them with the one-time pad (OTP) encrpytion using local keys supplied by $LKS_{e}$ and with public communication on $PC_e$ (as we did in the proof of Theorem \ref{thm:SNC_and_KR}; also see Fig. \ref{fig:equivalence_model_SNC}).
Namely, whenever a node $v^{(\alpha)}$ wishes to secretly transmit a bit $r$ to an adjecent node $v^{(\alpha')}$, it encrypts $r$ using the key supplied by $LKS_{e}$ on the edge $e$ between $v^{(\alpha)}$ and $v^{(\alpha')}$, and sends it to $v^{(\alpha')}$ via the public channel $PC_e$.
In the following, we often write this emulated secret transmission as $v^{(\alpha)}\rightarrow v^{(\alpha')}:r$.

Due to this idea and notation, our first step of $L_{\rm KRP}$ takes the following form (Fig. \ref{fig:BNC}):
\begin{align}
v^{(0)}\rightarrow v^{(3)}:&r[e^{(0)}]
\\
v^{(0)}\rightarrow v^{(4)}:&r[e^{(0)}]
\\
v^{(2)}\rightarrow v^{(1)}:&r[e^{(2)}]
\\
v^{(2)}\rightarrow v^{(4)}:&r[e^{(2)}]
\\
v^{(4)}\rightarrow v^{(5)}:&r[e^{(0)}]\oplus  r[e^{(2)}]
\\
v^{(5)}\rightarrow v^{(1)}:&r[e^{(0)}]\oplus  r[e^{(2)}]
\\
v^{(5)}\rightarrow v^{(3)}:&r[e^{(0)}]\oplus  r[e^{(2)}],
\end{align}
and $v^{(1)}$ and $v^{(3)}$ obtain the values $(r[e^{(0)}]\oplus  r[e^{(2)}])\oplus r[e^{(2)}]$ and $(r[e^{(0)}]\oplus  r[e^{(2)}])\oplus  r[e^{(0)}]$
respectively.

\begin{figure}[htbp]
\centering
  \includegraphics[scale=0.3]{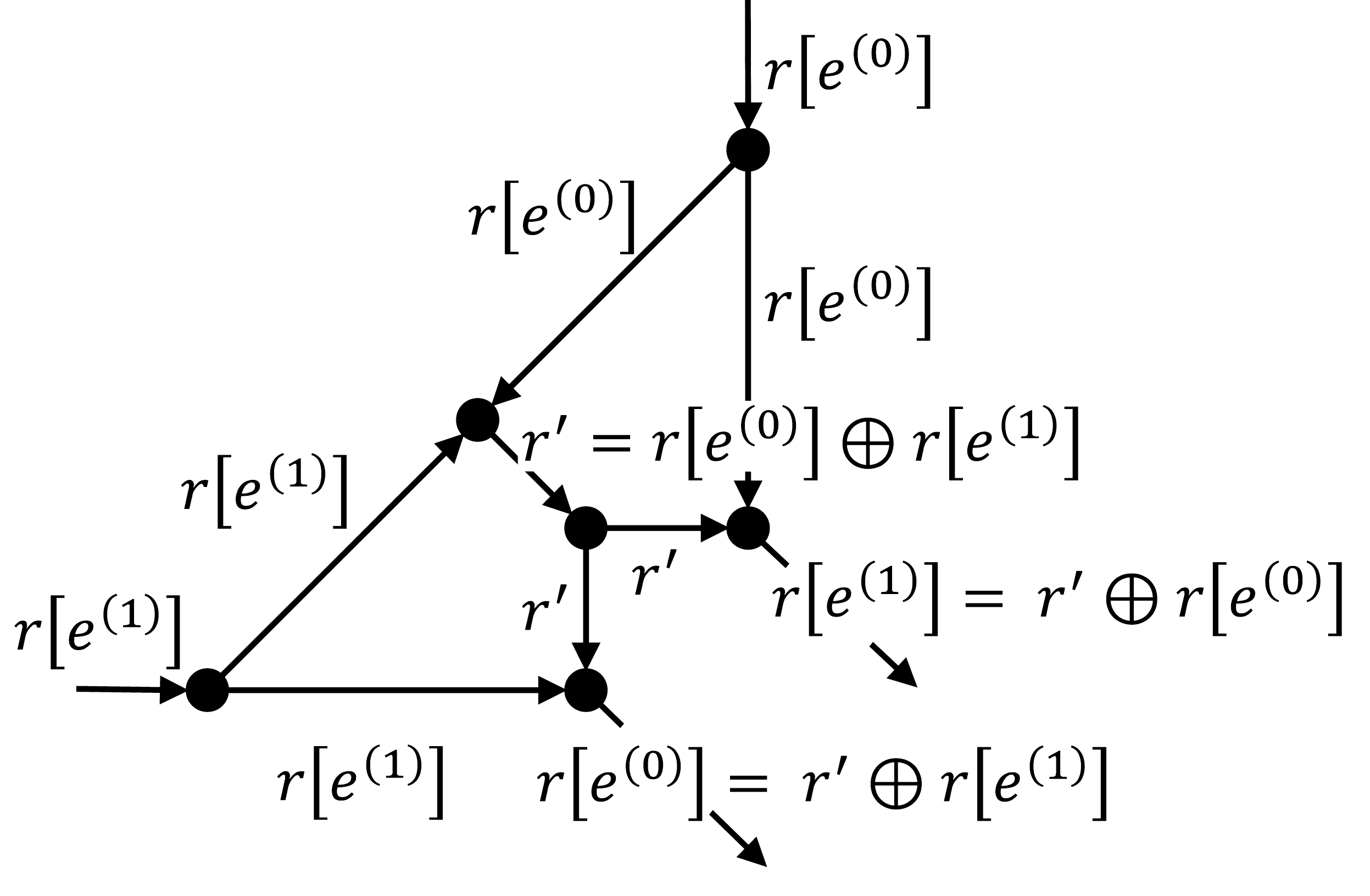}
 \caption{
The first step of $L_{\rm KRP}$, given in Section \ref{sec:L_KRP_defined_first_step}.
This is essentially the same as the well-known modified butterfly network coding.
}
 \label{fig:BNC}
\end{figure}

\paragraph{Second step}
Next, using the local keys $r[e_{s,i}^{(\beta)}]$ thus obtained, each user pair $u_i=(u_i^1,u_i^2)$ shares a relayed key.
They do this by performing the serial KRP of Fig. \ref{fig:key_relay_example2} (a), on a straight path consisting of  edges
  $v_{1,i+8}^{(3)}$, 
  $v_{1,i}^{(1)}$, 
  $v_{2,2i}^{(3)}$ 
  and $v_{2,2i+1}^{(1)}$.

Namely,  the nodes in the middle of the path announce 
$r[e^{(2)}_{1,i+8}]\oplus r[e^{(0)}_{1,i}]$,
%
$r[e^{(0)}_{1,i}]\oplus r[e^{(2)}_{2,2i}]$,
%
$r[e^{(2)}_{2,2i}]\oplus r[e^{(0)}_{2,2i+1}]$ and
%
$r[e^{(0)}_{2,2i+1}]\oplus r[e^{(1)}_{2,2i+1}]$, and then
the user $u_i^2$ obtains the bit $r[e^{(2)}_{1,i+8}]$
by summing up the published bits and the local key $r[e^{(1)}_{2,2i+1}]$.

\subsubsection{Security of $L_{\rm KRP}$}
\label{sec:security_of_L}

From the construction above, we immediately have the following lemma.

\begin{Lmm}[Security of $L_{\rm KRP}$]
\label{lmm:security_L}
The  KRP $L_{\rm KRP}$ is secure against wiretap sets ${\cal E}^{{{\rm adv},G_0}}=\{\varnothing\}$. 
\end{Lmm}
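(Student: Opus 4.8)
The plan is to verify the two requirements of Definition~\ref{dfn:security_criteria}---soundness and secrecy---separately, exploiting the fact that the wiretap set is the single empty set $\varnothing$, so the adversary never obtains any local key $r[e]$ directly and the only information contained in $A(\varnothing)$ is the collection of public messages $\{p[e]\}_{e\in E_0}$.

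For soundness I would first identify the relayed key of pair $u_i$ with the single source bit $r[e^{(2)}_{1,i+8}]$ and argue that both $u_i^1$ and $u_i^2$ reconstruct exactly this bit. The first step is the modified butterfly network code emulated by the one-time pad, whose correctness is the same $\FF_2$-linear identity as the standard butterfly code: the nodes $v^{(1)}$ and $v^{(3)}$ recover $r[e^{(2)}]$ and $r[e^{(0)}]$ from the relayed value $(r[e^{(0)}]\oplus r[e^{(2)}])$ together with the locally held $r[e^{(0)}]$ or $r[e^{(2)}]$. The second step is the serial KRP of Fig.\ \ref{fig:key_relay_example2}(a), where summing the announced consecutive differences telescopes so that the far endpoint recovers the source bit; hence $k_i^1=k_i^2$. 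Uniformity of each $k_i$ is immediate because it equals a single fresh output of some $LKS_e$, and independence across pairs follows once I check that distinct pairs are assigned distinct source edges (the index $i+8\bmod 9$ is injective in $i$).

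For secrecy I would decompose $A(\varnothing)=\{p[e]\}$ into the first-step ciphertexts and the second-step announcements. Each first-step ciphertext has the form $p[e]=(\text{content})\oplus r[e]$, where $r[e]$ is the local key of the internal edge carrying that transmission; since every $LKS_e$ is used only once and no edge is wiretapped, these masks are fresh, mutually independent, and independent of the source bits, so the joint distribution of all first-step ciphertexts is uniform and independent of the relayed keys (the standard one-time-pad argument). The second-step announcements are sums of two source bits each, arranged as the consecutive differences along the chain of each serial KRP; by shift-invariance of differences (adding $1$ to every source bit on a chain leaves all announced differences unchanged), they reveal nothing about the endpoint bit that serves as the relayed key. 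Combining the two observations gives $I(K_1,\dots,K_{n_{\rm pair}}:A(\varnothing))=0$.

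The main obstacle is the bookkeeping needed to make this verbal ``OTP $+$ serial KRP'' argument rigorous at the level of the whole graph $G_0$ rather than one subgraph at a time. The cleanest way I would make it airtight is to work over $\FF_2$: since $L_{\rm KRP}$ is linear, every $p[e]$ and every $k_i$ is a fixed linear functional of the uniform random vector $(r[e])_{e\in E_0}$, and $I(K:A(\varnothing))=0$ holds precisely when no nonzero combination $\sum_i c_i k_i$ lies in the $\FF_2$-span of the functionals $\{p[e]\}$. The first-step masks contribute a block of fresh coordinates linearly independent of the source edges, so the condition reduces to checking that the second-step difference functionals span no nonzero combination of the relayed-key functionals---exactly the telescoping/chain property together with the assignment of edge-disjoint source bits to distinct pairs. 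Establishing this linear independence uniformly over all $i\in\ZZ/9\ZZ$, given the overlapping index pattern ($i$, $2i$, $2i+1$) of the wiring in Figs.\ \ref{fig:G_0}--\ref{fig:connectivity}, is the one place that requires genuine care.
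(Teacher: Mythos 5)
Your proposal is correct and follows essentially the same route as the paper's proof: the paper likewise reduces $L_{\rm KRP}$ to an equivalent protocol $L_{\rm KRP}^{\rm m}$ in which every public message is a pure $\FF_2$-parity of local keys, verifies soundness by the telescoping identity $k_i^1=k_i^2=r[e^{(2)}_{1,i+8}]$, and concludes secrecy from the linear independence of the relayed-key functionals from the span of the published parities. The only cosmetic difference is that the paper writes out all node parities explicitly rather than organizing them as ``first-step ciphertexts plus second-step announcements,'' but the underlying linear-algebraic argument is identical.
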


\begin{proof}
We give a detailed proof in Appendix.
The basic idea of the proof is as follows.

The security (the secrecy and the soundness) of the emulated secret channels used in the first step is obvious by construction, and thus the security of the local key $r[e_{s,i}^{(\beta)}]$ 
is guaranteed by that of the modified butterfly network coding.
The security of the serial KRP is also obvious by construction.
These two facts together guarantee the security of  $L_{\rm KRP}$.
\end{proof}

\subsection{Proof that there exists no secure KRP-by-SNC compatible with $G_0$ and $u_i$}
\label{sec:proof_of_non_existence}
Lemma \ref{thm:more_secure_than_SNCs} asserts that there exists no secure KRP-by-SNC compatible with $G_0$ and $u_i$ defined above, even when no edge is wiretapped.
Below we prove this assertion by first supposing that such KRP-by-SNC exists and then deriving a contradiction.

In order to describe the contradiction, it is convenient to number edges $e$ according to the time when the secret channel $SC_e$ is used.
Obviously, such numbering is possible for any KRP-by-SNC.
Formally, this numbering is equivalent to the following total order $\prec$.
\begin{Dfn}[Total order $\prec$ of edges]
\label{def:order_edge_use}
Given a KRP-by-SNC $L$ on a graph $G=(V,E)$, we write $e\prec e'$ ($e\in E$ is smaller than $e'\in E$) if  the secret channel $SC_e$ is used before $SC_{e'}$ is used, in $L$.
\end{Dfn}

Now, if we suppose that there exists a secure KRP-by-SNC $L_0$ (compatible with $G_0$, $u_i$, and ${\cal E}^{{\rm adv}}=\{\varnothing\}$), $L_0$ must satisfy the condition of the following lemma.
\begin{Lmm}
\label{lmm:cg_graph_property}

If a KRP-by-SNC $L_0$ compatible with $G_0$ and $u_i$ is secure against ${\cal E}^{\rm adv}=\{\varnothing\}$, 
then at least one of subgraphs $G_{s,i}^{\rm bn}\subset G_0$ must satisfy the following four requirements:
\begin{enumerate}[\rm R1]
\item For $\beta\in\{0,2\}$ each, the secret bits conveyed on the edges  $e_{s,i}^{(\beta)}$ and  $e_{s,i}^{(\beta+1)}$ are completely random and completely correlated,  i.e. $I(S[e_{s,i}^{(\beta)}]:S[e_{s,i}^{(\beta+1)}])=1$.
\item The secret bit on $e_{s,i}^{(0)}$ is independent of that on $e_{s,i}^{(2)}$, i.e. $I(S[e_{s,i}^{(0)}]:S[e_{s,i}^{(2)}])=0$. 
\item 
For $\beta\in\{0,2\}$ each, $G_{s,i}^{\rm bn}$ is the sender of the larger edge of  $e_{s,i}^{(\beta)}$
 and $e_{s,i}^{(\beta+1)}$.
 \item $G_{s,i}^{\rm bn}$ is the receiver of the second largest edge in the set $\{e_{s,i}^{(\alpha)}\}_{\alpha\in\{0,1,2,3\}}$.
\end{enumerate}
 \end{Lmm}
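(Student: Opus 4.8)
The plan is to prove Lemma \ref{lmm:cg_graph_property} by a counting/propagation argument that tracks how the required randomness and correlations must be \emph{created} somewhere in $G_0$, and then shows that this forced creation can only happen inside a single subgraph $G_{s,i}^{\rm bn}$ that meets all four requirements. The starting point is the global structure of $L_{\rm KRP}$ and the soundness/secrecy of the hypothetical secure KRP-by-SNC $L_0$: since every user pair $u_i$ must share a perfectly correlated, uniformly random key even with $\mathcal E^{\rm adv}=\{\varnothing\}$, the relayed keys $k_i$ carry a definite amount of entropy that must be transported across $G_0$ using the secret channels $SC_e$ alone. Because the only resources are single-use secret channels (no $LKS_e$, no $PC_e$), each edge $e$ can carry at most one bit $s[e]$, so the whole argument can be phrased as an information-flow bound along the total order $\prec$ of Definition \ref{def:order_edge_use}.

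The first concrete step I would carry out is to formalize requirements R1 and R2 as statements about mutual information of the secret bits on the internal edges $e^{(\beta)}$ of a subgraph, and to show that \emph{some} subgraph must realize the butterfly-network correlation pattern. The idea is that the modified butterfly network $G^{\rm bn}$ is the unique bottleneck structure in $G_0$ through which the two independent incoming keys (on $e^{(0)}$ and $e^{(2)}$) must be simultaneously forwarded to two different outgoing edges ($e^{(1)}$ and $e^{(3)}$) using only the available internal edges. I would argue that if \emph{no} subgraph achieved R1 and R2 simultaneously, then by tracing the correlations back through the wiring rules of Figs.~\ref{fig:G_0} and \ref{fig:connectivity} one could exhibit a user pair whose key is either not perfectly correlated (violating soundness) or not independent of the other pairs' keys (violating the soundness clause that different pairs' keys are independent). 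This is essentially a cut/min-cut style argument: the nested structure of $G_0$ is engineered so that the required key agreement forces full correlation across the two $e^{(\beta)}$–$e^{(\beta+1)}$ pairs and forces independence between the $e^{(0)}$ and $e^{(2)}$ streams inside at least one copy.

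The second step is to upgrade R1 and R2 (statements purely about correlations) to R3 and R4 (statements about the \emph{direction} and \emph{timing} of channel use, i.e.\ who is sender/receiver and which edge is used later in $\prec$). Here the key observation is causal: if $S[e^{(\beta)}]$ and $S[e^{(\beta+1)}]$ are perfectly correlated and uniformly random, then the correlation on the later-used of the two edges cannot have been created out of nothing; the bit on the larger edge must be a deterministic function of data already received, and in particular the subgraph $G_{s,i}^{\rm bn}$ must be the party \emph{generating and sending} that bit. This gives R3. For R4, I would combine the fact that two independent random bits (forced by R2) must both enter the subgraph with a dimension/rank count: among the four edges $\{e^{(\alpha)}\}_{\alpha\in\{0,1,2,3\}}$ incident to the relevant ports, the subgraph can be the sender on the two larger edges (by R3), but the second-largest edge must be one on which the subgraph \emph{receives}, since otherwise there would be insufficient incoming entropy to support two independent forwarded streams — forcing R4.

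I expect the main obstacle to be the step that pins down R4 and, more generally, the argument that the four requirements can be made to hold \emph{simultaneously in a single subgraph} rather than scattered across different copies. The correlation facts R1–R2 are relatively robust, but converting them into a consistent statement about the total order $\prec$ — that the \emph{same} subgraph is the sender of the larger edges and the receiver of the second-largest edge — requires carefully ruling out schedules in which a subgraph imports partial information early and completes the correlation pattern through a roundabout path in $G_0$. The cleanest route is probably to assume all four requirements fail for every subgraph and derive, via a global information-theoretic inequality summed over all copies (an entropy-conservation or Menger-type counting bound along $\prec$), that the total transportable entropy is strictly less than what soundness for all $n_{\rm pair}$ user pairs demands, yielding the contradiction. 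Making that global inequality tight enough to force existence in \emph{one} subgraph, rather than merely on average, is the delicate part and is where I would expect to spend most of the effort.
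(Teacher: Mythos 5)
There is a genuine gap, and it sits exactly where you predicted it would: the existential claim R4. But the problem is worse than ``delicate'' --- the local argument you sketch for R4 cannot work. You argue that the subgraph ``must receive on the second-largest edge, since otherwise there would be insufficient incoming entropy to support two independent forwarded streams.'' If that local entropy bound were valid it would establish R4 for \emph{every} subgraph $G_{s,i}^{\rm bn}$, but R4 is genuinely false for some subgraphs in a legitimate schedule (e.g.\ a subgraph in which one of the two relayed keys passing through it is \emph{generated} sends on both edges of that standard path, and the ordering of the four external edges can then place a sent edge second-largest). The paper only ever claims R4 for \emph{at least one} subgraph, and obtains it by assuming R4 fails everywhere and deriving a contradiction with a purely combinatorial scheduling lemma (Lemma \ref{lm:cons_cont}): for any total order $\prec$ one can construct an edge subset $E'\subsetneq E_0^{\rm st}$ containing the first-used edge of every standard path and closed under certain wiring-induced implications, and the existence of such an $E'$ is incompatible with the conditions C1, C2 that follow from the failure of R4. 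That construction requires an explicit family of candidate subsets $E^{(1)},E^{(2)},E^{(3)}_{i,i'},E^{(4)}_q$ and a brute-force verification; your proposed ``entropy-conservation or Menger-type counting bound summed over all copies'' is not a substitute, and you give no indication of how to make such a sum both valid and tight enough to localize the contradiction.

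A second, structural issue: you frame R1--R2 as holding for \emph{some} subgraph, to be located by a cut argument. In the paper R1, R2 and R3 hold for \emph{every} subgraph, as consequences of the standard-path machinery: Lemma \ref{lm:cons_chan} shows that every secret bit $S[e_i^{\langle\gamma\rangle}]$ on a standard path equals the relayed key $K_i$ up to a constant (proved by cutting $G_0$ along small edge sets and invoking a bijectivity lemma), and Lemma \ref{lm:cons_ord} shows the key is generated at one point of the path and propagated outward, which gives R3. This universality is essential --- if R1--R3 held only for some subgraph and R4 for some possibly different subgraph, the lemma would not follow. Your proposal lacks the standard-path decomposition entirely, and without it neither the universal R1--R3 nor the combinatorial setup for R4 is available.
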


We will prove this lemma in Section \ref{sec:KRP_by_SNC_satisfies_Ri}.




However, we can also show that no KRP-by-SNC can satisfy such condition:
\begin{Lmm}
\label{lmm:property_butterfly_network}
In any KRP-by-SNC $L_0$ compatible with $G_0$ and $u_i$, no subgraph $G_{s,i}^{\rm bn}$ can satisfy the four requirements R1, $\dots$, R4 of Lemma \ref{lmm:cg_graph_property} simultaneously.
\end{Lmm}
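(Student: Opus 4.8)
The plan is to argue by contradiction: suppose some subgraph $G^{\rm bn}_{s,i}$ satisfies R1--R4 simultaneously, and derive an impossible routing situation inside the modified butterfly. Throughout I suppress the indices $s,i$, write the four external edges as $e^{(0)},e^{(1)},e^{(2)},e^{(3)}$ (incident respectively to the boundary nodes $v^{(0)},\dots,v^{(3)}$), and group them into the two pairs $P_1=\{e^{(0)},e^{(1)}\}$ and $P_2=\{e^{(2)},e^{(3)}\}$ of R1. Let $f_1\prec f_2\prec f_3\prec f_4$ list the four external edges in the order $\prec$ of Definition \ref{def:order_edge_use}.

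First I would extract a purely combinatorial consequence of R3 and R4. Being the largest of all four, $f_4$ is in particular the larger edge of its own pair, so R3 makes $G^{\rm bn}_{s,i}$ its sender. Suppose, for contradiction, that $f_3$ lay in a different pair than $f_4$. Then the pair-partner of $f_3$ is one of $f_1,f_2$, hence smaller than $f_3$, so $f_3$ would be the larger edge of its own pair; R3 would make $G^{\rm bn}_{s,i}$ the \emph{sender} of $f_3$, while R4 makes it the \emph{receiver} of $f_3$ --- impossible, since on a single (once-used) secret channel the subgraph occupies exactly one endpoint. Hence $f_3,f_4$ lie in the same pair, and the other pair consists precisely of the two earliest edges $f_1,f_2$. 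Thus the late pair $\{f_3,f_4\}$ carries one of the two independent bits $X:=S[e^{(0)}]=S[e^{(1)}]$, $Y:=S[e^{(2)}]=S[e^{(3)}]$ (each uniform by R1, and $X\perp Y$ by R2), and the early pair $\{f_1,f_2\}$ carries the other; say the late pair carries $X$ and the early pair carries $Y$.

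Next I would convert this timing structure into an edge-disjointness statement for the internal edges $e^{(4)},\dots,e^{(10)}$. Since the subgraph touches the outside world only through the four external edges, and only $f_1,f_2$ have been used before $f_3$, everything the subgraph knows before $f_3$ is a function of $Y$ and of internal randomness, hence independent of $X$ by R2. Because $X$ is a non-degenerate random bit (R1), to emit the correct $X$ on $f_4$ (where $G^{\rm bn}_{s,i}$ is the sender) the subgraph must route it from the node of $f_3$ (where it is the receiver), so the internal edges carrying $X$-dependent content --- all used after $f_3$ --- contain a path joining the two nodes of the late pair. Symmetrically, the value $Y$ is fully settled by time $f_2\prec f_3$, whether it enters at $f_1$ or is generated internally and emitted on both early edges; by R1 this forces a path of internal edges used before $f_3$ joining the two nodes of the early pair. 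As each secret channel is used once and $\prec$ places every internal edge either before or after $f_3$, these two paths are edge-disjoint. Consequently R1--R4 force two edge-disjoint paths that simultaneously connect $\{v^{(0)},v^{(1)}\}$ and $\{v^{(2)},v^{(3)}\}$ inside the butterfly.

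The final, and I expect the essential, step is the graph-theoretic core: the modified butterfly admits no two edge-disjoint paths realizing the unicasts $v^{(0)}\!-\!v^{(1)}$ and $v^{(2)}\!-\!v^{(3)}$ at once --- exactly the classical obstruction that makes the butterfly require network coding rather than pure routing. I would verify this by a short finite enumeration over the seven internal edges: the nodes $v^{(1)}$ and $v^{(3)}$ have internal degree only two (edges $e^{(8)},e^{(10)}$ and $e^{(6)},e^{(9)}$ respectively), so there are just four simple $v^{(0)}\!-\!v^{(1)}$ paths, and for each one the edges it consumes leave $v^{(2)}$ and $v^{(3)}$ in different components. This contradicts the edge-disjoint paths produced above, completing the proof. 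The main subtlety to get right is the third paragraph --- rigorously turning the timing-and-independence constraints into the claim that the two commodities occupy disjoint internal edge sets --- whereas the butterfly non-routability of the last paragraph, though the conceptual heart, reduces to a bounded case check.
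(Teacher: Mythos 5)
Your proposal is correct and follows essentially the same route as the paper: the same R3/R4 clash forcing the two $\prec$-largest external edges into one R1-pair, the same conversion of timing plus R2-independence into two internal connecting paths with disjoint time windows, and the same final contradiction with the butterfly's lack of two edge-disjoint paths for the unicasts $v^{(0)}\!-\!v^{(1)}$ and $v^{(2)}\!-\!v^{(3)}$ (a step the paper leaves implicit but which your four-path enumeration verifies correctly).
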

This is a contradiction, and thus a secure $L_0$ cannot exist. This completes the proof of Lemma \ref{thm:more_secure_than_SNCs}.

\begin{proof}[Proof of Lemma \ref{lmm:property_butterfly_network}]




We suppose that one of subgraphs $G_{s,i}^{\rm bn}$ satisfies R1, $\dots$, R4, and derive a contradiction.

Below, as we concentrate on one such $G_{s,i}^{\rm bn}$ satisfying R1, $\dots$, R4, we omit subscripts $s,i$ for ease of notation, on the subgraph $G_{s,i}^{\rm bn}$, edges $e_{s,i}^{(\alpha)}$, and nodes $v_{s,i}^{(\alpha)}$.
Also, as KRP-by-SNC here uses only one type of channels, $SC_e$, we will often identify an edge $e$ with the secret channel $SC_e$.


First note that the pair of the two largest edges in $\{e^{(\alpha)}\}_{\alpha\in\{0,1,2,3\}}$ must either be $e^{(0)}$ and $e^{(1)}$, or be $e^{(2)}$ and $e^{(3)}$. 
This is because if it is not the case,
then R3 says that $G^{\rm bn}$($=G_{s,i}^{\rm bn}$) is the sender of the second largest edge, but this contradicts R4. 

Then, of all the remaining cases, we will below focus on one particular case where edges $e^{(\alpha)}$ are ordered as
\begin{itemize}
\item[1)]  $e^{(0)}\prec e^{(1)}\prec  e^{(2)}\prec e^{(3)}$,
\end{itemize}
and derive a contradiction.
 (Other cases, such as $e^{(2)}\prec  e^{(3)}\prec  e^{(1)}\prec e^{(2)}$, can also be shown similarly.) 
In this case, due to R1 and R2 we have
\begin{itemize}
\item[2)] $I(S[e^{(0)}];S[e^{(1)}])=1$,
\item[3)] $I(S[e^{(2)}];S[e^{(3)}])=1$,
\item[4)] $I(S[e^{(0)}];S[e^{(2)}])=0$.
\end{itemize}
Also, $G^{\rm bn}$ must be the sender of edges $e^{(1)}$ and $e^{(3)}$ due to R3, and also the receiver of $e^{(2)}$ due to R4.
Thus we have
\begin{itemize}
\item[5)] $v^{(1)}$ is the sender of $e^{(1)}$,
\item[6)] $v^{(2)}$ is the receiver of $e^{(2)}$,
\item[7)] $v^{(3)}$ is the sender of $e^{(3)}$.
\end{itemize}
From relations 1),$\dots$,7) above, we can also prove the following two relations,
\begin{itemize}
\item[8)] 
A series of edges connecting $v^{(0)}$ and $v^{(1)}$, e.g. $e^{(4)}$, $e^{(7)}$ and $e^{(10)}$, are smaller than  $e^{(1)}$,
\item[9)]
A series of edges connecting $v^{(2)}$ and $v^{(3)}$, e.g. $e^{(5)}$, $e^{(7)}$ and $e^{(9)}$, are larger than $e^{(2)}$ and smaller than $e^{(3)}$.
 \end{itemize}
However, these two relations contradict each other, and thus Lemma \ref{lmm:property_butterfly_network} is be proved.

Item 8) is obtained as follows:
In order to realize relations  1), 2), and 5), the information of the secret bit on $e^{(0)}$ must be transferred from $v^{(0)}$ to $v^{(1)}$ before using the edge $e^{(1)}$.
For this transfer, a series of edges connecting $v^{(0)}$ and $v^{(1)}$ must be used.
Here, we have used the fact that there is no correlation between nodes before executing protocol $L_0$.

Item 9) is obtained as follows:
Relations 2) and 4) imply that $I(S[e^{(0)}],S[e^{(1)}];S[e^{(2)}])=0$.
Combining this fact and relations 1) and 6), we find that, before the secret channel on $e^{(2)}$ is used, any random variable obtained on the nodes $\{v^{(\alpha)}\}_\alpha$ is independent of $S[e^{(2)}]$.
As a result, relations 1), 3) and 7) imply that, the secret bit on $e^{(2)}$ must be transferred from $v^{(2)}$ to $v^{(3)}$ before $e^{(3)}$ is used (and of course after $e^{(2)}$ is used). 
\end{proof}

\subsection{Proof of Lemma $\ref{lmm:cg_graph_property}$}
\label{sec:KRP_by_SNC_satisfies_Ri}

We first introduce the notion of the {\it standard path} along with three lemmas related with it, and then use them to prove Lemma \ref{lmm:cg_graph_property}.

\subsubsection{Standard path and the related lemmas}
\label{sec:alias}

\begin{Dfn}[Standard path $e_i^{\left<\gamma\right>}$]
For each user pair $u_i=(u_i^1,u_i^2)$, we define {\it  standard path} $i$ connecting them via subgraphs $G^{\rm bn}$ (more precisely,  subgraphs $G_{1,i+8}^{\rm bn}$,  $G_{1,i}^{\rm bn}$,  $G_{2,2i}^{\rm bn}$, and $G_{2,2i+1}^{\rm bn}$) as in Fig. \ref{fig:connectivity_cg}.
The standard path $i$ consists of the following five edges,
\begin{align}
e_{i}^{\left<0\right>}&:=e_{1,i+8}^{(2)},                    \nonumber\\
e_{i}^{\left<1\right>}&:=e_{1,i+8}^{(3)}=e_{1,i}^{(0)},     \nonumber\\
e_{i}^{\left<2\right>}&:=e_{1,i}^{(1)}=e_{2,2i}^{(2)},       \nonumber\\
e_{i}^{\left<3\right>}&:=e_{2,2i}^{(3)}=e_{2,2i+1}^{(0)},  \nonumber\\
e_{i}^{\left<4\right>}&:=e_{2,2i+1}^{(1)}.
\end{align}
\end{Dfn}

\begin{figure}[htbp]
\centering
\includegraphics[scale=0.3]{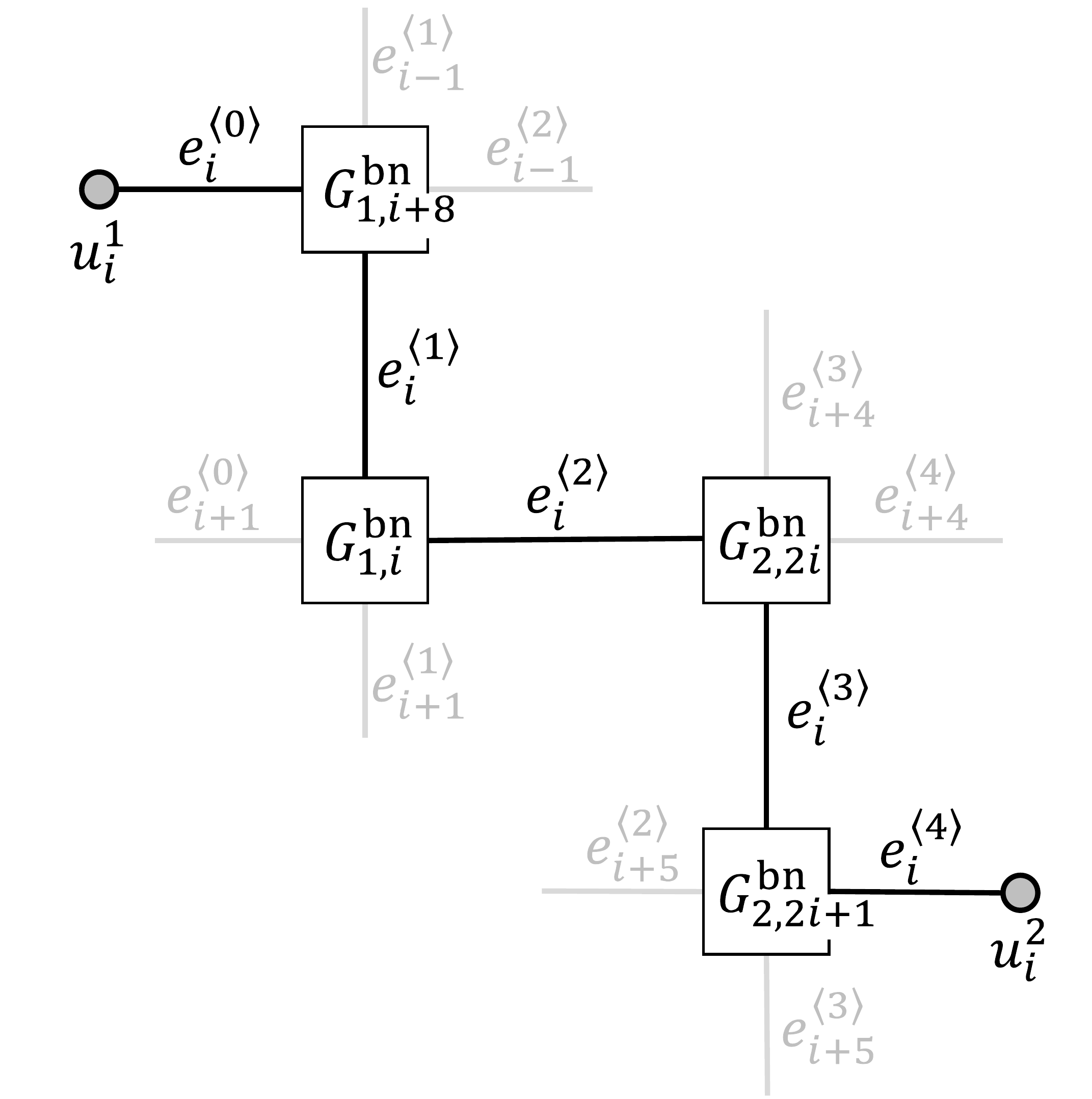}
 \caption{
The standard path $i$ ($\in\{1,\dots, 9\}$) consists of the five edges $\{e_i^{\left<\gamma\right>}\}_{\gamma\in\{0,1,2,3,4\}}$, shown in black above; also see Section \ref{sec:alias}.
Note that each standard path $i$ connects between user pair $u_i=(u_i^1,u_i^2)$.
Note also that the graph $G_0$ of Fig. \ref{fig:G_0} is a disjoint union of these paths and subgraphs $G_{s,i}^{\rm bn}$ (see Fig. \ref{fig:sub-graph}), and that every subgraph $G_{s,i}^{\rm bn}$ are connected to exactly two of these paths.
As we will see in Lemma \ref{lm:cons_chan}, in a secure KRP-by-SNC $L_0$, all the secret bits $S[e_i^{\langle \gamma\rangle}]$ transferred on the standard path $i$ must equal the relayed key $k_i^j$, up to constants.
}
 \label{fig:connectivity_cg}
\end{figure}


We can show that all the edges $e^{\langle \gamma\rangle}_i$ on a standard path $i$ conveys essentially the same information, the relayed key $k_i$.
\begin{Lmm}
\label{lm:cons_chan}
In each standard path $i$,
the secret bits $S[e_i^{\left<\gamma\right>}]$ conveyed there must be equal to the relayed key  
 $k_i^1=k_i^2$ shared by the user pair $u_i^1$ and $u_i^2$ at the end points, up to constants.
That is, 
for any $i$, $j$ and $\gamma$,
\begin{align}
S[e_i^{\left<\gamma\right>}]&=K_i^j\oplus d[e_i^{\left<\gamma\right>}]
\label{eq:main_result}
\end{align}
with $d[e_i^{\left<\gamma\right>}]$ 
being constants.
\end{Lmm}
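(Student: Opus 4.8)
The plan is to prove Lemma \ref{lm:cons_chan} by propagating the constraints of soundness and perfect secrecy along each standard path $i$. The key structural observations, visible in Fig. \ref{fig:connectivity_cg}, are that $G_0$ decomposes as a disjoint union of the standard paths together with the subgraphs $G_{s,i}^{\rm bn}$, and that each subgraph $G_{s,i}^{\rm bn}$ is attached to exactly two standard paths. Consequently, when no edge is wiretapped (${\cal E}^{\rm adv}=\{\varnothing\}$), the only channels carrying information between the user pair $u_i=(u_i^1,u_i^2)$ and the ``interior'' of the graph are the five edges $e_i^{\langle\gamma\rangle}$ of the standard path. My strategy is to show that all of these five secret bits must coincide with $k_i^1=k_i^2$ up to additive constants, because (i) soundness forces a strong correlation between the endpoints, and (ii) the graph-theoretic separation forces all this correlation to be funnelled through the standard path.

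First I would fix the user pair $u_i$ and consider the bipartition of the graph obtained by cutting exactly the five standard-path edges $\{e_i^{\langle\gamma\rangle}\}_\gamma$. By the decomposition of $G_0$ noted above, removing these five edges disconnects the two users $u_i^1$ and $u_i^2$ from each other (the endpoints lie on opposite sides, with the interior subgraphs distributed between the two components). Since soundness requires $\Pr[K_i^1=K_i^2]=1$ with $k_i^j$ uniformly distributed, the information about $k_i^j$ must flow across this cut; but the only secret channels crossing the cut are the standard-path edges. The central step is then a martingale/data-processing argument: processing the execution of $L_0$ in the order given by the total order $\prec$ of Definition \ref{def:order_edge_use}, I would argue inductively that the joint information shared across the cut is at all times carried by the secret bits already transmitted on the standard-path edges, so that the relayed key $k_i^j$ is a deterministic function of $\{S[e_i^{\langle\gamma\rangle}]\}_\gamma$, and conversely each $S[e_i^{\langle\gamma\rangle}]$ is determined by $k_i^j$ up to a constant offset.

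Concretely, I would establish \eqref{eq:main_result} edge by edge. For a single edge $e_i^{\langle\gamma\rangle}$ on the path, consider the cut through that one edge alone together with the fact that the adjacent subgraphs connect only to two standard paths; an edge on the path is a bridge in an appropriate contracted graph, so the mutual information between the two sides is at most one bit, namely $I = H(S[e_i^{\langle\gamma\rangle}]) \le 1$. Soundness forces this single bit to carry the entire correlation $I(K_i^1:K_i^2)=1$, which pins $S[e_i^{\langle\gamma\rangle}]$ to be a (possibly negated, possibly offset) copy of $k_i^j$ — i.e. an affine function $K_i^j\oplus d[e_i^{\langle\gamma\rangle}]$ over $\FF_2$. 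Linearity of the offset over $\FF_2$ is exactly the statement that $S[e_i^{\langle\gamma\rangle}]$ and $K_i^j$ are either equal or complementary, with the complementation absorbed into the constant $d[e_i^{\langle\gamma\rangle}]$. Iterating along all five edges and over all $\gamma$ gives the claimed identity for every $i,j,\gamma$.

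The main obstacle I anticipate is making the cut/data-processing argument fully rigorous in the one-shot, not-necessarily-linear setting: I must show that no information about $k_i^j$ can ``leak around'' the five standard-path edges through the shared subgraphs $G_{s,i}^{\rm bn}$, each of which touches \emph{two} different standard paths and could in principle entangle the keys of different user pairs. The resolution is that the soundness condition of Definition \ref{dfn:security_criteria} demands that keys of different user pairs be \emph{independent} and each be uniform; combined with the disjointness of the standard paths this prevents any such cross-path correlation from contributing. I would therefore need a careful accounting, processing edges in $\prec$-order, that simultaneously tracks all standard paths and invokes the independence-across-pairs part of soundness to conclude that the correlation relevant to pair $i$ is confined to path $i$. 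Once that confinement is established, the single-bit-capacity argument on each bridge edge closes the proof.
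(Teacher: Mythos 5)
Your argument works for the two pendant edges $e_i^{\left<0\right>}$ and $e_i^{\left<4\right>}$, and there it coincides with the paper's: $u_i^j$ has degree one, so the single edge incident to it is a genuine one-bit cut, soundness forces the full correlation $I(K_i^1;K_i^2)=1$ through it, and one concludes $S[e_i^{\left<j\cdot 4\right>}]=K_i\oplus d$ (the paper packages this as Lemma \ref{th:main_2}). But the core claim you rely on for the three middle edges --- that ``an edge on the path is a bridge in an appropriate contracted graph,'' so that the two sides share at most one bit --- is false. In the contracted graph (subgraphs $G^{\rm bn}_{s,i}$ collapsed to nodes $v_{s,i}$), the nodes $v_{1,i}$ form a $9$-cycle, the nodes $v_{2,j}$ form a $9$-cycle, and the nine edges $e_i^{\left<2\right>}$ join them; removing $e_i^{\left<2\right>}$ (or $e_i^{\left<1\right>}$ or $e_i^{\left<3\right>}$) leaves its two endpoints connected by many alternative routes through other standard paths. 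Likewise, deleting all five edges of standard path $i$ does not split the interior between $u_i^1$ and $u_i^2$: it merely isolates the two users while the eighteen subgraphs remain in a single component. So there is no single-edge cut, and no five-edge cut of the kind you describe, from which a one-bit-capacity argument for $S[e_i^{\left<1\right>}],S[e_i^{\left<2\right>}],S[e_i^{\left<3\right>}]$ could follow; information about $K_i$ genuinely \emph{can} route around any one of these edges.

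What is missing is the mechanism the paper actually uses to pin down the middle edges: choose vertex sets $V_i^{(3)},\dots,V_i^{(6)}$ whose boundary consists of six (or seven) standard-path edges belonging to six (or seven) \emph{distinct} user pairs. Soundness forces the corresponding independent uniform keys $K_{i},K_{i+1},\dots$ to cross that boundary, and since the boundary carries exactly as many bits as there are keys, the map from the key tuple to the tuple of secret bits on the cut is forced to be a bijection (the cut is \emph{saturated}, not one-bit). A single such cut only tells you that $S[e_i^{\left<\gamma\right>}]$ is some function of six keys; the paper then writes the same $S[e_i^{\left<\gamma\right>}]$ via three different saturated cuts and intersects the three argument sets, which leaves only $\{K_i\}$, whence $S[e_i^{\left<\gamma\right>}]=K_i\oplus d[e_i^{\left<\gamma\right>}]$. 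Your closing paragraph correctly anticipates that leakage ``around'' the path through the shared subgraphs is the obstacle, but the independence-of-keys observation alone does not resolve it --- independence is exactly what makes the multi-edge cuts saturated, and the counting-plus-intersection argument built on that is the substantive content your proposal does not supply.
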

We will prove this lemma in Section \ref{sec:proof_variables_on_SP_are_equal_up_to_constant}.

Further, we can also show that $k_i$ is first generated locally by one entity (a subgraph $G^{\rm bn}_{i',s'}$ or a user $u_i^j$) on the standard path $i$, and then repeatedly conveyed to an adjacent entity, until it is shared by the users $u_i^1$ and $u_i^2$ at the end points; see Fig. \ref{fig:Flow}.
This phenomenon can be stated formally in terms of the ordering $\prec$, as follows.

\begin{Lmm}
\label{lm:cons_ord}
 In each standard path $i$,
the edges are used  in the following 
manner.
Let $e_i^{\left<\gamma_i\right>}$ denote  the first edge used.
Then,
\begin{itemize}
\item
Edges to the upper left of $e_i^{\left<\gamma_i\right>}$
 are used in order from lower right to upper left:
\begin{align}
e_i^{\left<\gamma_i\right>}\prec
e_i^{\left<\gamma_i-1\right>}
\prec
\cdots
\prec
e_i^{\left<0\right>},
\end{align}
and secret bits $S[e_i^{\left<\gamma\right>}]$ on these edges besides  $e_i^{\left<\gamma_i\right>}$  flow left or upward.
\item
Similarly,  edges to the  lower right of $e_i^{\left<\gamma_i\right>}$
 are used in order from upper left to lower right:
\begin{align}
e_i^{\left<\gamma_i\right>}\prec
e_i^{\left<\gamma_i+1\right>}
\prec
\cdots
\prec
e_i^{\left<4\right>}.
\end{align}
and secret bits $S[e_i^{\left<\gamma\right>}]$ on these edges besides  $e_i^{\left<\gamma_i\right>}$  flow right or downward.
\end{itemize}
 \end{Lmm}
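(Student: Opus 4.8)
The plan is to run a causality (information-flow) argument along the essentially one-dimensional topology of the standard path. By Lemma \ref{lm:cons_chan}, every secret bit $S[e_i^{\left<\gamma\right>}]$ carried on path $i$ equals the relayed key $k_i$ up to a \emph{fixed} constant $d[e_i^{\left<\gamma\right>}]$; in particular each such bit is a deterministic function of $k_i$ carrying no extra randomness. I would picture the path as a line $u_i^1 - X_0 - X_1 - X_2 - X_3 - u_i^2$, whose internal entities $X_0,\dots,X_3$ are the four subgraphs $G^{\rm bn}$ listed in the definition of the standard path, with $e_i^{\left<\gamma\right>}$ the edge joining consecutive entities. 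Since ${\cal E}^{\rm adv}=\{\varnothing\}$ the secrecy condition is vacuous, so throughout I only invoke soundness, in particular that distinct pairs' keys are independent.

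First I would establish an \emph{information-confinement} claim: correlation with $k_i$ can enter an entity on path $i$ only through an incoming path-$i$ edge. The only external inputs to a subgraph $X_\gamma$ are its two path-$i$ edges and the two edges of the single other standard path $i'$ that it meets (each $G^{\rm bn}$ touches exactly two standard paths, and $G_0$ decomposes into paths plus subgraphs, as noted for Fig. \ref{fig:connectivity_cg}). The latter carry $k_{i'}$ up to constants by Lemma \ref{lm:cons_chan} applied to path $i'$, and $k_{i'}$ is independent of $k_i$; together with the initial independence of all nodes' randomness, every non-path-$i$ input to $X_\gamma$ is independent of $k_i$. Hence $X_\gamma$ can become correlated with $k_i$ only via $e_i^{\left<\gamma\right>}$ or $e_i^{\left<\gamma+1\right>}$.

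Next comes the key local step: whenever an entity sends a bit equal to $k_i\oplus(\text{const})$ on a path edge, its data at that instant must already determine $k_i$. This is because the output is a deterministic function of $k_i$, whereas the sender's fresh randomness is independent of both its received data and $k_i$; the randomness therefore cannot contribute, and the received data alone must fix $k_i$. Combined with confinement, this forces an ordering: if $X_\gamma$ emits $k_i$ on its far-from-source edge, it must first have received $k_i$ on its near-to-source edge, so the latter precedes the former. Letting $e_i^{\left<\gamma_i\right>}$ be the first path edge used — its sender being the unique entity that freshly generates $k_i$, since it has received nothing on path $i$ yet — a straightforward induction outward from $\gamma_i$ (leftward toward $u_i^1$ and rightward toward $u_i^2$, using that $k_i$ must ultimately reach both end users) yields the two chains $e_i^{\left<\gamma_i\right>}\prec\cdots\prec e_i^{\left<0\right>}$ and $e_i^{\left<\gamma_i\right>}\prec\cdots\prec e_i^{\left<4\right>}$, and simultaneously pins the flow direction of every edge other than $e_i^{\left<\gamma_i\right>}$ to point away from the source.

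The main obstacle I expect is making the confinement/causality step fully rigorous rather than merely intuitive: one must phrase it as a mutual-information statement, namely that the accumulated data of the entities on one side of an edge is independent of $K_i$ until that edge is used, and argue that this survives every protocol step. Two things need care here — ruling out that $k_i$-correlation materializes inside a subgraph from the independent pieces it already holds, and breaking the apparent circularity in which two adjacent entities could each claim to have learned $k_i$ from the other. I would handle both by inducting on the time order $\prec$ and tracking, step by step, the set of entities whose data is correlated with $K_i$, showing this set can only enlarge by crossing one path-$i$ edge at a time. Once that is in place, the remaining work — reading the labels ``left or upward'' and ``right or downward'' off the geometry of Fig. \ref{fig:connectivity_cg} — is routine bookkeeping.
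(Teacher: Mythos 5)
Your proposal is correct and follows essentially the same route as the paper's proof: both arguments reduce the standard path to a line of entities exchanging a single bit $K_i$ (via Lemma \ref{lm:cons_chan}), observe that $K_i$ must be generated locally at exactly one entity (else independent generations would disagree with nonzero probability), and conclude that it must then propagate outward to adjacent entities, which forces the stated order and flow directions. Your version is in fact more explicit than the paper's about the information-confinement step (using that each $G^{\rm bn}_{s,i}$ meets only one other standard path, carrying an independent key) and about breaking the circularity by induction on $\prec$, which the paper leaves implicit.
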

We will prove this lemma in Section \ref{sec:order_on_standard_path}.

\begin{figure}[htbp]
\centering
\includegraphics[scale=0.3]{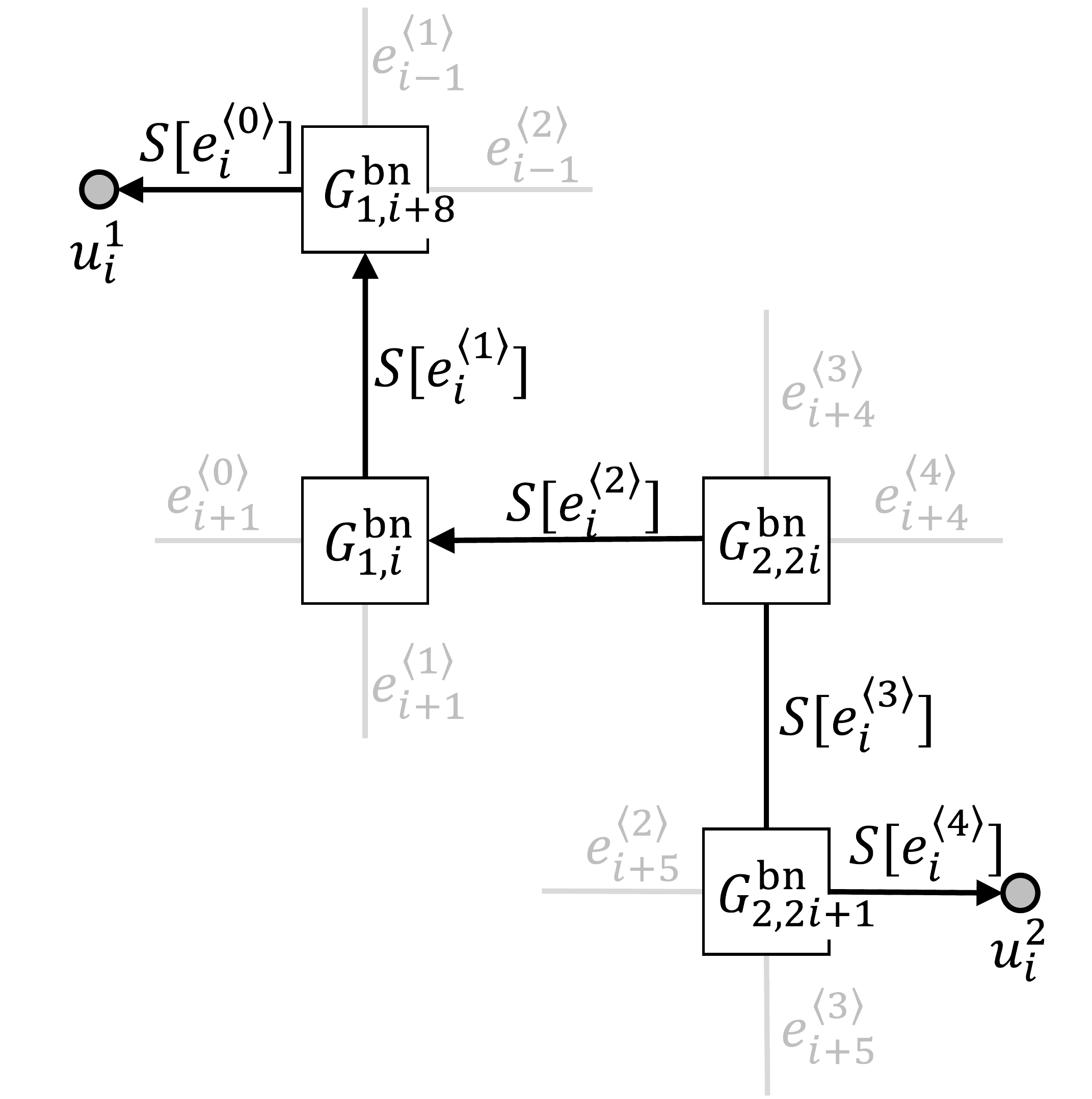}
 \caption{
An example of the flow of secret information $S[e_i^{\left<\beta\right>}]$ on a standard path $i$, as stated in Lemma \ref{lm:cons_ord}.
Recall that all the secret bits $S[e_i^{\langle \beta\rangle}]$ transferred on the standard path $i$ must equal the relayed key $k_i^j$, up to constants (Lemma \ref{lm:cons_chan} and Fig. \ref{fig:connectivity_cg}).
In the example above, $k_i^j$ is first generated by $G^{\rm bn}_{2,2i}$ or by $G^{\rm bn}_{2,2i+1}$, and then transferred via $e_i^{\left<3\right>}$.
It is then repeatedly conveyed to an adjacent entity, until it is shared by the users $u_i^1$ and $u_i^2$ at the end points.
}
 \label{fig:Flow}
\end{figure}

We can also show the following lemma.
\begin{Lmm}
\label{lm:cons_cont}
At least one of the following two conditions is false:
\begin{enumerate}[\rm C1]
\item
For any subgraph $G_{s,i}^{\rm bn}$, 
 $e_{s,i}^{(0)}$ or $e_{s,i}^{(1)}$ is the smallest edge in the standard path containing the two edges $e_{s,i}^{(0)}$ and $e_{s,i}^{(1)}$,
 when they are the two largest edges in $\{e_{s,i}^{(\alpha)}\}_{\alpha\in\{0,1,2,3\}}$.
\item
For any subgraph $G_{s,i}^{\rm bn}$, 
 $e_{s,i}^{(2)}$ or $e_{s,i}^{(3)}$ is the smallest edge in the standard path containing the two edges $e_{s,i}^{(2)}$ and $e_{s,i}^{(3)}$,
 when they are the two largest edges in $\{e_{s,i}^{(\alpha)}\}_{\alpha\in\{0,1,2,3\}}$.
\end{enumerate}
\end{Lmm}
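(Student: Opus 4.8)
The plan is to argue by contradiction: assume that both C1 and C2 hold, and derive an inconsistency from the cyclic way the subgraphs $G^{\rm bn}_{s,i}$ are wired together by the standard paths. First I would fix notation by recording, for each standard path $i$, the position $\gamma_i\in\{0,1,2,3,4\}$ of its $\prec$-smallest edge; by Lemma \ref{lm:cons_ord} this edge is the unique first edge of the path, and the restriction of $\prec$ to that path increases monotonically as one moves away from position $\gamma_i$ in either direction. The bookkeeping then rests on the incidence between subgraph-pairs and path-positions, read off from the definition of the standard path: on path $i$ the pair $\{e^{(0)},e^{(1)}\}$ of $G^{\rm bn}_{1,i}$ occupies positions $\{1,2\}$ and that of $G^{\rm bn}_{2,2i+1}$ occupies positions $\{3,4\}$, while the pairs $\{e^{(2)},e^{(3)}\}$ of $G^{\rm bn}_{1,i+8}$ and of $G^{\rm bn}_{2,2i}$ occupy positions $\{0,1\}$ and $\{2,3\}$ respectively.

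With this dictionary C1 and C2 become explicit constraints on the $\gamma_i$. An active C1-constraint from a subgraph of the form $G^{\rm bn}_{1,\cdot}$ forces $\gamma$ into the window $\{1,2\}$ and one from $G^{\rm bn}_{2,\cdot}$ into $\{3,4\}$; similarly an active C2-constraint from $G^{\rm bn}_{1,\cdot}$ forces $\{0,1\}$ and one from $G^{\rm bn}_{2,\cdot}$ forces $\{2,3\}$. In other words, whenever a subgraph $G^{\rm bn}_{1,\cdot}$ on a path has its relevant pair among its two largest edges, it pushes that path's first-edge position into the \emph{left} half $\{0,1,2\}$, whereas a subgraph $G^{\rm bn}_{2,\cdot}$ pushes it into the \emph{right} half $\{2,3,4\}$, the two halves overlapping only at position $2$. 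Since a single path has one value $\gamma_i$, all windows it receives must intersect, so a path can never simultaneously carry an active constraint forcing $\{0,1\}$ and one forcing $\{3,4\}$.

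I would then propagate these forced windows around the cyclic meta-structure of $G_0$. The subgraphs $G^{\rm bn}_{1,i}$ link consecutive paths $i$ and $i+1$, while the subgraphs $G^{\rm bn}_{2,2i}$ link paths $i$ and $i+4$, so the paths together with the subgraphs carry interleaved cycles of odd length on $\mathbb Z/9\mathbb Z$ (for instance $i,i+1,i+2,i+3,i+4$ closed by one $G^{\rm bn}_{2,\cdot}$-link is a $5$-cycle). Tracking how an active constraint on one path forces the type, and hence a further active constraint, on the neighbouring subgraph and path, the aim is to show that the first-edge positions $\gamma_i$ are driven to shift consistently in one direction (left pitted against right) as one travels around such a cycle; because the cycle has odd length this shift cannot return to its starting value, giving the desired contradiction and showing C1 and C2 cannot both hold.

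The main obstacle is the \emph{split} case: a subgraph whose two largest external edges lie one in each pair triggers neither C1 nor C2, so the propagation can stall at such a subgraph and no window is forced. Overcoming this is where the earlier lemmas must be combined with the shared-edge structure between consecutive subgraphs on a cycle. Concretely, I would seed the chain at an extremal edge — the $\prec$-largest external edge, which by Lemma \ref{lm:cons_ord} sits at position $0$ or $4$ of its path and is the largest edge inside its own subgraph — and then use Lemma \ref{lm:cons_ord} (monotonicity along each path) together with Lemma \ref{lm:cons_chan} (all edges of a standard path carry the same relayed key, up to constants) to argue that, along the chain launched by this seed, a split subgraph forces its large pair-edge to reappear as a large edge on the next path, so the split case cannot persist and an active constraint is always regenerated. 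Making this \emph{no-stall} step rigorous, and separately disposing of the boundary position cases $\gamma_i\in\{0,4\}$ in the position bookkeeping, is the technical core of the argument.
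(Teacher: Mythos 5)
Your proposal correctly sets up the dictionary between the external-edge pairs of each $G^{\rm bn}_{s,i}$ and the position windows $\{0,1\},\{1,2\},\{2,3\},\{3,4\}$ on the standard paths, and the observation that two \emph{active} constraints forcing disjoint windows on one path give an immediate contradiction is sound. But the argument as a whole has a genuine gap, and it is exactly the one you flag yourself: the \emph{split} case, where a subgraph's two $\prec$-largest external edges lie one in each pair, activates neither C1 nor C2, so no window is forced and the propagation stalls. This case cannot be excluded at this stage of the paper's development --- the exclusion of the split configuration happens later, in the proof of Lemma \ref{lmm:property_butterfly_network}, and uses requirements R3 and R4 of Lemma \ref{lmm:cg_graph_property}, which are themselves downstream of Lemma \ref{lm:cons_cont}; invoking it here would be circular. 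Your proposed remedy (seeding at the $\prec$-largest edge and arguing via Lemmas \ref{lm:cons_chan} and \ref{lm:cons_ord} that a split subgraph ``forces its large pair-edge to reappear as a large edge on the next path'') is not carried out, and it is not clear it can be: Lemma \ref{lm:cons_chan} constrains \emph{what} is carried on each edge, not the relative $\prec$-order of edges on \emph{different} paths, which is what a no-stall step would need. Separately, the odd-cycle parity claim --- that the forced windows induce a consistent directional shift that cannot close up around a cycle of odd length --- is asserted rather than proved; since C1 and C2 are conditional, a given path may carry no active constraint at all, so the contradiction cannot be extracted from window intersections alone without the global bookkeeping you have deferred.

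For contrast, the paper avoids any case analysis on which pair is largest in each subgraph. It constructs explicit families of subsets $E'\subsetneq E_0^{\rm st}$ ($E^{(1)}$, $E^{(2)}$, $E^{(3)}_{i,i'}$, $E^{(4)}_q$) that (i) contain the $\prec$-smallest edge of every standard path for \emph{any} placement of those minima (this coverage is checked by computer search), and (ii) are closed under the rule ``if $E'$ meets both pairs of a subgraph, it contains both pairs entirely.'' Taking the $\prec$-smallest edge of $E_0^{\rm st}\setminus E'$, Lemma \ref{lm:cons_ord} supplies an adjacent edge of $E'$ on the same path, the closure rule pushes two edges of a neighbouring path out of $E'$, minimality makes the corresponding subgraph's C1- or C2-precondition active, and C1/C2 then force one of those two edges to be a path minimum, hence in $E'$ --- a contradiction. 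Note that in this scheme the split case never needs to be addressed: the contradiction is localized at a single extremal edge rather than propagated around a cycle. To complete your proof you would need either to supply the no-stall argument (which looks at least as hard as the lemma itself) or to replace the propagation by a global certificate of the paper's kind.
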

We will prove this lemma in Section \ref{sec:proof_of_lemma_at_least_two_condition}.

\subsubsection{Proof of Lemma \ref{lmm:cg_graph_property}}

From Lemma \ref{lm:cons_chan} (respectively, from Lemma \ref{lm:cons_ord}), all the  subgraph $G_{s,i}^{\rm bn}$ satisfy 
R1 and R2 (respectively, R3) in Lemma \ref{lmm:cg_graph_property}.
Here, in order to derive R2, we used the mutual independence of $\{K_i^j\}_i$ additionally.

Hence, it remains to show that R4 holds for some $G_{s,i}^{\rm bn}$.
To this end, we suppose that R4 does not hold for any $G_{s,i}^{\rm bn}$, and derive a contradiction with Lemma \ref{lm:cons_cont}. 

This is equivalent to showing that, in any subgraph $G_{s,i}^{\rm bn}$, 
\begin{itemize}
\item[(i)] C1 and C2 hold if  R4 does not hold.
\end{itemize}
Below we will choose one of $G_{s,i}^{\rm bn}$ and show how to prove item (i).
As $G_{s,i}^{\rm bn}$ is fixed, we omit subscripts $s,i$ on the notation $e_{s,i}^{(\alpha)}$ for simplicity.

For the sake of the simplicity, we consider only the case of $e^{(0)}\prec e^{(1)}$ and $e^{(2)}\prec e^{(3)}$;
all other cases can be shown similarly.

Then, we divide the remaining situation into three cases:

First, if
 the second largest edge in the set $\{e^{(\alpha)}\}_{\alpha\in\{0,1,2,3\}}$ is $e^{(1)}$ or $e^{(3)}$,
the order $e^{(0)},e^{(2)}\prec e^{(1)},e^{(3)}$ must be derived from the assumptions $e^{(0)}\prec e^{(1)}$ and $e^{(2)}\prec e^{(3)}$, i.e. item (i)
 holds.

Second, if
the second largest edge in the set $\{e^{(\alpha)}\}_\alpha$ is $e^{(0)}$,
we find that $e^{(2)},e^{(3)}\prec e^{(0)}\prec e^{(1)}$ holds (as a result, C2 holds) and the sender of  $e^{(0)}$ must be $G_{s,i}^{\rm bn}$. 
Here, we have used the  assumption that 
R4 
does not hold.
From this fact and   $e^{(0)}\prec e^{(1)}$,
 Lemma \ref{lm:cons_ord} guarantees  that $e^{(0)}$ is the smallest edge
 in the standard path where it belongs, i.e. C1 holds.
Thus, the item (i) also holds in this case.

Finally, if
 the second largest edge in the set $\{e^{(\alpha)}\}_\alpha$ is $e^{(2)}$,
we can also show that  item (i) holds in the same way as in the previous case.

This completes the  proof of Lemma \ref{lmm:cg_graph_property}. 

\subsection{Note on notation: $v_{s,i}$ and $E_0^{\rm st}$}
\label{sec:coarse_grained edge set}
In the proofs of Lemmas \ref{lm:cons_chan}, \ref{lm:cons_ord} and \ref{lm:cons_cont} given below, we only need to consider communication between subgraphs $G_{s,i}^{\rm bn}$, and do not need to refer to communication occurring inside each $G_{s,i}^{\rm bn}$. Therefore, in order to simplify the presentation, we will often denote a subgraph $G_{s,i}^{\rm bn}$ as a single node $v_{s,i}$.
For example,  the edge set $\{u^1_i,v_{s,i}\}$ denotes the set $\{u^1_i,v_{s,i}^{(0)},v_{s,i}^{(1)},v_{s,i}^{(2)},v_{s,i}^{(3)},v_{s,i}^{(4)},v_{s,i}^{(5)}\}$.

Accordingly, we will also regard the graph $G_0$ as consisting of edges that connect nodes $v_{s,i}$ ($=G_{s,i}^{\rm bn}$) and $u^i$, which we denote by $E_0^{\rm st}$.
This edge set $E_0^{\rm st}$ in fact consists of edges on the standard paths $i$, defined in the previous section,
\begin{equation}
E_0^{\rm st}:=\{e_i^{\left<\gamma\right>}\}_{i\in\mathbb Z/9\mathbb Z, \gamma\in\{0,1,2,3,4\}}.
\end{equation}

\subsection{Proof of Lemma \ref{lm:cons_chan}}
\label{sec:proof_variables_on_SP_are_equal_up_to_constant}

\subsubsection{Case of $e_i^{\left<0\right>}$}
We divide nodes $V_0$ into $u_i^1$ and the others:
\begin{align}
V_i^{(1)}&:=\{
u_i^1\}\\
\bar V_i^{(1)}&:=V_0\backslash V_i^{(1)}
\end{align}
Since these two sets are connected by $e_i^{\langle0\rangle}$ only,
 there must be functions $f_i^{(j)}$ which satisfy the relation
\begin{align}
&
f_i^{(1)}\left(
S[e_i^{\left<0\right>}], \{C_v\}_{v\in V_i^{(1)}}
\right)=K_i^1
\nonumber\\
=&K_i^2
=f_i^{(2)}\left(
S[e_i^{\left<0\right>}], \{C_v\}_{v\in \bar V_i^{(1)}} 
\right).
\label{eq:main_1}
\end{align}
where $C_v$ denotes all the random variables possessed by node $v\in V_0$ 
  before executing protocol $L_0$.
The first (last) equality in (\ref{eq:main_1}) comes from the facts that 
the node $u_i^1$ ($u_i^2$)
are in  the set $V_i^{(1)}$($\bar V_i^{(1)}$).
The second  equality follows from the soundness of  $L_0$.

Recall that the two sets $V_i^{(1)}$ and $\bar V_i^{(1)}$ are connected by $e_i^{\langle0\rangle}$ only.
Also, note that from the setting of the KRP-by-SNC,
the secret bit $S[e_i^{\left<0\right>}]$ must be generated locally by the sender.
Then, we have 
\begin{align}
&I(S[e_i^{\left<0\right>}],\{C_v\}_{v\in V_i^{(1)}}
;
S[e_i^{\left<0\right>}],\{C_v\}_{v\in \bar V_i^{(1)}})
\nonumber\\
=&H(S[e_i^{\left<0\right>}]).
\label{eq:main_2}
\end{align}

Also, since $S[e_i^{\left<0\right>}]$ is one bit, 
\begin{align}
H(S[e_i^{\left<0\right>}])\leq 1.
\label{eq:main_2_1}
\end{align}

It remains to apply  the following lemma to relations (\ref{eq:main_1}), (\ref{eq:main_2}), (\ref{eq:main_1}), and 
  $H\left(K_i^j\right)=1$.
 \begin{Lmm}
\label{th:main_2}
If $B$, $C_0$, and $C_1$ are random variables, and $f_0$ and $f_1$ are  functions, satisfying
\begin{align}
f_0(B,C_0)&=f_1(B,C_1)=:A,
\label{ass:th1_1}
\\
H(A)
&\geq I(B,C_0;B,C_1)
= H(B)
\label{ass:th1_2}
\end{align}
with $A\in {\cal A}$, $B\in {\cal B}$ where $|{\cal A}|=|{\cal B}|$.
Then, there is a bijective function $g$ such that $g(A)=B$.
\end{Lmm}
Proof of this lemma is shown in Appendix.
By using  this lemma,
 we can find a bijective function $g$ such that
\begin{align}
S[e_i^{\left<0\right>}]&=g\left(K_i^j\right).
\end{align}
In other words,
we can find a certain constant values $d[e_i^{\left<0\right>}]\in\mathbb Z_2$ such that 
the relation
\begin{align}
K_i^j&=S[e_i^{\left<0\right>}]\oplus d[e_i^{\left<0\right>}]
\label{eq:main_bjf_1}
\end{align}
holds for $i\in\mathbb Z/9\mathbb Z$ and $j\in\{1,2\}$.
When we apply Lemma \ref{th:main_2} above, we have substituted 
$K_i^j$,
$S[e_i^{\left<0\right>}]$,
$\{C_v\}_{v\in V_i^{(1)}}$ and $\{C_v\}_{v\in \bar V_i^{ (1)}}$
 into $A$, $B$, $C_0$ and  $C_1$ in Lemma respectively. 

Note, that from the constraint of soundness $K_i^1=K_i^2$, we do not have to discriminate  $K_i^1$ and $K_i^2$ hereafter.
Therefore, we abbreviate the $K_i^j$ into $K_i$ below.

\subsubsection{Case of $e_i^{\left<4\right>}$}
This case can be shown completely in the same manner as the case of $e_i^{\left<0\right>}$.
We have the relation
\begin{align}
K_i&=S[e_i^{\left<4\right>}]\oplus d[e_i^{\left<4\right>}]
\label{eq:main_bjf_2}
\end{align}
 for a certain constant $d[e_i^{\left<4\right>}]$.

\subsubsection{Cases of $e_i^{\left<1\right>}$, $e_i^{\left<2\right>}$, and  $e_i^{\left<3\right>}$}
First, we use the same idea as in the case of $e_i^{\left<0\right>}$.
We define the sets  
\begin{align}
V^{(3)}_i&:=\{
 u_{i+1}^1,u_{i+2}^1,u_{i+8}^2,u_{i+4}^2,u_i^2,u_{i+5}^2,
\nonumber\\&\quad\quad\quad 
v_{1,i},v_{1,i+1},v_{2,2i+8},v_{2,2i},v_{2,2i+1},v_{2,2i+2}\}\\
V^{(4)}_i&:=\{
 u_{i+1}^1,u_{i+2}^1,u_{i+3}^1,u_{i+4}^2,u_{i}^2,u_{i+5}^2,
\nonumber\\&\quad\quad\quad 
v_{1,i},v_{1,i+1},v_{1,i+2},v_{2,2i},v_{2,2i+1},v_{2,2i+2}\}\\
V^{(5)}_j&:=\{
 u_{i+1}^1,u_{i+2}^1,u_{i+3}^1,u_{i+4}^2,u_{i}^2,u_{i+5}^2,u_{i+6}^2,
\nonumber\\&\quad\quad 
v_{1,i},v_{1,i+1},v_{1,i+2},v_{2,2i},v_{2,2i+1},v_{2,2i+2},v_{2,2i+4}\}\\
V^{(6)}_i&:=\{
u_{i+1}^1,u_{i+2}^1,u_{i+6}^1,u_{i+4}^2,u_{i}^2,u_{i+5}^2,
\nonumber\\&\quad\quad \quad 
v_{1,i},v_{1,i+1},v_{1,i+5},v_{2,2i},v_{2,2i+1},v_{2,2i+2}\}.
\end{align}
and $\bar V_i^{(\mu)}:=V_0\backslash V_i^{(\mu)}$.

Since the two sets $V_i^{(3)}$ and $\bar V_i^{(3)}$ are connected only by
$e_i^{\left<1\right>}$, $e_{i+2}^{\left<1\right>}$, 
$e_{i+1}^{\left<3\right>}$, $e_{i+8}^{\left<3\right>}$,
$e_{i+4}^{\left<2\right>}$, and $e_{i+5}^{\left<2\right>}$,
there exists a bijective function  $g^{(3,i)}$:
\begin{align}
&\nonumber\quad
\left(
S[e_i^{\left<1\right>}],S[e_{i+2}^{\left<1\right>}], 
S[e_{i+1}^{\left<3\right>}], S[e_{i+8}^{\left<3\right>}],
S[e_{i+4}^{\left<2\right>}],S[e_{i+5}^{\left<2\right>}]
\right)
\\
&=g^{(3,i)}\left(K_i,K_{i+1},K_{i+2},K_{i+4},K_{i+5},K_{i+8}\right).
\label{eq:main_bjf_3}
\end{align}
Similarly, we have  bijective functions $g^{(\mu,i)}$ for $\mu\in\{4,5,6\}$ and $i\in\mathbb Z/9\mathbb Z$
such that 
\begin{align}
&\nonumber\quad
\left(
S[e_i^{\left<1\right>}],S[e_{i+3}^{\left<1\right>}], 
S[e_{i+4}^{\left<3\right>}], S[e_{i+1}^{\left<3\right>}],
S[e_{i+2}^{\left<2\right>}],S[e_{i+5}^{\left<2\right>}]
\right)
\\
&=g^{(4,i)}\left(K_i,K_{i+1},K_{i+2},K_{i+3},K_{i+4},K_{i+5}\right)
\label{eq:main_bjf_4}
\\\nonumber\\
&\nonumber\quad
\left(
S[e_i^{\left<1\right>}],S[e_{i+3}^{\left<1\right>}], 
S[e_{i+4}^{\left<3\right>}], S[e_{i+1}^{\left<3\right>}],
\right.\\\nonumber &\quad\quad\quad\quad\left.
S[e_{i+6}^{\left<3\right>}], S[e_{i+2}^{\left<3\right>}],
S[e_{i+5}^{\left<2\right>}]
\right)
\\
&=g^{(5,i)}\left(K_i,K_{i+1},K_{i+2},K_{i+3},K_{i+4},K_{i+5},K_{i+6}\right)
\label{eq:main_bjf_5}
\\\nonumber\\
&\nonumber\quad
\left(
S[e_i^{\left<1\right>}],S[e_{i+2}^{\left<1\right>}], 
S[e_{i+5}^{\left<1\right>}], S[e_{i+6}^{\left<1\right>}],
S[e_{i+4}^{\left<3\right>}], S[e_{i+1}^{\left<3\right>}]
\right)
\\
&=g^{(6,i)}\left(K_{i},K_{i+1},K_{i+2},
K_{i+4},K_{i+5},K_{i+6}\right).
\label{eq:main_bjf_6}
\end{align}

From these relations, we see that, 
for any $e\in E_0^{\rm st}$, random variable $S[e]$ is generated by at least one of bijective functions on a subset of $\{K_i\}_{i\in\mathbb Z/9\mathbb Z}$, i.e. the eqs. 
(\ref{eq:main_bjf_1}),
(\ref{eq:main_bjf_2}),
(\ref{eq:main_bjf_3}),
(\ref{eq:main_bjf_4}), 
(\ref{eq:main_bjf_5}) and
(\ref{eq:main_bjf_6}).
This fact and the complete randomness of $\{K_i\}_i$ guarantee
that the $S[e]$ must have the maximum entropy,  i.e. 
\begin{align}
 H(S[e])=1.
\label{eq:main_en_z}
\end{align}
for  $e\in E_0^{\rm st}$.

The equations (\ref{eq:main_bjf_3})$\sim$(\ref{eq:main_bjf_6})
allow to write the random variable $S[e_i^{\left<1\right>}]$ in multiple expressions as follows:
\begin{align}
S[e_i^{\left<1\right>}]&=
g^{(4,i)}_1(K_i,K_{i+1},K_{i+2},K_{i+3},K_{i+4},K_{i+5})
\nonumber\\&=
g^{(4,i+6)}_2
(K_{i+6},K_{i+7},K_{i+8},K_i,K_{i+1},K_{i+2})
\nonumber\\&=
g^{(6,i+3)}_4
(K_{i+3},K_{i+4},K_{i+5},         K_{i+7},K_{i+8},K_i                  ).
\end{align}
Here, $g^{(\mu,i)}_k(\cdot)$ denotes the $k$-th element of the list of variables defined by the function  $g^{\mu,i}(\cdot)$.
From the complete randomness of $\{K_i\}_i$, $S[e_i^{\left<1\right>}]$ must depend only on the inter section of the sets as arguments of these functions:
\begin{align}
\{K_i,K_{i+1},K_{i+2},K_{i+3},K_{i+4},K_{i+5},\} 
&\nonumber\\
\cap
\{K_i,K_{i+1},K_{i+2},K_{i+6},K_{i+7},K_{i+8}\} 
&\nonumber\\
\cap
\{K_i,K_{i+3},K_{i+4},K_{i+5},K_{i+7},K_{i+8}\}
&=\{K_i\}.
\end{align}
This fact and the relation (\ref{eq:main_en_z}) imply that
\begin{align}
S[e_i^{\left<1\right>}]=K_i\oplus d[e_i^{\left<1\right>}]
\label{eq:main_bjf_7}
\end{align}
for a certain constant $d[e_i^{\left<1\right>}]$.

In the same way, from the multiple representations of $S[e_i^{\left<2\right>}]$ and $S[e_i^{\left<3\right>}]$:
\begin{align}
S[e_i^{\left<2\right>}]
&=
g^{(3,i+5)}_5(K_{i+5},K_{i+6},K_{i+7},K_{i},K_{i+1},K_{i+4})
\nonumber\\&=
g^{(3,i+4)}_6(K_{i+4},K_{i+5},K_{i+6},K_{i+8},K_{i},K_{i+3})
\nonumber\\&=
g^{(4,i+7)}_5(K_{i+7},K_{i+8},K_{i},K_{i+1},K_{i+2},K_{i+3}),
\\
\nonumber\\
S[e_i^{\left<3\right>}]
&=
g^{(4,i+8)}_4(K_{i+8},K_{i},K_{i+1},K_{i+2},K_{i+3},K_{i+4})
\nonumber\\&=
g^{(5,i+3)}_5
(K_{i+3},K_{i+4},K_{i+5},K_{i+6},K_{i+7},K_{i+8},K_{i})
\nonumber\\&=
g^{(6,i+5)}_5
(K_{i+5},K_{i+6},K_{i+7},K_{i},K_{i+1},K_{i+2}),
\end{align}
we can find that there are  certain constants $d[e_i^{\left<\gamma\right>}]$
such that the relation
\begin{align}
S[e_i^{\left<\gamma\right>}]&=K_{i}\oplus d[e_i^{\left<\gamma\right>}]
\label{eq:main_bjf_8}
\end{align}
hold for $\gamma\in\{2,3\}$.

Eqs.
 (\ref{eq:main_bjf_1}),
 (\ref{eq:main_bjf_2}),
 (\ref{eq:main_bjf_7}) and
 (\ref{eq:main_bjf_8})
prove the lemma.

\subsection{Proof of Lemma \ref{lm:cons_ord}}
\label{sec:order_on_standard_path}
Protocol $L_0$ can be viewed as a communication protocol performed by the subgraphs $G_{s,i}^{\rm bn}$ using the standard paths.


In this picture, the communication between $G_{s,i}^{\rm bn}$ satisfy the following properties.
\begin{itemize}
\item[1)] Subgraphs $G_{s,i}^{\rm bn}$ are connected solely by the standard paths.
\item[2)] Each standard path $i'$ is a straight line.
\item[3)] All edges in standard path $i'$ convey the same random bit $K_{i'}$, up to a constant (Lemma \ref{lm:cons_chan}).
\item[4)] Random bits $K_{i'}$ are independent of each other (due to the definition of the KRP).
\end{itemize}
For these properties to hold, it is necessary that
\begin{itemize}
\item[a)] Random bit $K_{i'}$ is generated inside one of subgraphs $G_{s,i}^{\rm bn}$ on the standard path $i'$.
\item[b)] Value of $K_{i'}$ thus generated is conveyed repeatedly to adjacent subgraphs $G_{s,i}^{\rm bn}$ on the same standard path $i'$.
\end{itemize}
and thus the lemma holds.

Indeed, if a) is not true, i.e., if $K_{i'}$ is generated independently by two or more of $G_{s,i}^{\rm bn}$, there is a nonzero probability that their values differ. For other subgraphs $G_{s,i}^{\rm bn}$ to be able to send out $K_{i'}$ thus generated, they must learn it from an adjacent subgraph which already knows $K_{i'}$.

\subsection{Proof of Lemma \ref{lm:cons_cont}}
\label{sec:proof_of_lemma_at_least_two_condition}
We will take two steps.

First, we will show that, for any order $\prec$, there is a subset $E'$ of $E_0^{\rm st}$
which satisfies the following four items:
\begin{itemize}
\item[1)] 
 the subset $E'$ contains the smallest edge in any standard path $i$.
%
\item[2)] $E_0^{\rm st}\neq E'$,
\item[3)] 
 When 
$i\in \mathbb Z/9\mathbb Z$ is a value satisfying  $e^{\left<\gamma\right>}_{i}, e^{\left<\gamma'\right>}_{i+1}\in E'$ for $\gamma\in \{1,2\}$ and $\gamma'\in \{0,1\}$,
the relation $e^{\left<3-\gamma\right>}_{i}, e^{\left<1-\gamma'\right>}_{i+1}\in E'$ holds.
\item[4)] 
When 
$i\in \mathbb Z/9\mathbb Z$ is a value satisfying  $e^{\left<\gamma\right>}_{i},e^{\left<\gamma'\right>}_{i+5}\in E'$ for 
$\gamma\in \{3,4\}$ and 
 $\gamma'\in \{2,3\}$, the relation 
$e^{\left<7-\gamma\right>}_{i}, e^{\left<5-\gamma'\right>}_{i+5}\in E'$ holds.
\end{itemize}

Second, 
we will show  that the existence of $E'$ defined above  and conditions $C_1$ and $C_2$ are incompatible for the order $\prec$ defined from a secure protocol $L_0$.

Thus, Lemma  Lemma \ref{lm:cons_cont} holds.

The first step is shown as follows:
 We constructively show that for any sequence $\{\gamma_i\in\{0,1,2,3,4\}\}_{i\in \mathbb Z/9\mathbb Z}$,
there is a subset which satisfies 
\begin{itemize}
\item[1')] $\forall i,\quad e^{\left<\gamma_i\right>}_{i}\in E'$,
\end{itemize}
2), 3), and 4).
Such a subset is one of 
 $E^{(1)}$,
 $E^{(2)}$,
 $E^{(3)}_{i,i'}$ and 
 $E^{(4)}_{q}$ for $i\neq i'\in\mathbb Z/9\mathbb Z$ and $q\in\{0,1,2\}$ defined below:
\begin{align}
E^{(1)}&:=\{e^{\left<\gamma\right>}_i|
i\in \mathbb Z/9\mathbb Z,\quad
\gamma\in\{2,3,4\}
\}\\
E^{(2)}&:=\{e^{\left<\gamma\right>}_i|
i\in \mathbb Z/9\mathbb Z,\quad
\gamma\in\{0,1,2\}
\}
\\
E^{(3)}_{i',i''}&:=
\left\{e^{\left<\gamma\right>}_i\left|
\begin{array}{l}
\;
(i=i',  \gamma\in\{0,1\})
\\\lor
(i\in \{i'+1,i'+2,\cdots, i''-1\},
\\\quad\quad
 \gamma=2)
\\\lor
(i=i'', \gamma\in\{3,4\})
\\\lor
(i\in \{i''+1,i''+2,\cdots, i'+4\},
\\\quad\quad
 \gamma\in\{0,1,2,3,4\})
\\\lor
(i\in \{i'+5,i'+6,\cdots, i''+4\},
\\\quad\quad
 \gamma\in\{0,1,2\})
\\\lor(i\in \{i''+5,i''+6,\cdots, i'+8\},
\\\quad\quad
 \gamma\in\{0,1,2,3,4\})
\end{array}
\right.\right\}
\\
E^{(3)}_{i'',i'}&:=
\left\{e^{\left<\gamma\right>}_i\left|
\begin{array}{l}
\;
(i=i',  \gamma\in\{3,4\})
\\\lor
(i\in \{i'+1,i'+2,\cdots, i''-1\},
\\\quad\quad
 \gamma\in\{0,1,2,3,4\})
\\\lor
(i=i'',  \gamma\in\{0,1\})
\\\lor
(i\in \{i''+1,i''+2,\cdots, j'+4\},
\\\quad\quad
 \gamma=2)
\\\lor
(i\in \{i'+5,i'+6,\cdots, i''+4\},
\\\quad\quad
 \gamma\in\{2,3,4\})
\\
\lor(i\in \{i''+5,i''+6,\cdots,i'+8\},
\\\quad\quad
\gamma=2)
\end{array}
\right.\right\}
\\
E^{(4)}_{q}&:=
\left\{e^{\left<\gamma\right>}_i\left|
\begin{array}{cll}
&
(\makebox[1.6cm]{$i\equiv q$}\;{\rm mod}\: 3,\; \gamma\in\{0,1\})
\\
\land
&
(\makebox[1.6cm]{$i\equiv q+1$}\;{\rm mod}\: 3,\; \gamma\in\{2,3,4\})
\\
\land
&
(\makebox[1.6cm]{$i\equiv q+2$}\;{\rm mod}\: 3,\; \gamma\in\{3,4\})
\end{array}
\right.\right\}
\end{align}
where $i''\in\{i'+1,i'+2,i'+3,i'+4\} \subset \mathbb Z/9\mathbb Z$.
We can straightforwardly check that all the subsets satisfy the last three items 2), 3), and 4) directly.
For any sequence  $\{\gamma_j\in\{0,1,2,3,4\}\}_{j\in \mathbb Z/9\mathbb Z}$,
we can find that one of the above subsets satisfies the item  1') as well.
We confirmed this fact by a brute force search using a computer.
This completes the first step of the proof.

For the second step, for any given order $\prec$ defined from a secure protocol $L_0$, we will derive a contradiction from the assumptions that 
conditions C1 and C2 hold and that there exists $E'$ which satisfies the four items 1), 2), 3), and 4).
 We pick up the smallest edge $e^{\left<\gamma_m\right>}_{i_m}$  in $E_0^{\rm st}\backslash E' $.
Existence of it guaranteed from the item 2).
From the first item 1) and Lemma \ref{lm:cons_ord}, there is a node $e^{\left<\gamma'\right>}_{i_m}\in E'$ such that $|\gamma'-\gamma_m|=1$.

When $\gamma', \gamma_m\in\{0,1\}$, we will give a contradiction, as an example.
 The third item 3) enforces 
\begin{align}
e^{\left<1\right>}_{i_m+8}, e^{\left<2\right>}_{i_m+8} \in E_0^{\rm st}\backslash E' .
\label{eq:tmp_10}
\end{align} 
From this relation and the minimality of $e^{\left<\gamma_m\right>}_{i_m}$  in $E_0^{\rm st}\backslash E' $,
 the order 
$e^{\left<\gamma'\right>}_{i_m}\prec e^{\left<\gamma_m\right>}_{i_m}\prec e^{\left<1\right>}_{i_m+8}, e^{\left<2\right>}_{i_m+8}$ must hold. 
From this order, C1 enforces us that $e^{\left<1\right>}_{i_m+8}$ or $e^{\left<2\right>}_{i_m+8}$  must be a smallest edges in the standard path $i_m+8$. Here, we have use the facts that 
$e^{\left<\gamma'\right>}_{i_m}=e^{\left(\gamma'+2\right)}_{1,i_m+8}$, $e^{\left<\gamma_m\right>}_{i_m}=e^{\left(\gamma_m+2\right)}_{1,i_m+8}$, $e^{\left<1\right>}_{i_m+8}=e^{\left(0\right)}_{1,i_m+8}$, and $e^{\left<2\right>}_{i_m+8}=e^{\left(1\right)}_{1,i_m+8}$. As a result, the item 1) enforces us that 
 $e^{\left<1\right>}_{i_m+8}$ or  $e^{\left<2\right>}_{i_m+8}$ must be in $E'$.
 However, this relation contradicts the relation (\ref{eq:tmp_10}).
 
In the same way, we can derive contradictions in the other cases, i.e. 
$\gamma',\gamma_m\in\{1,2\}$ or $\gamma', \gamma_m\in\{2,3\}$ or $\gamma', \gamma_m\in\{3,4\}$.

\section{Summary and outlook}
We investigated relations between the key relay protocol (KRP) and secure network coding (SNC) under the one-shot scenario, and also under the scenario where wiretap sets are restricted.
We found that there is a definite gap in security between these two types of protocols; namely, certain KRPs achieve better security than any SNC schemes on the same graph.
We also found that this gap can be closed by generalizing the notion of SNC by adding free public channels;
that is the KRP is equivalent to SNC augmented with free public channels.

There are still many open problems.
For example, does the gap we found here persist even under the asymptotic case?

It is also interesting to figure out on what types of graphs the gap occurs.
Our conjecture is that there is no gap on plane graphs, and also for the case where there is only one sender-receiver pair, though the rigorous proofs remain as future works.

\appendix
\section*{Formal proof of Lemma \ref{lmm:security_L}}
\begin{proof}
We prove the lemma by using a little bit modified protocol $L_{\rm KRP}^{\rm m}$, such that 
 $L_{\rm KRP}$ being secure is equivalent to  $L_{\rm KRP}^{\rm m}$ being secure.

$L_{\rm KRP}^{\rm m}$
is obtained    by the following three process.
First,
when a public message made at any node in $L_{\rm KRP}$ can be expressed as 
 a linear combination of the other  public messages $p_x$ and the local keys $r_y$ held by the node, i.e. $\bigoplus_x p_x\oplus \bigoplus_y r_y$, the corresponding message  made at the node in $L_{\rm KRP}^{\rm m}$ is  a linear combination of the local keys $r_y$ only, i.e.  the parity of them $\bigoplus_y r_y$.
Second, all the public communications in $L_{\rm KRP}^{\rm m}$ are used only for sending the parities to all users.
Finally, as relayed keys, the users evaluate the same values as for $L_{\rm KRP}$. 

From this relation, we know that 
any bit obtained by the adversary in the case of $L_{\rm KRP}^{\rm m}$ can be evaluated
  by the adversary in the case of $L_{\rm KRP}$, and vice versa.
This is why $L_{\rm KRP}$ being secure is equivalent to  $L_{\rm KRP}^{\rm m}$ being secure.

Self-contained definition of $L_{\rm KRP}^{\rm m}$ is as follows: 
$L_{\rm KRP}^{\rm m}$  is the protocol in which the following four phases are implemented in sequence.
\begin{itemize}
\item 
{\it Local key generation phase:}
 All channels $LKS_e$ are used to generate local keys.
\item 
 {\it Parity evaluation phase:}
 Each node evaluates parities of (part of) local keys held by the node.
\item 
 {\it Public communication  phase:}
 Evaluated parities are transferred  from each node to users via public communications.
\item 
 {\it Relayed key generation  phase:}
 Each user generates the relayed key from received local keys and the parities.
\end{itemize}
{\it Parity evaluation phase} and 
 {\it Relayed key generation  phase} is explicitly identified by the following definitions:
\begin{itemize}
\item {\it All the parities each nodes evaluate:}
In each sub-graph $G_{s,i}^{\rm bn}$,
all the parities each nodes  $v^{(\alpha)}$ evaluate are
\begin{align}
v^{(0)}:&&
p[v^{(0)},1]&:=r[e^{(0)}]\oplus r[e^{(4)}]
\label{def:v_0_1}
\\
&&
p[v^{(0)},2]&:=r[e^{(4)}]\oplus r[e^{(6)}]
\\
v^{(1)}:&&
p[v^{(1)}]&:= r[e^{(8)}]\oplus r[e^{(10)}]\oplus r[e^{(1)}]
\\
v^{(2)}:&&
p[v^{(2)},1]&:=r[e^{(5)}]\oplus r[e^{(8)}]
\\
&&
p[v^{(2)},2]&:=r[e^{(2)}]\oplus r[e^{(5)}]
\\
v^{(3)}:&&
p[v^{(3)}]&:= r[e^{(6)}]\oplus r[e^{(9)}]\oplus r[e^{(3)}]
\\
v^{(4)}:&&
p[v^{(4)}]&:=r[e^{(4)}]\oplus r[e^{(5)}]
\oplus r[e^{(7)}]
\\
v^{(5)}:&&
p[v^{(5)},1]&:=r[e^{(7)}]\oplus r[e^{(10)}]
\\
&&
p[v^{(5)},2]&:=r[e^{(7)}]\oplus r[e^{(9)}].
\label{def:v_5_2}
\end{align}
In order to simplify the next expression, we introduce notations:
\begin{align}
p_{s,i}^{(0)}&:=
 p[v^{(0)},1]
\oplus p[v^{(4)}]
\oplus p[v^{(2)},1]
\label{def:p_1}
\nonumber\\
&\quad\quad
\oplus p[v^{(5)},1],
\oplus p[v^{(1)}]
\\
\label{def:p_2}
p_{s,i}^{(1)}&:=
          p[v^{(2)},2]
\oplus p[v^{(4)}]
\oplus p[v^{(0)},2]
\nonumber\\
&\quad\quad
\oplus p[v^{(5)},2]
\oplus p[v^{(3)}]
\end{align}

\item {\it The function which gives the relayed keys:}
The relayed keys generated by  $u_i^1$ and $u_i^2$ are
\begin{align}
k_i^1&:=r[e_{1,i+8}^{(2)}]
\label{def:k_1}
\\
k_i^2&:=
 p_{2,2i}^{(1)}
\oplus p_{1,i+8}^{(1)}
\oplus p_{1,i}^{(0)}
\oplus p_{2,2i+1}^{(0)}
\oplus r[e_{2,2i+1}^{(1)}]
\label{def:k_2}
\end{align}
for $i\in\mathbb Z/9\mathbb Z$.

\end{itemize}

From the definitions (\ref{def:v_0_1}),$\dots$,(\ref{def:p_2}),
\begin{align}
p_{s,i}^{(0)}&=r[e^{(0)}_{s,i}]\oplus r[e_{s,i}^{(1)}]
\label{eq:p_1}
\\ 
p_{s,i}^{(1)}&=r[e^{(2)}_{s,i}]\oplus r[e^{(3)}_{s,i}]
\label{eq:p_2}
\end{align}
is obtained 
for
each sub-graph $G_{s,i}^{\rm bn}$.
Here, we have used the fact that $r\oplus r=0$ for $r\in\{0,1\}$.
By using the relations (\ref{eq:p_1}) and (\ref{eq:p_2}), 
the relayed keys (\ref{def:k_1}) and (\ref{def:k_2}) are evaluated as
\begin{align}
k_i^1=k_i^2=r[e_{1,i+8}^{(2)}].
\end{align}
 This relation implies the soundness of  $L_{\rm KRP}$.
Since the generated relayed keys are part of the local keys, and the public information is linear combinations of the local keys,
the secrecy can be checked from the fact that the relayed keys are linearly independent of all the published  information.
\end{proof}


\section*{Proof of Lemma\ref{th:main_2}}

From the assumption (\ref{ass:th1_1})
\begin{align}
I(B,C_0;B,C_1)\geq I(A;B,C_1)\geq I(A;A)=H(A).
\end{align}
This relatoin and the assumption (\ref{ass:th1_2}),
guarantee that 
\begin{align}
I(B,C_0;B,C_1)&= I(A;B,C_1),
\\
H(A)&=H(B)
\label{eq:th1_0}
\end{align}
The first relation and the assumption (\ref{ass:th1_1}) imply 
\begin{align}
&P(A=a)P(A=a,B=b,C_0=c_0,C_1=c_1)
\nonumber\\
=& P(A=a,B=b,C_0=c_0)P(A=a,B=b,C_1=c_1)
\label{eq:th1_1}
\end{align}
for any $a,b,c_0,c_1$. Here we have used 
the fact that, 
if $I(Z;Y)= I(X;Y)$ for $Z:=f(X)$,  the relation $\forall x,y,z,\;
P(X=x,Y=y,Z=z)P(Z=z)=P(X=x,Z=z)P(Y=y,Z=z)$ holds.
By summing up with respect to $c_1$,
the relation becomes 
\begin{align}
&P(A=a,B=b,C_0=c_0)P(A=a)
\nonumber\\
=& P(A=a,B=b,C_0=c_0)P(A=a,B=b).
\label{eq:th1_2}
\end{align}
Since $A=f_0(B,C_0)$,
we can define functions $h_0$, and $h_1$ such that 
 $a=f_0(h_0(a),h_1(a))$ and $P(A=a,B=h_0(a),C_0=h_1(a))\neq 0$ hold, if $P(A=a)\neq 0$. Using these functions, by substitute $h_0(a)$ and $h_1(a)$ into $b$ and $c_0$ respectively in the above relation,
 \begin{align}
P(A=a)= P(A=a,B=h_0(a)).
\label{eq:th1_3}
\end{align}
is obtained.
This relation guarantees the following relation
\begin{align}
&\sum_{a,b} P(A=a,B=b)\delta(b,h_0(a))
\nonumber\\
&=
\sum_{a} P(A=a,B=h_0(a))
\nonumber\\
&=
\sum_{a} P(A=a)=1.
\end{align}
i.e. $h_0(A)=B$.
Therefore,
\begin{align}
I(A;B)=H(B)=H(A)
\end{align}
holds where the eq.(\ref{eq:th1_0}) is used in the second equality.
As a result, there is a function $h_2$ such that $h_2(B)=A$.
The last assumption, i.e. the number of candidates for $A$ is equal to that for $B$, and
the existence of the functions $h_0$ and $h_2$ guarantee the existence of the bijective function $g$ such that $g(A)=B$.

\section*{Acknowledgment}
G.K. was supported in part by JSPS Kakenhi (C) No. 20K03779 and 21K03388.
M.F. and T.T. were supported in part by ``ICT Priority Technology Research and Development Project'' (JPMI00316) of the Ministry of Internal Affairs and Communications, Japan.

\bibliographystyle{IEEEtran}
\bibliography{KR_SNC}

\begin{IEEEbiographynophoto}{Go Kato} was born in Japan, in 1976. He received the M.S. and Ph.D.  
degrees in science from The University of Tokyo in 2001 and 2004, respectively.
In 2004, he joined the NTT Communication Science Laboratories and has been engaged in the theoretical investigation of quantum information. He is especially interested in mathematical structures emerging in the field of quantum information. He is a member of the Physical Society of Japan.
\end{IEEEbiographynophoto}

\begin{IEEEbiographynophoto}{Mikio Fujiwara} received the B.S. and M.S. degrees in electrical engineering and the Ph.D. degree in physics from Nagoya University, Nagoya, Japan, in 1990, 1992, and 2002, respectively. He has been involved R\&D activities at NICT (previous name CRL,  Ministry of Posts and Telecommunications of Japan) since 1992.
\end{IEEEbiographynophoto}

\begin{IEEEbiographynophoto}{Toyohiro Tsurumaru} was born in Japan in 1973.
He received the B.S. degree from the Faculty of Science, University of Tokyo, Japan in 1996,
and M.S. and Ph.D. degrees in physics from the Graduate School of Science, University of Tokyo, Japan in 1998 and 2001, respectively.
Then he joined Mitsubishi Electric Corporation in 2001.
His research interests include theoretical aspects of quantum cryptography and of modern cryptography.
\end{IEEEbiographynophoto}

\end{document}